\pgfplotsset{compat=1.9}
\newcommand{\be}{\begin{equation}}
\newcommand{\ee}{\end{equation}}
\newcommand{\sgn}{{\rm sgn}}
\newcommand{\cG}{\mathcal{G}}
\newtheorem{remark}{Remark}
\newtheorem{proposition}{Proposition}
\newtheorem{definition}{Definition}
\newtheorem{lemma}{Lemma}
\newtheorem{theorem}{Theorem}
\theoremstyle{remark}
\begin{document}

\title{\bf Melonic large $N$ limit of $5$-index irreducible random tensors}

\author[1]{Sylvain Carrozza}
\author[2,3]{Sabine Harribey}

\affil[1]{\normalsize\it 
Institute for Mathematics, Astrophysics and Particle Physics, Radboud University
Heyendaalseweg 135, 6525 AJ Nijmegen, The Netherlands.
 \authorcr  email: s.carrozza@hef.ru.nl
 \authorcr \hfill}
 
 \affil[2]{\normalsize \it 
 CPHT, CNRS, Ecole Polytechnique, Institut Polytechnique de Paris, Route de Saclay, \authorcr 91128 PALAISEAU, 
 France
 \authorcr email:  sabine.harribey@polytechnique.edu
 \authorcr \hfill }

\affil[3]{\normalsize\it 
Heidelberg University, Institut f\"ur Theoretische Physik, Philosophenweg 19, 69120 Heidelberg, Germany
 \authorcr \hfill}

\date{}

\maketitle

\hrule\bigskip

\begin{abstract}
\noindent We demonstrate that random tensors transforming under rank-$5$ irreducible representations of $\mathrm{O}(N)$ can support melonic large $N$ expansions. Our construction is based on models with sextic ($5$-simplex) interaction, which generalize previously studied rank-$3$ models with quartic (tetrahedral) interaction \cite{Benedetti:2017qxl, Carrozza:2018ewt}. Beyond the irreducible character of the representations, our proof relies on recursive bounds derived from a detailed combinatorial analysis of the Feynman graphs. Our results provide further evidence that the melonic limit is a universal feature of irreducible tensor models in arbitrary rank. 
\end{abstract}

\hrule\bigskip

\tableofcontents

\section{Introduction}

In recent years, tensor models have been shown to admit a specific kind of large $N$ limit, known as the \textit{melonic limit} \cite{Bonzom:2011zz,RTM,Klebanov:2018fzb}. Its main appeal is that it is both richer than that of vector models \cite{Guida:1998bx,Moshe:2003xn} and simpler than the planar limit of matrix models \cite{'tHooft:1973jz,Brezin:1977sv,DiFrancesco:1993nw}. As a result, it has proven uniquely valuable as an analytic tool to explore strong coupling effects in many-body quantum physics. Tensor models have for instance found applications to strongly-coupled quantum mechanics \cite{Witten:2016iux,Gurau:2016lzk,Klebanov:2016xxf,Peng:2016mxj,Krishnan:2016bvg,Krishnan:2017lra,Bulycheva:2017ilt,Choudhury:2017tax,Halmagyi:2017leq,Klebanov:2018nfp,Carrozza:2018psc} (see also \cite{Delporte:2018iyf,Klebanov:2018fzb} for reviews), by providing alternatives to the SYK model which do not rely on a disorder average \cite{Sachdev:1992fk, Kitaev2015, Maldacena:2016hyu, Polchinski:2016xgd,Gross:2016kjj}. In higher dimensions, they can be investigated as proper quantum field theories, and have given rise to a new family of conformal field theories known as \textit{melonic CFTs} \cite{Giombi:2017dtl,Prakash:2017hwq,Benedetti:2017fmp,Giombi:2018qgp,Benedetti:2018ghn,Benedetti:2019eyl,Benedetti:2019ikb,Benedetti:2019rja,Lettera:2020uay} (see also \cite{Benedetti:2020seh,Gurau:2019qag} for reviews). 

For such applications, the key feature is the existence of a large $N$ expansion dominated by melon diagrams. When tensor models were first introduced, in zero dimension and in the context of random geometry and quantum gravity \cite{Ambjorn:1990ge,Sasakura:1990fs}, such a limit  was initially lacking. A proper generalization of the genus expansion of matrix models was only discovered later, in the context of so-called \emph{colored tensor models} \cite{Gurau:2009tw,Gurau:2010ba,Gurau:2011xq}. This program, motivated by random geometry in dimension $d \geq 3$, led to different realizations of the melonic limit, all dominated by the same type of tree-like Feynman graph structure (see e.g. \cite{Lionni:2017yvi, Bonzom:2018btd}). So-called \emph{uncolored tensor models} \cite{Bonzom:2012hw} have been particularly well-studied (see \cite{Rivasseau:2013uca, Rivasseau:2014ima,  Eichhorn:2018phj} for reviews), and triggered rigorous developments in the broader context of tensorial group field theory \cite{BenGeloun:2011rc, Samary:2012bw, Carrozza:2012uv, Carrozza:2013wda,Krajewski:2016svb,Rivasseau:2017xbk}. In such models, each index of the tensor transforms independently under its own symmetry group. Indices with different positions in the tensor are therefore not allowed to mix; they carry different \emph{colors}, which significantly constrains the combinatorial structure of the theory. The subclass of uncolored tensor models of interest for large $N$ quantum field theory applications are the ones that generate \emph{bilocal} melonic radiative corrections \cite{Dartois:2013he, Carrozza:2015adg, Ferrari:2017ryl, Ferrari:2017jgw, Benedetti:2020iyz,Prakash:2019zia}, rank-$3$ tensors transforming under the tri-fundamental representation of $\mathrm{O}(N)$ being a simple and popular choice \cite{Carrozza:2015adg}. 

More recently, and in rank $3$, it was understood how to generalize the bilocal melonic limit to ordinary tensor representations of $\mathrm{O}(N)$ and $\mathrm{Sp}(N)$ \cite{Klebanov:2017nlk, Gurau:2017qya, Benedetti:2017qxl, Carrozza:2018ewt, Carrozza:2018psc}, thereby going beyond colored and uncolored models. It might at first appear that completely symmetric rank-$3$ tensor models (such as the ones initially introduced in the nineties \cite{Ambjorn:1990ge,Sasakura:1990fs}) cannot support a bilocal melonic limit. Indeed, they instead support a vector-like (and ultralocal) large $N$ limit \cite{Benedetti:2017qxl}. However, removing the vector modes contained in the traces of the tensor is sufficient to reach a melonic regime, as was initially proposed in \cite{Klebanov:2017nlk} and proven in \cite{Benedetti:2017qxl}. In a similar way, one can conjecture that any \emph{irreducible} tensor representation (of $\mathrm{O}(N)$ or $\mathrm{Sp}(N)$) can support a melonic large $N$ limit, as was proven rigorously in rank-$3$ \cite{Benedetti:2017qxl, Carrozza:2018ewt, Carrozza:2018psc}. Steps have also been taken to extend those results to Hermitian multi-matrix models \cite{Carrozza:2020eaz}, in the spirit of \cite{Ferrari:2017ryl}. 

In the present work, we make an additional contribution to this program, by confirming that $\mathrm{O}(N)$ irreducible tensor representations of rank $5$ also support (bilocal) melonic limits. To this effect, we follow the combinatorial methods developed in \cite{Benedetti:2017qxl}, with suitable adaptations. As in rank $3$, technical difficulties arise from the existence of Feynman graphs which violate the maximum scaling naively allowed by the large $N$ limit. The irreduciblity condition implies that such contributions necessarily cancel out upon resummation, and are ultimately harmless. However, their mere existence greatly complicates the (standard) recursive strategy we use to bound the other Feynman amplitudes. Some aspects of the combinatorial construction we end up with are more elementary than in rank-$3$ (e.g. triangle subgraphs play no role here), but others are much more involved (e.g. in addition to two- and four-point functions, the present analysis involves bounds on eight-point functions).

\subsection{Outline of the results}

We start off with a real $\mathrm{O}(N)$ tensor transforming in one of the seven inequivalent irreducible representations of rank $5$. These consist of the symmetric traceless representation, the antisymmetric representation, and five additional representations with mixed index symmetries. The orthogonal projector on one of these irreducible subspaces defines a (degenerate) covariance $\pmb P$. The resulting Gaussian distribution is then perturbed by a sextic interaction with the combinatorics of a $5$-simplex. 

Our main results are stated in Theorem~\ref{theorem_princ}, establishing the existence of a large $N$ expansion, and Theorem~\ref{theorem_LO}, stating that the leading order graphs are melons. As a guide to the reader, we now outline the main steps of the proof. 

\paragraph{Perturbative expansion.} We first expand the free energy into Feynman amplitudes which, here, are labeled by rooted connected combinatorial maps. We then go to a more detailed representation in terms of stranded graphs. Indeed, each half-edge in a map carries five indices and can therefore be represented with five strands. In turn, each term in the propagator has a specific tensorial structure inducing a particular pairing of the strands propagating along the edge. We can then re-write the perturbative expansion in terms of those stranded graphs. Given that each propagator edge can take up to $945$ configurations, this representation is highly uneconomical. However, the merit of stranded graphs is that their large $N$ asymptotics is transparently encoded into their combinatorial structure. This leads to the following estimate of the amplitude of a stranded graph $G$:
\begin{equation}
\mathcal{A}(G) = K(G) N^{-\omega(G)} \left( 1 + \mathcal{O} (1/N) \right)\,, 
\end{equation}
where $K(G)$ is a non-vanishing rational number independent from $N$, and $\omega$ is the \emph{degree} of the graph (see Eq.~\eqref{eq:degree}).

The degree $\omega$ is an integer quantity. If we could prove it to be bounded from below, the existence of a large $N$ expansion would immediately follow. However, this conjecture is not true in general: stranded graphs with arbitrarily negative degrees do exist. We will therefore need to rely on a subtler strategy: for any map, we will prove that none of its stranded configurations with negative degree (if they exist) actually contribute to the full amplitude.

\paragraph{Subtracting and deleting.} The proof of our main results then proceeds in two steps.
\begin{itemize}
\item We first identify a family of stranded graphs supporting arbitrarily negative degrees. Those problematic graphs happen to be generated by melon and double-tadpole maps. Thanks to the irreducibility of the representation, we can straightforwardly prove that the amplitudes of those maps are in fact well-behaved at large $N$. For convenience, we will subtract them through a partial resummation of the perturbative series, governed by a closed and algebraic Schwinger-Dyson equation. 
\item We then prove that, once the problematic configurations have been subtracted, all the remaining stranded graphs have non-negative degree. This is done by induction on the number of vertices of the graphs, through suitable combinatorial deletions of subgraphs. The important condition that no melon or double-tadpole should be generated by such moves makes the construction rather delicate and technical. 
\end{itemize}

\paragraph{Leading-order.} After proving the existence of the large $N$ expansion, the last step consists in showing that it is dominated by melon diagrams. At this stage, one might be tempted to prove the following improved statement: any stranded graph with no melon and no double-tadpole has in fact strictly positive degree. Again, this is not so simple, as stranded graphs with no melon or tadpole can have vanishing degrees. However, there are again cancellations, and it turns out that none of those configurations can actually contribute to the full amplitudes of their parent maps. We will implicitly account for such cancellations by mean of Cauchy-Schwarz inequalities which, once the large $N$ expansion has been established, can be used to directly bound the full amplitudes of non-melonic Feynman maps (without having to resort to the stranded representation). We will conclude that a Feynman map is leading order if and only if it is melonic.

\paragraph{Plan of the paper} In section \ref{sec:model}, we introduce the models and our main results: Theorem~\ref{theorem_princ} and Theorem~\ref{theorem_LO}. In section \ref{sec:expansion}, we perform the perturbative expansion and define two different types of diagrams: Feynman maps and stranded graphs. We also introduce in more detail the problematic subgraphs that could potentially lead to violations of the maximum scaling in $N$. In section \ref{sec:tmd}, we introduce the important notion of \textit{boundary graph}, as well as various particular subgraphs that will play important roles in the rest of the proof. In section \ref{sec:subtraction}, we perform the explicit subtraction of melons and double-tadpoles. We then arrive to an equivalent theory with renormalized covariance in which melons and double-tadpoles have been subtracted from the Feynman expansion. In section \ref{sec:deletions}, we establish a number of lemmas and propositions enabling the recursive deletion of certain subgraphs, that will be instrumental to ultimately prove Theorem~\ref{theorem_princ}. Finally, in section \ref{sec:LO}, we prove Theorem~\ref{theorem_LO} and show that melons dominate the large $N$ expansion. 
In Appendix~\ref{ap:bounds}, we prove some useful bounds on the number of faces of stranded graphs, while Appendix~\ref{app:particular} provides a proof of Lemma~\ref{lemma:particular_cases} (which handles a number of particular cases). Finally, for the reader's convenience, some of our main definitions and notations are summarized in the following nomenclature.

\printnomenclature
\addcontentsline{toc}{section}{Nomenclature}

\section{The models and the main results}
\label{sec:model}

We consider a real tensor $T_{abcde}$, transforming in the tensor product of five fundamental representations of the orthogonal group $O(N)$ (hence, $a,b,c,d,e=1,\dots,N$).
The action of the symmetric group $\mathcal{S}_5$ and the trace operation allows to decompose $T_{abcde}$ into irreducible components, which are themselves tensors of rank $5$, $3$ or $1$.  
In this work, we will focus on the seven inequivalent representations of rank $5$. They are the traceless representations with index symmetry given by the following Young tableaux:
$$
\begin{Young}  \cr  \cr  \cr \cr  \cr  \end{Young}\quad \begin{Young}  & & & & \cr \end{Young}\quad 
\begin{Young}  & & &  \cr \cr \end{Young}\quad 
\begin{Young}  & &  \cr \cr \cr \end{Young} \quad \begin{Young}  &   \cr \cr \cr \cr\end{Young}\quad \begin{Young}  & &  \cr & \cr \end{Young}\quad 
\begin{Young}  &   \cr & \cr \cr \end{Young}
$$
The first two correspond to the antisymmetric and symmetric traceless representations, respectively, while the other five have mixed index permutation symmetries (that is, they carry representations of $\mathcal{S}_5$ of dimension higher than $2$). Given such a representation, a central object in our construction will be the orthogonal projector on the associated linear subspace of tensors, with respect to the canonical scalar product: $\langle T \vert T' \rangle := T_{abcde} T'_{abcde}$.\footnote{We assume Einstein's summation convention throughout this work.} The kernel of this projector will serve as a degenerate covariance, it is therefore crucial for it to be symmetric. As an illustration, let us find the orthogonal projectors on completely symmetric traceless tensors and completely antisymmetric tensors.

\begin{description}
 \item[\bf Symmetric traceless representation.]
 
Let us start from a completely symmetric tensor $T_{abcde}$. We can decompose $T$ into a traceless part $T^0_{abcde}$, a symmetric traceless tensor of rank tree $T^1_{abc}$, and a vector $T^2_a$:\footnote{Here we use the notation:
\begin{equation*}
V_{(a_1\dots a_n)}=\frac{1}{n!}\sum_{p \in \mathcal{S}_n}V_{a_{p(1)}\dots a_{p(n)}}
\end{equation*}
for $V$ a tensor with $n$ indices. 

In particular, we have 
\begin{equation*}
T^{1}_{(cde}\delta_{ab)}=\frac{1}{10}\left(T^1_{cde}\delta_{ab}+T^1_{bde}\delta_{ac}+T^1_{bce}\delta_{ad}+T^1_{bcd}\delta_{ae}+T^1_{ade}\delta_{bc}
 +T^1_{ace}\delta_{bd}+T^1_{acd}\delta_{be}+T^1_{abe}\delta_{cd}+T^1_{abd}\delta_{ce}+T^{1}_{abc}\delta_{de}\\\right).
\end{equation*}
}
\begin{align*}
T_{abcde}=&T^0_{abcde}+10T^1_{(cde}\delta_{ab)}
+ 30T^2_{(a}\delta_{bc}\delta_{de)}\,.
\end{align*} 
By taking successive traces over pairs of indices, we obtain 
\begin{equation*}
T_{abcdd}=(N+6)T^1_{abc}+2(N+4)(T^2_{a}\delta_{bc}+T^2_{b}\delta_{ac}+T^2_{c}\delta_{ab})
\end{equation*}
and
\begin{equation*}
T_{abbdd}=2(N+2)(N+4)T^2_{a}\,,
\end{equation*}
which, combined, leads to the following expressions for $T^1$ and $T^2$:
\begin{align*}
T^1_{abc}&=\frac{1}{N+6}\left(T_{abcdd}-\frac{1}{(N+2)}\left(T_{addee}\delta_{bc}+T_{bddee}\delta_{ac}+T_{cddee}\delta_{ab}\right)\right) \,, \\ 
T^2_a &= \frac{1}{2(N+2)(N+4)} T_{abbdd}\,.
\end{align*}
This allows us to define a projector on symmetric traceless tensors as the projector on symmetric tensors minus the projector on traces. We find:
\begin{align}
\pmb S_{\pmb a,\pmb b}=\frac{1}{5!}&\left[\sum_{\sigma \in \mathcal{S}_5}\prod_{i=1}^5\delta_{a_ib_{\sigma(j)}}-\frac{2}{N+6}\sum_{\substack{\{i_1,i_2,i_3\} \cup\{i_4,i_5\} \\ = \llbracket 1,5 \rrbracket}}\sum_{\substack{\{j_1,j_2,j_3\}\cup\{j_4,j_5\}\\ = \llbracket 1,5 \rrbracket}}\delta_{a_{i_4}a_{i_5}}\delta_{b_{j_4}b_{j_5}}\sum_{\sigma \in \mathcal{S}_3}\prod_{k=1}^3\delta_{a_{i_k}b_{j_{\sigma(k)}}}\right. \crcr
&\left.+\frac{2}{(N+4)(N+6)}\sum_{\substack{\{i_1\}\cup\{i_2,i_3\}\cup\{i_4,i_5\} \\ =\llbracket 1,5 \rrbracket}}\sum_{\substack{\{j_1\}\cup\{j_2,j_3\}\cup\{j_4,j_5\} \\= \llbracket 1,5 \rrbracket}}\delta_{a_{i_1}b_{j_1}}\delta_{a_{i_2}a_{i_3}}\delta_{a_{i_4}a_{i_5}}\delta_{b_{j_2}b_{j_3}}\delta_{b_{j_4}b_{j_5}}\right] \,,
\label{sym}
\end{align}
where we use the short-hand notation $\pmb a=(a_1,a_2,a_3,a_4,a_5)$, $\pmb b=(b_1,b_2,b_3,b_4,b_5)$ (and so on).

Moreover, one can readily check that $\pmb S_{\pmb a,\pmb b} = \pmb S_{\pmb a,\pmb b}$, so that $\pmb S$ is the looked-for orthogonal projector.

\item[\bf Antisymmetric representation.]

The orthogonal projector on completely antisymmetric tensors takes the form:
\begin{equation}\label{antisym}
\pmb A_{\pmb a,\pmb b}=\frac{1}{5!}\sum_{\sigma \in \mathcal{S}_5}\epsilon(\sigma)\prod_{i=1}^5\delta_{a_ib_{\sigma(j)}} 
\end{equation}
\end{description}

A covariance for the other five inequivalent irreducible representations can be obtained in a similar fashion. As it turns out, the explicit form of this projector is not necessary for our proofs to go through, so we only provide a brief sketch of the general construction. As a first step, one can construct the canonical projector associated to the target Young tableau, by first symmetrizing over indices appearing in a same row, then antisymmetrizing over indices appearing in a same column. After projecting out the trace components, one obtains a projector with the desired image. However, in contrast to what happened with the completely symmetric and symmetric traceless representations, this first projector will not in general be orthogonal. If so, one needs to orthogonalize it as a last step in the construction.  

\

The generic tensor model with $5$-simplex (or complete graph) interaction is defined by the action:
\begin{align}
S(T)=&\frac{1}{2}\sum_{a_1, \, \ldots, \, a_5}T_{a_1a_2a_3a_4a_5}T_{a_1a_2a_3a_4a_5}\crcr
& -\frac{\lambda}{6N^5}\sum_{a_1, \, \ldots, \, a_{15}}T_{a_1a_2a_3a_4a_5}T_{a_5a_6a_7a_8a_9}T_{a_{9}a_{4}a_{10}a_{11}a_{12}}T_{a_{12}a_{8}a_{3}a_{13}a_{14}}T_{a_{14}a_{11}a_{7}a_{2}a_{15}}T_{a_{15}a_{13}a_{10}a_{6}a_{1}}
\end{align}
We will denote the $5$-simplex pattern of contraction by
\begin{equation}\label{eq:5-simplex}
\delta^h_{\pmb a\pmb b\pmb c\pmb d\pmb e\pmb f}=\delta_{a_1f_5}\delta_{a_2e_4}\delta_{a_3d_3}\delta_{a_4c_2}\delta_{a_5b_1}\delta_{b_2f_4}\delta_{b_3e_3}\delta_{b_4d_2}\delta_{b_5c_1}\delta_{c_3f_3}\delta_{c_4e_2}\delta_{c_5d_1}\delta_{d_4f_2}\delta_{d_5e_1}\delta_{e_5f_1}\,,
\end{equation}
such that the action can be simplified to:
\begin{align}
S(T)=&\frac{1}{2}T_{\pmb a}T_{\pmb a} -\frac{\lambda}{6N^5}\delta^h_{\pmb a\pmb b\pmb c\pmb d\pmb e\pmb f}T_{\pmb a}T_{\pmb b}T_{\pmb c}T_{\pmb d}T_{\pmb e}T_{\pmb f}\,.
\end{align}

We also denote by $\pmb 1$ the identity operator $\pmb 1_{\pmb a, \pmb b}=\pmb 1_{a_1a_2a_3a_4a_5,b_1b_2b_3b_4b_5}=\prod_{i=1}^5\delta_{a_ib_i}=\delta_{\pmb a\pmb b}$, and by $\partial_T$ the tensor of derivative operators $(\partial_T)_{\pmb a}\equiv \frac{\partial}{\partial T_{\pmb a}}$.
With these notations at hand, we can write the partition function, the free energy and its first derivative as:
\begin{align}
Z_{\pmb 1}(\lambda)&=\int [dT]e^{-S(T)}=\left[e^{\frac{1}{2}\partial_T\pmb 1 \partial_T}e^{\frac{\lambda}{6N^5}\delta^h_{\pmb a\pmb b\pmb c\pmb d\pmb e\pmb f}T_{\pmb a}T_{\pmb b}T_{\pmb c}T_{\pmb d}T_{\pmb e}T_{\pmb f}}\right]_{T=0} ~,~\crcr
\ln Z_{\pmb 1}(\lambda)&=\ln \lbrace \int [dT]e^{-S(T)} \rbrace , ~~F_{\pmb 1}(\lambda)=\frac{6}{N^{5}}\lambda\partial_{\lambda}\ln Z_{\pmb 1}(\lambda) \,.
\end{align}
In the following, $\pmb P$ will denote any one of the seven orthogonal projectors on irreducible rank-$5$ tensor representations. We will sometimes illustrate our calculations with $\pmb P =\pmb A$ or $\pmb S$, but our main results hold for any irreducible representation. The irreducible tensor model of interest can be obtained from the generic model by disallowing the propagation of modes which are in the kernel of $\pmb P$. This in turn amounts to replacing the non-degenerate covariance $\pmb 1$ by the degenerate covariance $\pmb P$:
\begin{equation}
F_{\pmb P}(\lambda)=\frac{6}{N^{5}}\lambda\partial_{\lambda}\ln \left\lbrace \left[e^{\frac{1}{2}\partial_T \pmb P \partial_T}e^{\frac{\lambda}{6N^5}\delta^h_{\pmb a\pmb b\pmb c\pmb d\pmb e\pmb f}T_{\pmb a}T_{\pmb b}T_{\pmb c}T_{\pmb d}T_{\pmb e}T_{\pmb f}}\right]_{T=0}\right\rbrace \; . 
\label{eq:model1}
\end{equation}
Note that, in this equation, the tensor $T$ has no symmetry property under permutation of its indices. However, as only the projected modes $\pmb P T$ propagate, we can equivalently change variables to $P=\pmb P T$ as done in \cite{Benedetti:2017qxl}. We can then write: 
\begin{align}
F_{\pmb P}(\lambda)&=\frac{6}{N^{5}}\lambda\partial_{\lambda}\ln \left\lbrace \left[e^{\frac{1}{2}\partial_P \pmb P \partial_P}e^{\frac{\lambda}{6N^5}\delta^h_{\pmb a\pmb b\pmb c\pmb d\pmb e\pmb f}P_{\pmb a}P_{\pmb b}P_{\pmb c}P_{\pmb d}P_{\pmb e}P_{\pmb f}}\right]_{P=0}\right\rbrace \; , \crcr
\frac{\partial}{\partial P_{\pmb a}}P_{\pmb b} & \equiv \pmb P_{\pmb a, \pmb b} \; ,
\label{eq:model}
\end{align}
where the tensor $P$ is in the image of $\pmb P$ and thereby irreducible, and the second line is a definition. The factor $6/N^5$ is for later convenience; it will in particular ensure that $F_{{\pmb P}}$ is an order $1$ quantity in the large $N$ limit.  

\subsection{Main theorems}

The main result of this paper is the existence of a $1/N$ expansion for all seven irreducible rank-$5$ tensor models with complete graph interaction. It is given by the following theorem.

\begin{theorem}
We have (in the sense of perturbation series):
\begin{equation}\label{eq:largeN}
F_{\pmb P}(\lambda)=\sum_{\omega \in \mathbb{N}} N^{-\omega}F_{\pmb P}^{(\omega)}(\lambda)\,.
\end{equation}
\label{theorem_princ}
\end{theorem}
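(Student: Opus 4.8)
The plan is to establish the $1/N$ expansion by showing that, after subtracting the problematic classes of maps, every surviving stranded graph has non-negative degree $\omega$. The starting point is the perturbative expansion of $F_{\pmb P}(\lambda)$ into Feynman amplitudes labeled by rooted connected combinatorial maps, and then a refinement into stranded graphs obtained by resolving each half-edge into five strands and each propagator insertion into one of its $945$ tensorial channels (dictated by the explicit structure of $\pmb P$). This gives the amplitude estimate $\mathcal{A}(G) = K(G) N^{-\omega(G)}(1+\mathcal{O}(1/N))$ with $K(G) \neq 0$, so what must be controlled is a lower bound on $\omega(G)$ summed over the stranded configurations of a fixed map.

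The first main step is to isolate the stranded graphs of arbitrarily negative degree. As asserted in the outline, these are precisely the ones generated by melon and double-tadpole maps. Using the irreducibility of $\pmb P$ — concretely, that $\pmb P$ projects out all trace modes and is an exact orthogonal projector — one shows directly that the melon and double-tadpole \emph{amplitudes} are $O(1)$ (or better) at large $N$, despite the bad individual stranded terms, because the dangerous contributions cancel in the sum over channels. I would then repackage this cancellation as a partial resummation: define a renormalized (effective) covariance $\widehat{\pmb P}$ through a closed algebraic Schwinger-Dyson equation that resums exactly the melon and double-tadpole insertions, so that in the resulting equivalent theory no Feynman map contains a melon or double-tadpole subgraph.

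The second and decisive step is to prove that in this subtracted theory every stranded graph $G$ has $\omega(G) \geq 0$. This is an induction on the number of vertices. Given $G$, one locates a subgraph of a controlled type (a small face-rich piece — e.g. a four-point or, where forced, an eight-point configuration) whose deletion reduces the vertex count while changing $\omega$ in a controlled, non-increasing way, \emph{and} — this is the crux — without ever creating a melon or double-tadpole in the reduced graph, since otherwise the inductive hypothesis would not apply to a graph in the subtracted theory. The combinatorial bounds on the number of faces needed to track $\omega$ under these deletions are supplied by Appendix~\ref{ap:bounds}, and the finitely many configurations where the generic deletion is unavailable are dispatched by Lemma~\ref{lemma:particular_cases}. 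Since the maps contributing to each coefficient $F_{\pmb P}^{(\omega)}(\lambda)$ at fixed order are finite in number, once $\omega(G) \geq 0$ holds uniformly, collecting amplitudes by their power of $N$ yields the claimed expansion \eqref{eq:largeN}.

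The hard part will be the inductive deletion argument: one must simultaneously (i) guarantee that a deletable subgraph of the right kind always exists, (ii) show the deletion does not increase the degree, and (iii) preserve the "no melon, no double-tadpole" property of the subtracted theory throughout the recursion. Requirements (i) and (iii) are in tension — the most natural subgraphs to delete are exactly the ones whose removal tends to manufacture a melon or tadpole — and resolving this tension is what forces the more elaborate combinatorics (including the eight-point bounds) alluded to in the introduction, together with the case analysis relegated to Appendix~\ref{app:particular}.
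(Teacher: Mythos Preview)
Your proposal is correct and follows essentially the same route as the paper: the proof there is assembled from the subtracted expansion of Eq.~\eqref{eq:model_subtracted} (your Schwinger--Dyson resummation of melons and double-tadpoles), Remark~\ref{rem:broken_unbroken} (reducing every stranded graph to one with only unbroken edges without increasing the degree --- a small but necessary step you do not name explicitly), and Proposition~\ref{prop:positive_degree} (the inductive deletion argument establishing $\omega\ge 0$ for graphs with no melon or double-tadpole). One small sharpening: individual melon stranded configurations are not themselves dangerous --- the arbitrarily negative degrees come from chains of bad double-tadpoles, and melons must be resummed alongside them only because dressing tadpole edges by leading-order melons regenerates the same pathology.
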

\begin{proof}
This follows from Eq.~\eqref{eq:model_subtracted}, Remark~\ref{rem:broken_unbroken} and Proposition~\ref{prop:positive_degree}. 
\end{proof}

In section \ref{sec:LO}, we further prove that these models are dominated by melon diagrams (which we introduce in section \ref{sec:tmd}). This is given by the following theorem.

\begin{theorem}
In equation \eqref{eq:largeN}, the leading order contribution $F_{\pmb P}^{(0)}(\lambda)$ is a sum over melonic stranded graphs. For small enough $\lambda$, it is the unique continuous solution of the polynomial equation
\begin{equation}
1- X + m_{\pmb P} \lambda^2 X^6  = 0
\end{equation} 
such that $F_{\pmb P}^{(0)}(0)=1$, and where $m_{\pmb P}$ is a model-specific real constant. In particular, $m_{\pmb S} = m_{\pmb A} = \left( \frac{1}{5!} \right)^4$.
\label{theorem_LO}
\end{theorem}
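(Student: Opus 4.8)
The starting point is Theorem~\ref{theorem_princ}, which gives the existence of the $1/N$ expansion, so the question is now purely about identifying the order-zero coefficient $F_{\pmb P}^{(0)}$. The plan is to combine two ingredients that the earlier sections set up: (i) the recursive deletion machinery of section~\ref{sec:deletions}, refined to show that any stranded graph which is \emph{not} melonic (and which survives the melon/double-tadpole subtraction of section~\ref{sec:subtraction}) has \emph{strictly positive} degree, and (ii) a set of Cauchy--Schwarz inequalities, applied directly at the level of Feynman maps, to dispose of the ``resonant'' cases — the non-melonic stranded configurations whose degree happens to vanish. As flagged in the outline, the naive hope that no melon/double-tadpole graph can have zero degree is false, so step (ii) is genuinely needed: rather than exhibiting a bound on each individual stranded amplitude, one bounds the \emph{full} amplitude of a non-melonic map $\mathcal{M}$ by splitting the map along a well-chosen two- or four-point cut and applying $|\mathcal{A}(\mathcal{M})|^2 \le \mathcal{A}(\mathcal{M}_1)\mathcal{A}(\mathcal{M}_2)$ with $\mathcal{M}_1,\mathcal{M}_2$ of strictly lower order (or genuinely subleading), so that the contribution of $\mathcal{M}$ to $F_{\pmb P}^{(0)}$ vanishes. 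The conclusion of this part is the ``only if'' direction: a leading-order map is necessarily melonic. The ``if'' direction — that every melonic map does contribute at order $N^0$ — follows because the dominant stranded configuration of a melonic map has degree exactly $0$ (this is essentially a direct computation on the melonic structure, using Eq.~\eqref{eq:degree} and the fact that the $5$-simplex vertex is the complete graph on $6$ vertices).

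Once leading order $=$ melonic is established, the second half is to resum the melons. The plan is to write the closed Schwinger--Dyson equation for the leading two-point function: a melonic two-point graph is either trivial or is obtained by inserting, into each of the $\binom{6}{2}=15$? — more precisely into the appropriate slots of a single $5$-simplex vertex — full melonic two-point functions, which on the relevant strands reproduces the melonic self-energy. Denoting by $X$ the (suitably normalized) full melonic two-point amplitude, the combinatorics of the $5$-simplex interaction with a sextic vertex gives a self-energy proportional to $X^{5}$, hence $X = 1 + (\text{const})\,\lambda^2 X^{6}$ after accounting for the two vertices joined in the elementary melon; rearranging yields $1 - X + m_{\pmb P}\lambda^2 X^6 = 0$. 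Here $m_{\pmb P}$ collects the single combinatorial/tensorial factor produced when the melonic strands are routed through the two $5$-simplex contraction patterns $\delta^h$ and closed against the projector $\pmb P$; the key point is that, because all other index loops are ``free'' (each contributing a factor $N$ that is exactly compensated by the $1/N^5$ in the coupling and the degree-zero condition), only one overall scalar survives, and it depends on $\pmb P$ only through how $\pmb P$ sits inside the elementary melon. Analyticity/uniqueness of the branch with $F_{\pmb P}^{(0)}(0)=1$ for small $\lambda$ is then the standard implicit-function-theorem argument applied to $1 - X + m_{\pmb P}\lambda^2 X^6$, whose derivative in $X$ at $(\lambda,X)=(0,1)$ is $-1 \neq 0$; and $F_{\pmb P}^{(0)}$, being a generating series of nonnegative melonic counts, has a finite radius of convergence, matching the unique real branch.

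For the explicit value $m_{\pmb S}=m_{\pmb A}=(1/5!)^4$, the plan is a direct evaluation. For $\pmb P = \pmb S$ or $\pmb A$ the projector is a sum over $\mathcal{S}_5$ (with trivial or sign character) of products of Kronecker deltas, cf.\ Eq.~\eqref{sym} and Eq.~\eqref{antisym}; only the leading term $\frac{1}{5!}\sum_\sigma (\pm)\prod_i \delta_{a_i b_{\sigma(i)}}$ contributes at large $N$ (the trace-subtraction terms are suppressed by explicit $1/N$), so one replaces $\pmb P$ by that permutation sum in the elementary melon. Threading these four propagators (the elementary melon on a sextic vertex has — I would count carefully — four internal edges forming the ``kernel'' loop structure between the two $5$-simplex vertices) through the two $\delta^h$ contraction patterns of Eq.~\eqref{eq:5-simplex}, one finds that the permutation sums collapse against the rigid $5$-simplex gluing to a single surviving permutation per edge, each costing a factor $1/5!$, and all resulting index loops are free; hence $m_{\pmb S} = m_{\pmb A} = (1/5!)^{4}$, the sign character in $\pmb A$ squaring away since each permutation appears an even number of times.

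\medskip

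\textbf{Main obstacle.} The hard part is step (ii): controlling the non-melonic stranded graphs of vanishing degree. Proving strictly positive degree for \emph{generic} non-melonic survivors via the deletion recursion is a refinement of the already-delicate argument behind Theorem~\ref{theorem_princ}, and the residual zero-degree cases cannot be handled at the stranded level at all — one must find, for \emph{every} such map, a two- or four-point cut along which the Cauchy--Schwarz split produces two pieces that are each provably of lower order, \emph{without} regenerating a melon or double-tadpole in either piece. Guaranteeing the existence of such a cut uniformly, and checking that the bookkeeping of orders is consistent (in particular that equality in Cauchy--Schwarz would force the map to be melonic, contradicting the hypothesis), is where essentially all the difficulty lies; by comparison, the resummation into $1 - X + m_{\pmb P}\lambda^2 X^6$ and the identification of $m_{\pmb P}$ are routine once ``leading order $\Leftrightarrow$ melonic'' is in hand.
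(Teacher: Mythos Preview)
Your plan is essentially the paper's own route: prove ``leading order $\Leftrightarrow$ melonic'' by Cauchy--Schwarz bounds on full map amplitudes (not by sharpening the stranded degree bounds), then read off the sextic equation from the melon Schwinger--Dyson relation already set up in section~\ref{sec:subtraction}. Two points need fixing.

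\medskip

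\textbf{The $m_{\pmb P}$ count.} A melon is a two-point map on two sextic vertices, so it has \emph{five} internal edges, not four. At leading order each unbroken edge of $\pmb S$ or $\pmb A$ carries weight $1/5!$, and for a fixed external unbroken structure there is a unique leading stranded configuration; hence a single melon map contributes $(1/5!)^5$ to $m_{\pmb P}$. The value $(1/5!)^4$ then comes from summing over the $5!$ inequivalent melon \emph{maps}: $5!\cdot(1/5!)^5=(1/5!)^4$. Your ``four propagators'' arrives at the right number by coincidence, with the wrong mechanism.

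\medskip

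\textbf{The Cauchy--Schwarz step.} You describe this as finding, for every residual zero-degree map, some good two- or four-point cut. The paper is more structured and avoids any case-by-case search. It proves: (a) by Schur's lemma plus Theorem~\ref{theorem_princ}, every two-point map amplitude equals a uniformly bounded scalar times $\pmb P$; (b) any map containing a generalized tadpole or a type-$I$ dipole is subleading, via a single Cauchy--Schwarz split against a fixed small reference graph (the closed two-vertex graph $H_0$ of Lemma~\ref{lemma:particular_cases}, already known to have degree $\geq 1$); (c) for the remaining type-$II$ dipoles, the canonical two-point subgraphs $C_S$ form an inclusion forest, so a minimal one contains a proper two-point insertion with no short face at all, which is then strictly subleading by (a). Steps (a)--(c), plus recursive melon removal, give the ``only if'' direction directly. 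This is cleaner than what your ``main obstacle'' paragraph suggests, and in particular does not require analysing equality cases in Cauchy--Schwarz.
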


\begin{proof}
This follows from Proposition~\ref{propo:LO}, as well as Eqs.~\eqref{eq:m} and \eqref{eq:sde} in section~\ref{sec:subtraction}.
\end{proof}

\medskip

For completeness, we note that we could consider other $5$-simplex interactions, that differ from our choice in Eq.~\eqref{eq:5-simplex} by a permutation of the strands on each half-edge. Namely, in general, we could introduce the modified kernel:
\begin{equation}\label{eq:modified_5-simplex}
\tilde{\delta}^{h}_{\pmb a\pmb b\pmb c\pmb d\pmb e\pmb f} = \delta^{h}_{(\sigma_1 \cdot \pmb a ) (\sigma_2 \cdot \pmb b) (\sigma_3 \cdot \pmb c) (\sigma_4 \cdot \pmb d) (\sigma_5 \cdot \pmb e)   (\sigma_6 \cdot\pmb f) }
\end{equation}
where $\{ \sigma_k \}$ are permutations in $\mathcal{S}_5$, and $\cdot$ denotes the natural action of $\mathcal{S}_5$ on a $5$-tuple. If $\pmb P = \pmb S$ or $\pmb A$, any two such choices differ at most by a sign, and are therefore equivalent. A priori, this is not necessarily so for other irreducible representations, since permuting two indices which are neither in a same column nor in a same row of the Young tableau involves non-trivial linear combinations of tensors. We leave the evaluation of the dimension of the space of $5$-simplex invariants for each irreducible representation to future work. However, we note that our main theorems remain valid for any such interaction, and in fact any linear combination thereof. The only reason we decide to focus exclusively on the kernel of Eq.~\eqref{eq:5-simplex} is to keep the combinatorial structure of the Feynman diagrams (see the next section) as elementary as possible. Indeed, the modified $5$-simplex of Eq.~\eqref{eq:modified_5-simplex} is in general not symmetric under cyclic permutation of its half-edges, and would therefore require the introduction of vertices with marked half-edges in the Feynman rules. A particularly interesting example of such a non-cyclic kernel is 
\begin{align}\label{eq:colored_5-simplex}
\tilde{\delta}^{h}_{\pmb a\pmb b\pmb c\pmb d\pmb e\pmb f} &= \delta^{h}_{\pmb a  (\sigma \cdot \pmb b) (\sigma^2 \cdot \pmb c) (\sigma^3 \cdot \pmb d) (\sigma^4 \cdot \pmb e)   (\sigma^5 \cdot\pmb f) } \\
&= \delta_{a_1f_1}\delta_{a_2e_2}\delta_{a_3 d_3}\delta_{a_4c_4}\delta_{a_5b_5}\delta_{b_3f_3}\delta_{b_4e_4}\delta_{b_2d_2}\delta_{b_1c_1}\delta_{c_2f_2}\delta_{c_3e_3}\delta_{c_5d_5}\delta_{d_4f_4}\delta_{d_1e_1}\delta_{e_5f_5} \,, \nonumber
\end{align}
where $\sigma = (15)(234)$. A specificity of this pattern of contractions is that every tensor index in position $k$ is contracted with another tensor index in position $k$, and is known as a \emph{colorable} interaction in the random tensor literature. Up to a permutation of the half-edges and of a global permutation of the tensor indices, the kernel of Eq.~\eqref{eq:colored_5-simplex} is in fact the unique colorable $5$-simplex interaction \cite{Ferrari:2017jgw}. With this choice of interaction, it is actually possible to prove a slightly improved version of Theorem~\ref{theorem_princ}, guaranteeing that $m_{\pmb P} > 0$ for \emph{any} irreducible representation $\pmb P$.\footnote{This stems from the fact that we can prove a slightly improved version of Lemma~\ref{lemma:melon_tadpole}, see Remark~\ref{rem:improved_lemma}.} In particular, this observation implies that the colorable interaction \eqref{eq:colored_5-simplex} is non-vanishing for \emph{any} irreducible representation, and therefore, that our results have non-trivial implications for any choice of irreducible propogator.\footnote{From our current understanding, we cannot guarantee that $m_{\pmb P}$ is necessarily non-vanishing with the cyclic vertex \eqref{eq:5-simplex} (unless $\pmb P = \pmb S$ or $\pmb A$). In particular, we cannot exclude the possibility that this vertex might identically vanish for some specific choice of representation $\pmb P$.} While straightforward, we leave the detailed treatment of non-cyclic vertices, as well as the general proof that $m_{\pmb P} > 0$ in the case of a colorable interaction to the interested reader.

\section{Perturbative expansion}
\label{sec:expansion}

\subsection{Feynman maps}\label{sec:feynman_maps}

Given the structure of the propagator $\pmb P$, which is in general not invariant under index permutations, it will be convenient to view the Feynman expansion as a weighted sum of \emph{combinatorial maps} (or \emph{embedded graphs}) rather than ordinary graphs. Even though combinatorial maps always provide a natural way of representing Wick contractions, they are often dispensed with in field theory because, in many instances, the Feynman amplitudes themselves only depend on the graph structure. In our context, this will remain true in representations such as the symmetric traceless or antisymmetric ones, but not in general
\cite{Carrozza:2018ewt}.\footnote{\label{ft:vertex_inv} Under permutation of the first and second half-edges of the vertex (and similarly for any other pair), we find that
\begin{equation*}
\delta^{h}_{\pmb b  \pmb a \pmb c \pmb d \pmb e \pmb f} = \delta^{h}_{(\gamma \cdot \pmb a ) (\gamma^{-1} \cdot \pmb b) ((12) \cdot \pmb c) ((23) \cdot \pmb d) ((34) \cdot \pmb e)   ((45) \cdot\pmb f) } \, , 
\end{equation*} 
where $\gamma = (12345)$. The product of the signatures of the permutations appearing on the right-hand side being even, the invariance of the vertex under permutation of its half-edges follows for $\pmb P \in \{ \pmb A ,\pmb S\}$.} We therefore resort to the language of combinatorial maps.

There are three steps to obtain the perturbative expansion of $F_{\pmb P}$. First, we Taylor expand in $\lambda$ and compute the Gaussian integrals. This leads to a sum over six-valent \emph{combinatorial maps}. We then take the logarithm, which results in a sum over only \emph{connected} combinatorial maps. Finally, we apply the operator $6\lambda \partial_{\lambda}$, which leads to \emph{rooted} connected combinatorial maps. We call a rooted map a map with a half-edge on a vertex marked with an incoming arrow. 
\nomenclature[1]{$\cG$}{\emph{Feynman map}: connected $6$-regular combinatorial map; see Sec.~\ref{sec:feynman_maps}.}

At first order in $\lambda$, $F_{\pmb P}$ corresponds to $\frac{6\times5}{2} = 15$ rooted, connected, combinatorial maps. Contrary to non-rooted maps, unlabeled rooted maps $\mathcal{M}$ come with a combinatorial weight $1$. This is why we chose to study $F_{\pmb P}$ instead of $\ln Z_{\pmb P}(\lambda)$.

\begin{figure}[htbp]
\centering
\captionsetup[subfigure]{labelformat=empty}\subfloat[]{\includegraphics[scale=1]{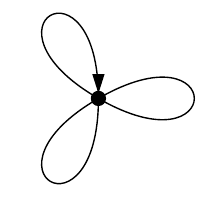}}
\hspace{1cm}
\subfloat[]{\includegraphics[scale=1]{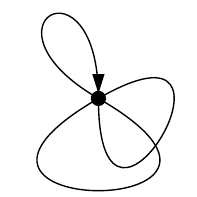}}
\hspace{1cm}
\subfloat[]{\includegraphics[scale=1]{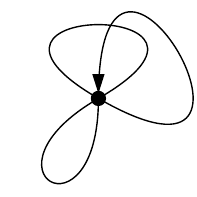}}
\caption{Three of the fifteen first order contributions to $F_{\pmb P }(\lambda)$.}
\end{figure}

We can then write $F_{\pmb P}(\lambda)$ as:
\begin{equation}
F_{\pmb P}(\lambda)=\sum_{\mathcal{M} ~ \mathrm{connected},~ \mathrm{rooted}} \lambda^{V(\mathcal{M})}\mathcal{A}(\mathcal{M})
\end{equation}
with $V(\mathcal{M})$ the number of vertices of $\mathcal{M}$.

\subsection{Stranded graphs}\label{sec:stranded_graphs}

We will now go from this representation to a more detailed one in terms of \emph{stranded graphs} $G$. Indeed, each half-edge in a map $\mathcal{M}$ carries five indices and can therefore be represented by five strands. In turn, each term in the propagator has a specific tensor structure, which induces a particular pairing of the strands being propagated along an edge. Since there are $q=10$ half-strands to be paired along a propagator, there are $(2 q -1)!! = 945$ such tensor structures, all of which appear in the symmetric traceless propagator \eqref{sym}. A \emph{stranded graph} is a combinatorial map, together with a choice of one such tensor structure per edge. As a result, a combinatorial map $\cG$ with $E$ edges gives rise to $945^{E}$ stranded graphs, which we will sometimes call \emph{stranded configurations of $\cG$}. Note that, depending on the model, only a subset of those stranded configurations may be relevant. This is clear from the expression of $\pmb A$ in \eqref{antisym}, which only features $5!=120$ of the $945$ possible tensor structures of the propagator.
\nomenclature[2]{$G$}{\emph{Stranded graph}: Feynman map, together with a choice of one tensor structure per edge; see Sec.~\ref{sec:stranded_graphs}.}

We will distinguish three types of edge configurations:
\begin{itemize}
\item In an \emph{unbroken edge}, all the strands are traversing and connecting half-strands at the two ends 
\begin{figure}[htbp]
\centering
\includegraphics[scale=1]{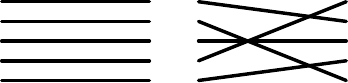}
\caption{Two examples of unbroken edges (out of $5! = 120$).}
\label{fig:unbroken}
\end{figure}
\item In a \emph{broken edge} a pair of half-strands is connected at each end of the edge, and the three other strands are traversing
\begin{figure}[htbp]
\centering
\includegraphics[scale=1]{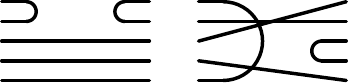}
\caption{Two examples of simply-broken edges (out of ${5 \choose 2}^2 \times 3 ! =600$).}
\label{fig:broken}
\end{figure}
\item In a \emph{doubly-broken edge}, two pairs of half-strands are connected at each end of the edge, and the fifth strand is traversing
\begin{figure}[htbp]
\centering
\includegraphics[scale=1]{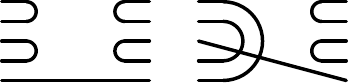}
\caption{Two examples of doubly-broken edges (out of $\left[\frac{1}{2}{5 \choose 2}{3 \choose 2}\right]^2 = 225$).}
\label{fig:doubly_broken}
\end{figure}
\end{itemize}
\nomenclature[4]{\textbf{Unbroken edge.}}{Edge of a stranded graph with all strands traversing; see Fig.~\ref{fig:unbroken}.}
\nomenclature[5]{\textbf{Broken edge.}}{Edge of a stranded graph with exactly three traversing strands; see Fig.~\ref{fig:broken}.}
\nomenclature[6]{\textbf{Doubly-broken edge.}}{Edge of a stranded graph with exactly one traversing strand; see Fig.~\ref{fig:doubly_broken}.}
In particular, the $5!$ tensor structures common to $\pmb A$ and $\pmb S$ lead to unbroken edges. Furthermore, in $\pmb S$, the $600$ terms proportional to $\frac{1}{N+6}$ are associated to  broken edges, while the $225$ terms proportional to $\frac{1}{(N+4)(N+6)}$ lead to doubly-broken edges. Moreover, the large $N$ scaling of each type of edge is universal:  for any choice of propagator $\pmb P$, unbroken tensor structures appear with a coefficient of order one, while broken (resp. doubly-broken) contributions are rescaled by factors of order $1/N$ (resp. $1/N^2$).

\begin{figure}[htbp]
\centering
\includegraphics[scale=0.5]{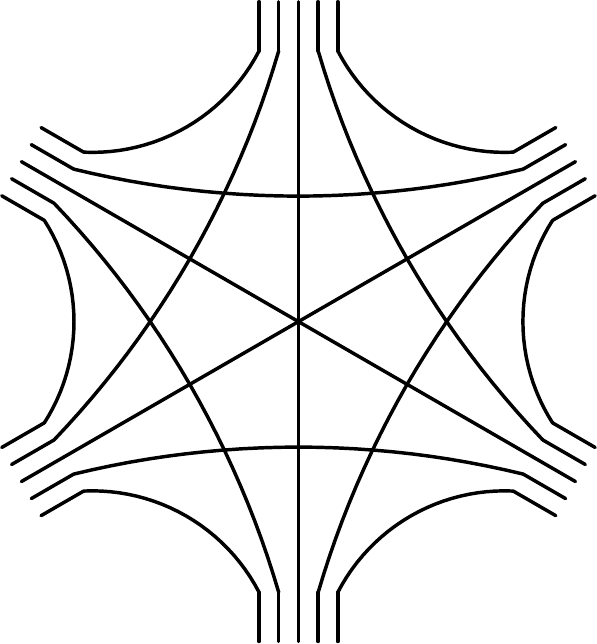}
\caption{Stranded graph representation of the interaction vertex.}\label{fig:vertex}
\end{figure}

We now turn to the stranded representation of the interaction vertex. We call each pair of indices contracted in the $5$-simplex interaction a \emph{corner}. The whole pattern of contractions is represented as a six-valent vertex with fifteen corners, as shown in Fig.~\ref{fig:vertex}. The vertices are then combined with the stranded edges to form a complete stranded diagram. A closed cycle of strands in such a diagram is called a \emph{face}. Finally, we will respectively denote by $F(G)$, $U(G)$, $B_1(G)$ and $B_2(G)$ the number of faces, unbroken edges, simply-broken edges and doubly-broken edges of $G$.

With these definitions in place, we can write the amplitude of a Feynman map as a sum of amplitudes of its standed configurations and thus recast the perturbative expansion in terms of stranded graphs: 
\begin{align*}
F_{\pmb P}(\lambda)&=\sum_{G~\text{connected, rooted}} \lambda^{V(G)}\mathcal{A}(G) \,,
\end{align*}
where $\mathcal{A}(G)$ is the amplitude of the stranded graph $G$. A key advantage is that the large $N$ behaviour of $\mathcal{A}(G)$ is explicitly encoded in the stranded structure of $G$. By inspection of the Feynman rules, each vertex contributes a scaling factor $N^{-5}$, while each broken (resp. doubly-broken) propagator is weighted by a factor $N^{-1}$ (resp. $N^{-2}$) relative to unbroken propagators. Moreover, after contracting the Kronecker delta functions entering the definition of the propagator and vertex kernels, one is left with one free sum and therefore one factor of $N$ per face. This leads to the following large $N$ asymptotics of the amplitudes:
\begin{equation}\label{eq:ampli}
\mathcal{A}(G) = K(G) N^{-\omega(G)} \left( 1 + \mathcal{O} (1/N) \right)\,, 
\end{equation}
where $K(G)$ is a non-vanishing rational number independent from $N$, and the \emph{degree} $\omega$ of the stranded graph $G$ is:\footnote{Note that the first term in \eqref{eq:degree} reflects the (conventional) $N^{-5}$ scaling introduced in the definition of $F_{\bf P}$; see Eq.~\eqref{eq:model}.}
\begin{equation}\label{eq:degree}
\omega(G)=5+5V(G)+B_1(G)+2B_2(G)-F(G)\,.
\end{equation}
For a given choice of $\pmb P$, we can work out an exact formula for $\mathcal{A}(G)$. For instance, when $\pmb P = \pmb A$ or $\pmb S$, we find: 
\begin{equation}
\mathcal{A}(G)=\left(\frac{\varepsilon(G)2^{B_1(G)}2^{B_2(G)}}{5!^{U(G)+B_1(G)+B_2(G)}\left(1+\frac{6}{N}\right)^{B_1(G)+B_2(G)}\left(1+\frac{4}{N}\right)^{B_2(G)}}\right)N^{-\omega(G)}
\label{eq:ampliS}
\end{equation}
where $\varepsilon(G)=(-1)^{B_1(G)}\prod_{e\in G\text{ unbroken}}\epsilon(\sigma^e)$, $\sigma^e$ is the permutation associated to the unbroken edge $e$, and $\epsilon = 1$ (when $\pmb P = \pmb S$) or $\sgn$ (when $\pmb P = \pmb A$).   

The degree $\omega$ is an integer quantity, which can \emph{a priori} take arbitrarily negative values. If one were able to prove it to be bounded from below, the existence of a large $N$ expansion would immediately follow. We will see that this is not true in general: stranded graphs with arbitrarily negative degrees do exist. However, for any map $\cG$, we will prove that none of its stranded configurations $G$ with $\omega(G) < 0$ (if they exist) actually contribute to the full amplitude $\mathcal{A}(\cG)$.

The stranded graphs and combinatorial maps appearing in the rest of the paper will always be assumed to be connected, unless specified otherwise.  

\subsection{Problematic cases}

Consider a stranded graph $G$, and let us simplify the expression of its degree. We denote by $F_p$ the number of \emph{faces of length $p$}, that is, the number of faces that have exactly $p$ corners. Each vertex contributing exactly $15$ corners to the graph, we have the relation: 
\begin{equation}
15V=\sum_{p\geq 1} pF_p\,.
\end{equation}
Together with $F=\sum_{p\geq 1} F_p$, this leads to the following expression for the degree:
\begin{equation}
\omega =5+B_1 +2B_2 +\sum_{p \geq 1} F_p \left(\frac{p}{3}-1\right)\,.
\label{eq:degree_simplified}
\end{equation}

We thus obtain the elementary but important proposition:
\begin{proposition}
Let $G$ be a stranded graph. If $F_1 (G) = F_2 (G) = 0$, then 
\begin{equation}
\omega(G) \geq 0 \,.
\end{equation}
\end{proposition}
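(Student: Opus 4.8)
The plan is to read off the result directly from the simplified expression of the degree in Eq.~\eqref{eq:degree_simplified}. Recall that this formula reads
\begin{equation*}
\omega(G) = 5 + B_1(G) + 2 B_2(G) + \sum_{p \geq 1} F_p(G)\left(\frac{p}{3} - 1\right)\,,
\end{equation*}
and was obtained by eliminating $V(G)$ and $F(G)$ in favour of the counts $F_p(G)$ of faces of length $p$, using the two relations $15 V = \sum_{p \geq 1} p F_p$ and $F = \sum_{p \geq 1} F_p$. So the only work is to show that, under the hypothesis $F_1(G) = F_2(G) = 0$, the right-hand side is non-negative.

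First I would note that the quantities $5$, $B_1(G)$ and $2 B_2(G)$ are manifestly non-negative, so it suffices to control the sum $\sum_{p \geq 1} F_p(G)(p/3 - 1)$. The coefficient $p/3 - 1$ is negative precisely for $p = 1$ and $p = 2$ (taking the values $-2/3$ and $-1/3$), vanishes at $p = 3$, and is strictly positive for every $p \geq 4$. Hence, under the assumption that there are no faces of length $1$ or $2$, every surviving term in the sum has $p \geq 3$ and therefore a non-negative coefficient $p/3 - 1 \geq 0$; since each $F_p(G) \geq 0$ as well, the entire sum is non-negative. Adding back the non-negative contributions $5 + B_1(G) + 2 B_2(G)$ gives $\omega(G) \geq 5 \geq 0$, which is in fact slightly stronger than the claimed bound.

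There is essentially no obstacle here: the statement is called ``elementary but important'' in the excerpt for good reason, and the entire content is repackaging Eq.~\eqref{eq:degree_simplified}. The only point deserving a moment of care is to make sure the bookkeeping identity $15 V = \sum_p p F_p$ (each vertex carries exactly $15$ corners, and each corner belongs to exactly one face) has indeed been used to reach Eq.~\eqref{eq:degree_simplified}, so that no hidden sign has been dropped; but this is already established in the text preceding the proposition. I would therefore present the proof in two or three lines: quote Eq.~\eqref{eq:degree_simplified}, observe that $p/3 - 1 \geq 0$ for $p \geq 3$, invoke $F_1 = F_2 = 0$ to discard the only negative-coefficient terms, and conclude $\omega(G) \geq 0$ (indeed $\geq 5$).
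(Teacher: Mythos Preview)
Your argument is correct and is exactly the paper's approach: quote Eq.~\eqref{eq:degree_simplified} and observe that the only potentially negative terms in the sum are those with $p=1$ or $p=2$, which vanish by hypothesis. Your additional remark that in fact $\omega(G)\geq 5$ under this hypothesis is also correct, and the paper exploits precisely this stronger bound later (in the proof of Lemma~\ref{lemma:2PR}).
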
 
\begin{proof}
This immediately follows from \eqref{eq:degree_simplified}: the only terms that are not explicitly non-negative are proportional to $F_1$ and $F_2$.
\end{proof}

In Section \ref{sec:deletions}, we will prove that an even larger class of stranded graphs have non-negative degrees. As we will be working by induction on the number of vertices, it is convenient to introduce the notion of \emph{ring graph}, defined as a stranded graph with a single edge closed onto itself, and therefore, no vertex (see Fig.~\ref{fig:ring_graph}). The degree of a ring graph is then defined by equation \eqref{eq:degree_simplified}, and is clearly non-negative.  

\begin{figure}[htbp]
\centering
\includegraphics[scale=.8]{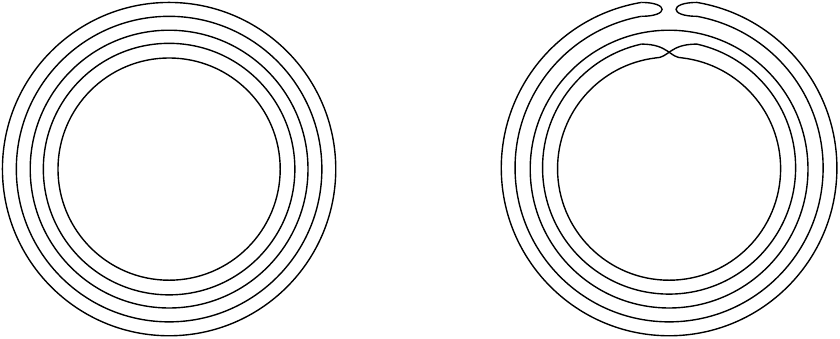}
\caption{Two stranded graphs associated to a ring map, in unbroken (left) and simply-broken (right) configurations.}
\label{fig:ring_graph}
\end{figure}

We finally introduce some nomenclature.
\begin{definition}
A \emph{short face} is a face of length one or two.
An \emph{end graph} is either a graph with no short face, or a ring graph. End graphs have non-negative degrees.
\end{definition}
We have reduced our problem significantly. The rest of the paper is dedicated to the analysis of graphs containing short faces and vertices. 

\section{Combinatorial structure of subgraphs with short faces}
\label{sec:tmd}

The purpose of this section is to introduce specific submap and subgraph structures which may support short faces, and will therefore require special attention. Before that, we also introduce the general notion of boundary graph, which conveniently captures the relation between external legs and external faces of a stranded subgraph.  

\subsection{Boundary graph}\label{sec:bdy_graph}

To any $n$-point stranded graph $G$, we associate a canonically constructed $5$-regular graph with $n$-vertices $G_\partial$, which we call the \emph{boundary graph of $G$} \cite{Gurau:2009tz}. It is constructed in such a way as to faithfully represent the tensorial structure of the correlator $G$ contributes to, up to permutations of the indices appearing in a same tensor. 

More precisely, we define $G_\partial$ through the following procedure. First, each external leg of $G$ is represented in $G_\partial$ by a $5$-valent vertex. Then, for every external strand connecting two external legs of $G$, we draw an edge between the corresponding vertices in $G_\partial$. 
For example, the stranded six-point graph consisting in a single interaction vertex has for boundary graph the complete graph on six vertices $K_6$, as represented in Fig.~\ref{fig:vertex_bndy}. Graphically, one can obtain $G_\partial$ from $G$ by deleting all its internal faces, and pinching its external legs to form vertices as represented in Fig.~\ref{fig:ex_stranded_bndy}. Finally, insofar as the external legs of $G$ are labeled, we will consider $G_\partial$ as a labeled graph. 
\nomenclature[3]{$G_\partial$}{\emph{Boundary graph} of $G$; see Sec.~\ref{sec:bdy_graph}.}

\begin{figure}[htbp]
\centering
\includegraphics[scale=0.3]{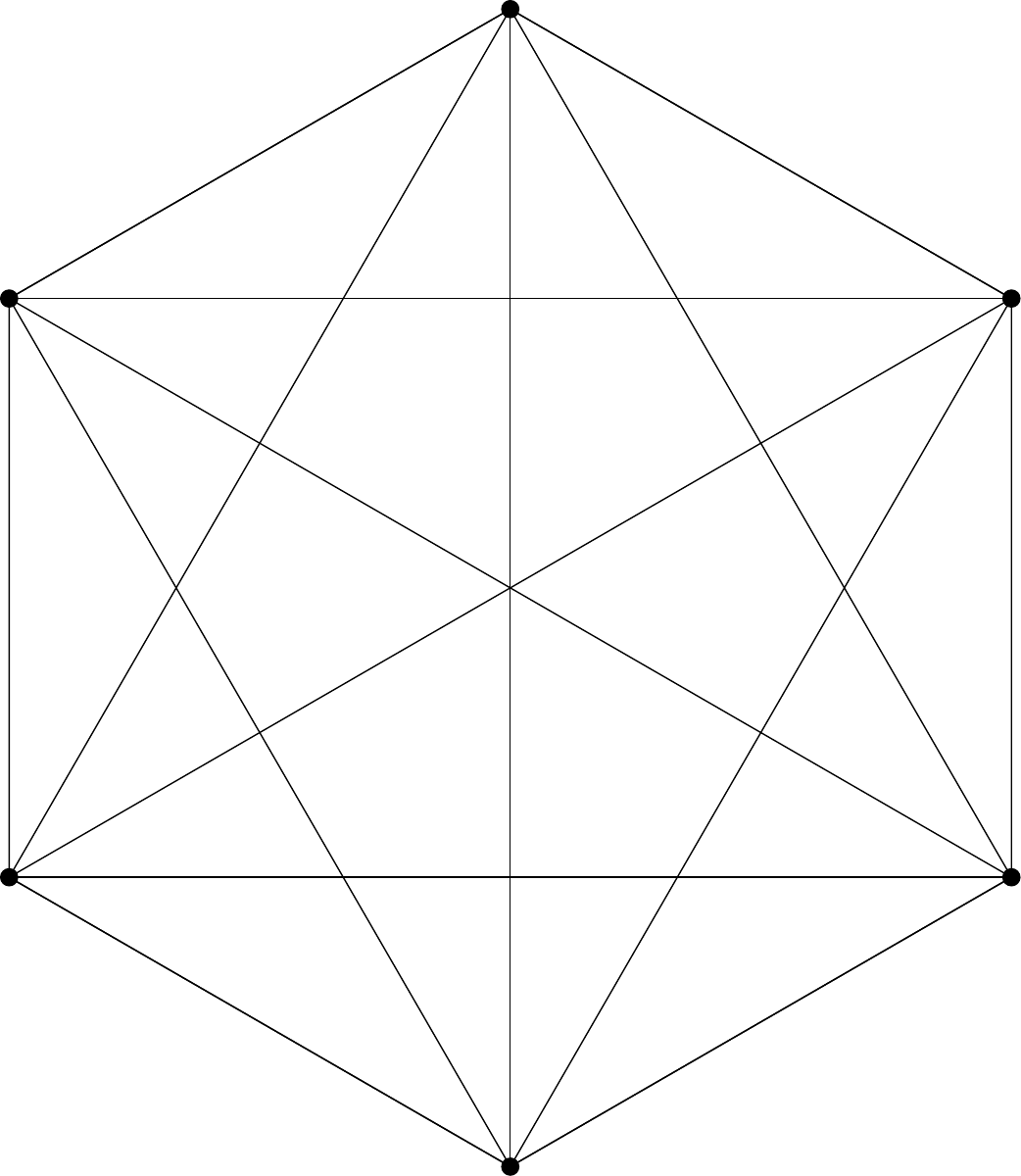}
\caption{The boundary graph of the stranded interaction vertex is the complete graph $K_6$. }
\label{fig:vertex_bndy}
\end{figure}  

\begin{figure}[htbp]
\centering
\includegraphics[scale=1.5]{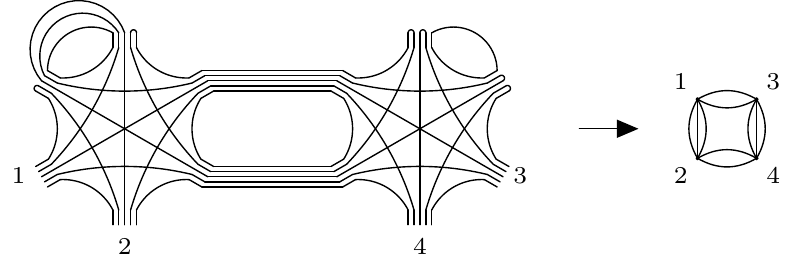}
\caption{A four-point stranded graph and its corresponding boundary graph.}
\label{fig:ex_stranded_bndy}
\end{figure}

\subsection{Faces of length one: tadpoles}

A stranded graph can only have faces of length one if its parent map contains tadpole lines. For convenience, we will distinguish two types of elementary tadpole submaps or subgraphs.  
\begin{definition}
A \emph{single-tadpole} (or, equivalently, a \emph{tadpole}) is a four-point Feynman map or stranded graph with one vertex and one self-loop. A \emph{double-tadpole} is a two-point Feynman map or stranded graph with one vertex and two self-loops. See Fig.~\ref{fig:simp_db_tadpole}.
\end{definition}
\nomenclature[7]{\textbf{Single-tadpole.}}{Four-point Feynman map or stranded graph with one vertex and one self-loop; see Fig.~\ref{fig:simp_db_tadpole}.}
\nomenclature[8]{\textbf{Double-tadpole.}}{Two-point Feynman map or stranded graph with one vertex and two self-loops; see Fig.~\ref{fig:simp_db_tadpole}.}

\begin{figure}[htbp]
\centering
\captionsetup[subfigure]{labelformat=empty}
\subfloat[]{\includegraphics[scale=1]{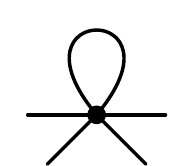}}
\hspace{1cm}
\subfloat[]{\includegraphics[scale=1]{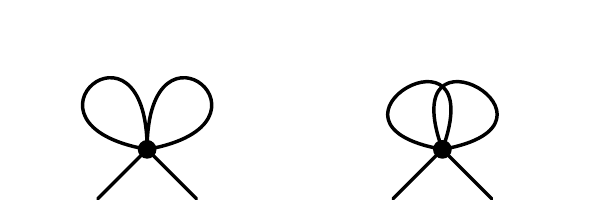}}
\caption{The unique single-tadpole map (left); and two examples of double-tadpole maps that differ through their embedding (right).}
\label{fig:simp_db_tadpole}
\end{figure}

By extension, a graph or map obtained from a single-tadpole (resp.  double-tadpole) by substituting internal edges with non-trivial two-point submaps or subgraphs will be called a \emph{generalized single-tadpole} (resp. \emph{generalized double-tadpole}).

\paragraph{Bad double-tadpoles.}

A double-tadpole graph has at most four internal faces: two of length one, and two of length two. Taking the factor $N^{-5}$ from the vertex into account, we conclude that a double-tadpole graph scales at most like $N^{-1}$. At first sight, it therefore seems that double-tadpoles cannot lead to graphs with negative degrees. But this conclusion is not warranted, which we can illustrate by considering the stranded graph represented in Fig.~\ref{fig:double_tadpole}. Such a configuration supports four faces and hence saturates our scaling bound. Furthermore, its boundary graph is that of a doubly-broken edge. We call any such configuration a \emph{bad double-tadpole}. 

\begin{figure}[htbp]
\centering
\captionsetup[subfigure]{labelformat=empty}
\subfloat[]{\includegraphics[scale=1]{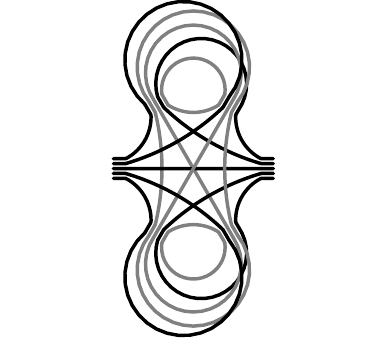}}
\hspace{1cm}
\subfloat[]{\includegraphics[scale=0.25]{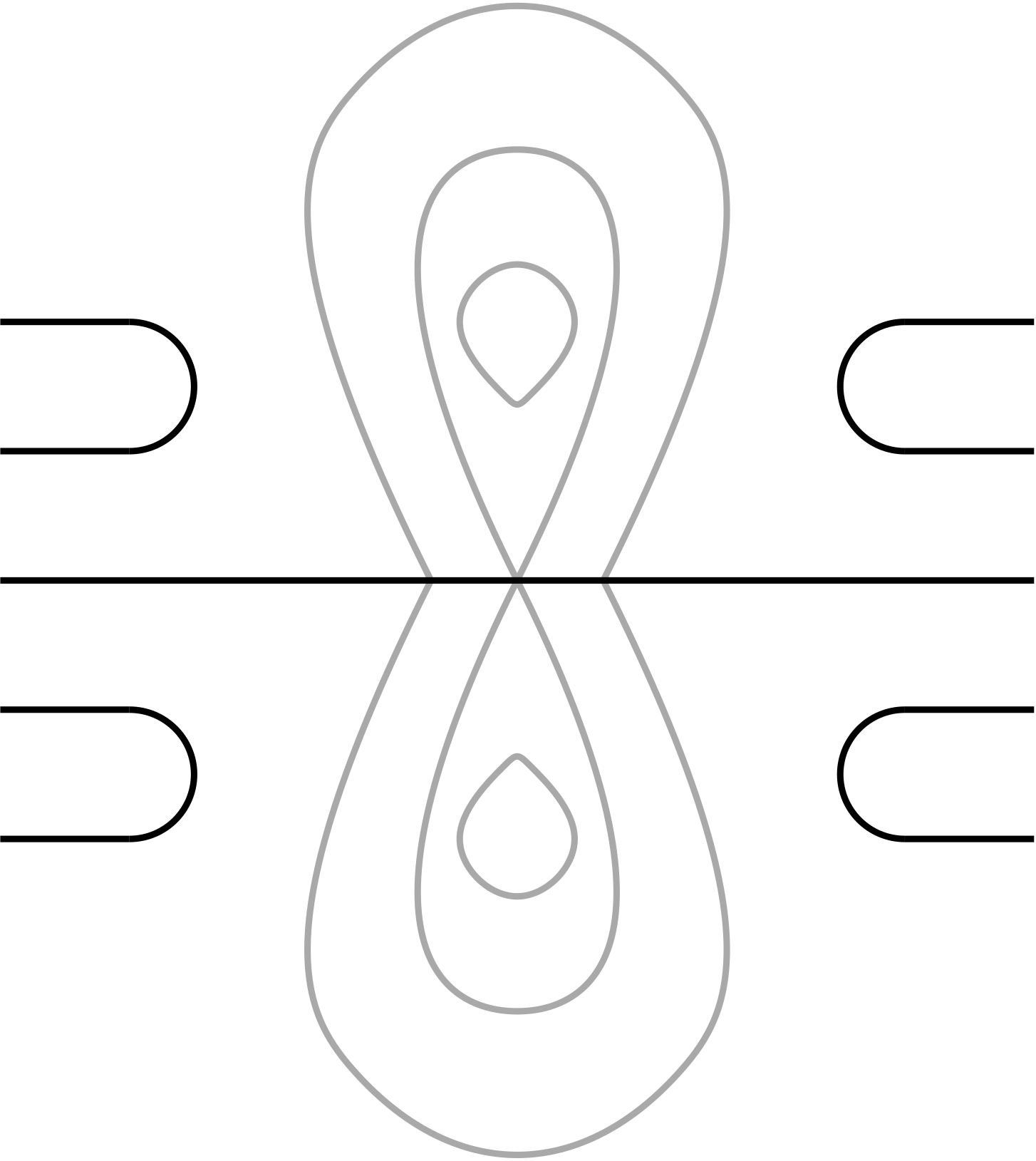}}
\hspace{1.5cm}
\subfloat[]{\includegraphics[scale=0.75]{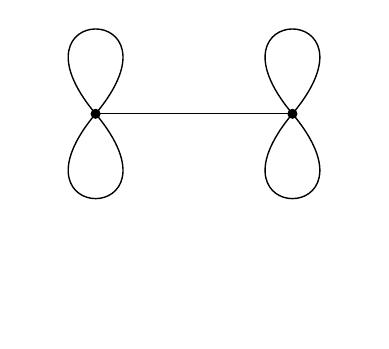}}
\caption{An example of bad double-tadpole subgraph (left panel). The equivalent representation provided in the central panel gives a clearer picture of the face structure. The associated boundary graph is represented in the rightmost panel.}
\label{fig:double_tadpole}
\end{figure}

A serious difficulty arises from the fact that we can construct chains of bad double-tadpoles, arranged in such a way that: for every double-tadpole we add to the chain, two additional faces are being closed (see Fig.~\ref{fig:chain_bad}). 
\begin{figure}[htbp]
\centering
\includegraphics[scale=0.75]{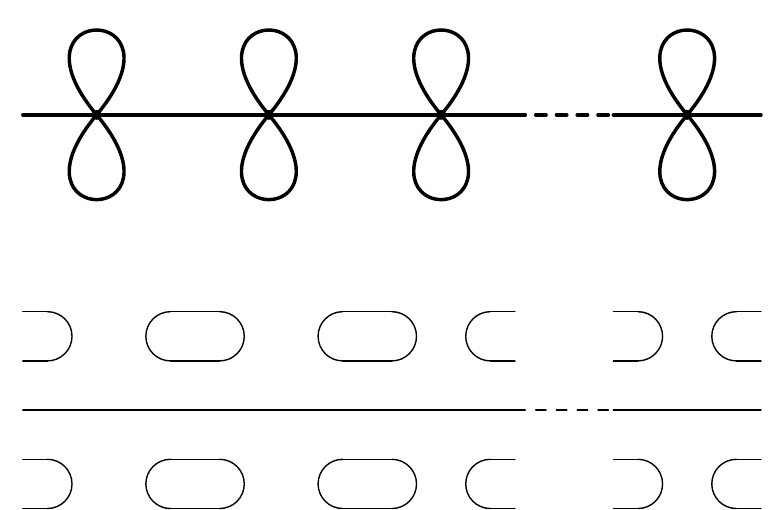}
\caption{A chain of bad double-tadpoles: two additional faces are being closed for each double-tadpole one adds to the chain.}
\label{fig:chain_bad}
\end{figure}
With $p$ double-tadpoles, the scaling of such a chain is:
\begin{equation}
\left(\frac{1}{N}\right)^pN^{2p-1}=N^{p-1}\,,
\end{equation}
which is unbounded from above. 

As a result, we observe that the degree is unbounded from below in the class of all stranded graphs. Nevertheless, we will see that the irreducible nature of the tensor representations we are working with allows to tame the contributions of such diagrams.

\subsection{Face of length two: melons, dipoles and dipole-tadpoles}

We will focus on three particular submap structures that can support faces of length two. We start with the minimal one.
\begin{definition}
A \emph{dipole} is an eight-point Feynman map or stranded graph with two vertices, two edges (which we call internal edges) and no self-loop. See Fig.~\ref{fig:dipole}.
\end{definition}
\begin{figure}[htbp]
\centering
\includegraphics[scale=1]{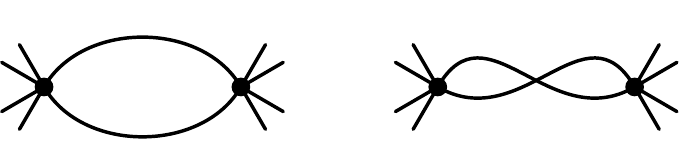}
\caption{Examples of (planar and non-planar) dipoles.}
\label{fig:dipole}
\end{figure}
\nomenclature[9]{\textbf{Dipole.}}{Eight-point Feynman map or stranded graph with two vertices, two edges (which we call internal edges) and no self-loop; see Fig.~\ref{fig:dipole}.}

As will become clear later on, we will have to pay extra attention to dipole subgraphs which appear in two other types of structures, which we now introduce. The first one is the familiar melon.  
\begin{definition}
A \emph{melon} is a two-point Feynman map or stranded graph with two vertices, five edges, and no self-loop. See Fig.~\ref{fig:melon}.
\end{definition}
\nomenclature[a]{\textbf{Melon.}}{Two-point Feynman map or stranded graph with two vertices, five edges, and no self-loop; see Fig.~\ref{fig:melon}.}

\begin{figure}[htbp]
\centering
\includegraphics[scale=1]{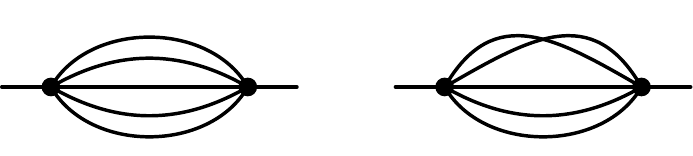}
\caption{Examples of (planar and non-planar) melon two-point maps.}
\label{fig:melon}
\end{figure}
As for tadpoles and double-tadpoles, a graph or map obtained from a melon by dressing its propagator edges with non-trivial two-point functions will be called a \emph{generalized melon}.

We finally introduce a particular subgraph containing a dipole and two tadpoles, which we call a dipole-tadpole. 
\begin{definition}\label{def:dipole-tadpole}
A \emph{dipole-tadpole} is a four-point Feynman map or stranded graph with two vertices, four edges, and exactly one self-loop on each vertex. See Fig.~\ref{fig:dipole-tadpole}.
A dipole-tadpole will be called \emph{separating} if it is adjacent to a generalized double-tadpole, as represented in the right panel of Fig.~\ref{fig:dipole-tadpole}.
\end{definition}
\begin{figure}[htbp]
\captionsetup[subfigure]{labelformat=empty}
\centering
\subfloat[]{\includegraphics[scale=1]{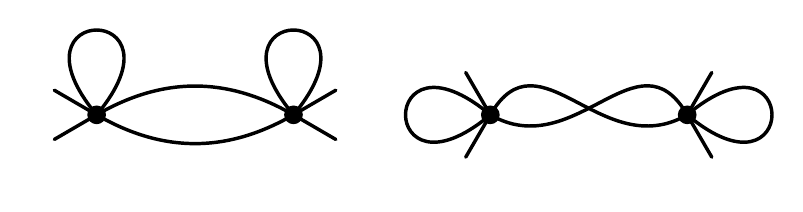}}
\hspace{1cm}
\subfloat[]{\includegraphics[scale=1]{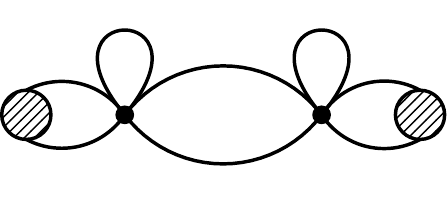}}
\caption{Examples of dipole-tadpole maps. The rightmost dipole-tadpole is separating.}
\label{fig:dipole-tadpole}
\end{figure}
We will also talk of \emph{generalized dipole-tadpole} if we allow the internal edges of a dipole-tadpole to be dressed by non-trivial two-point functions. 
\nomenclature[b]{\textbf{Dipole-tadpole.}}{ Four-point Feynman map or stranded graph with two vertices, four edges, and exactly one self-loop on each vertex; see Fig.~\ref{fig:dipole-tadpole}.}

\subsection{Type-$I$ and type-$II$ configurations}\label{sec:types}

Finally, it will later prove convenient to distinguish two types of tadpoles and dipoles: those which appear as subgraphs of generalized double-tadpoles, generalized melons or generalized dipole-tadpoles, as represented in Fig.~\ref{fig:not_easy}; and all the others. We will label the latter as \emph{type-$I$}, the former as \emph{type-$II$}. 
\begin{figure}[H]
\centering
\captionsetup[subfigure]{labelformat=empty}
\subfloat[]{\includegraphics[scale=1]{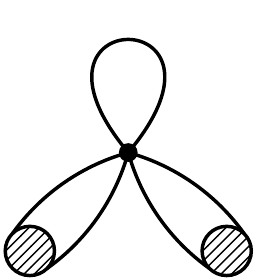}}
\hspace{1cm}
\subfloat[]{\includegraphics[scale=1]{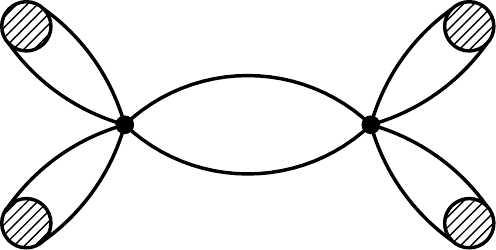}}
\hspace{1cm}
\subfloat[]{\includegraphics[scale=1]{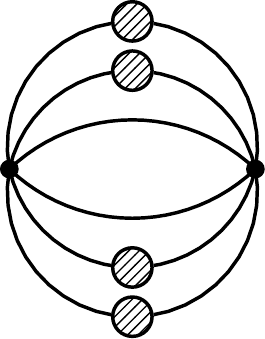}}
\caption{Type-$II$ tadpole (left) and dipoles (middle and right). By exclusion, a tadpole or dipole in any other configuration is of type $I$.}
\label{fig:not_easy}
\end{figure} 
 
\section{Subtraction of double-tadpoles and melons}
\label{sec:subtraction}

In this section, we show that the irreducible model with propagator $\pmb P$ in Eq.~\eqref{eq:model} is equivalent to a theory with renormalized covariance, in which melons and double-tadpoles have been subtracted from the Feynman expansion. While not strictly necessary to prove the existence of the large $N$ expansion \cite{Carrozza:2018ewt}, this reformulation is convenient. It cleanly separates Feynman maps that support stranded configurations with non-positive degrees, from those that do not. Only the latter can be accurately estimated by analyzing the combinatorial structure of their stranded configurations, a task we will turn to in section \ref{sec:deletions}.

Let us start by estimating the amplitude of melon and double-tadpole maps. 
\begin{lemma}\label{lemma:melon_tadpole}
Let $\mathcal{G}$ be a (non-amputated) two-point Feynman map. The associated amplitude $\mathcal{A}(\mathcal{G})_{\pmb a,\pmb b}$ can be written as:
\begin{equation}\label{eq:schur1}
\mathcal{A}(\mathcal{G})_{\pmb a,\pmb b}=\lambda^{V(\mathcal{G})}f_{\mathcal{G}}(N)\pmb P_{\pmb a,\pmb b}\,,
\end{equation}
where $f_{\mathcal{G}}$ is some (rational) function. 
Furthermore:
\begin{itemize} 
\item if $\cG$ is a double-tadpole, then $f_{\mathcal{G}}(N) = \mathcal{O}(1/N)$;
\item if $\cG$ is a melon, then $f_{\mathcal{G}}(N) = f_\cG^{(0)} + \mathcal{O}(1/N)$ where $f_\cG^{(0)} \in \mathbb{R}$. Moreover, when $\pmb P \in \{ \pmb S , \pmb A \}$, one necessarily has $f_\cG^{(0)} > 0$.
\end{itemize}
\end{lemma}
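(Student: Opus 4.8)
The statement has three parts: the Schur-type factorization \eqref{eq:schur1}, the $\mathcal{O}(1/N)$ decay for double-tadpoles, and the positivity $f_\cG^{(0)}>0$ for melons when $\pmb P \in \{\pmb S,\pmb A\}$. The factorization is the conceptual heart and I would establish it first. Since $\mathcal{A}(\cG)_{\pmb a,\pmb b}$ is built entirely from the covariance $\pmb P$ and the $5$-simplex vertex kernel $\delta^h$, and since a two-point map has exactly two external legs, the amplitude is a tensor with ten free indices which, by construction, is left- and right-invariant under the action of the orthogonal group $O(N)$ in the sense that it intertwines the representation carried by $\pmb P$ with itself. The key input is that $\pmb P$ projects onto an \emph{irreducible} representation of $O(N)$; by Schur's lemma, any $O(N)$-equivariant endomorphism of that irreducible subspace is a scalar multiple of the identity on it, i.e.\ of $\pmb P$ itself. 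Hence $\mathcal{A}(\cG)_{\pmb a,\pmb b} = \lambda^{V(\cG)} f_\cG(N)\, \pmb P_{\pmb a,\pmb b}$ for some scalar $f_\cG(N)$, which is rational in $N$ because the amplitude is a finite sum of products of Kronecker deltas and rational coefficients (the $1/(N+4)$, $1/(N+6)$ factors from the projectors). The one point requiring a small argument is that the amplitude really does land in the image of $\pmb P$ on both sides: this is immediate because the two external propagator edges each carry a factor of $\pmb P$, and $\pmb P^2 = \pmb P$.

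For the double-tadpole, I would extract $f_\cG(N)$ by tracing: contracting \eqref{eq:schur1} with $\delta_{\pmb a \pmb b}$ gives $f_\cG(N)\,\dim(\pmb P) = \Tr \mathcal{A}(\cG)$, so it suffices to bound $\Tr\mathcal{A}(\cG)$ and recall $\dim(\pmb P) = \Theta(N^5)$. Now the stranded-graph analysis from Section~\ref{sec:stranded_graphs} applies: the trace closes the two external legs into faces, so $\Tr\mathcal{A}(\cG) = \sum_G \mathcal{A}(G)$ over closed stranded configurations $G$ of the (closed) map, and each such $G$ has $\omega(G) = 5 + 5V + B_1 + 2B_2 - F$. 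For a bare double-tadpole $V=1$ and, as noted in the excerpt, the closed graph has at most $F \le 4 + 2 = 6$ faces (four internal plus the two from closing the external strands — one should count carefully, but the bound $F \le 10$ from having $15$ corners distributed among faces of length $\ge 1$, combined with the broken/doubly-broken penalties, does the job), giving $\omega(G) \ge 5 + 5 - 6 + \text{(penalties)} \ge \dots$; after dividing by $\dim(\pmb P) \sim N^5$ one gains five powers of $N$ and lands at $\mathcal{O}(1/N)$. The generalized case (internal edges dressed by arbitrary two-point functions) follows by the same factorization applied recursively to the dressings, which only contribute extra factors of $\pmb P$ and rational functions, plus the observation that dressing strictly increases $V$ without creating enough faces to compensate. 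I expect this counting to be the \textbf{main obstacle}: one must check that no stranded configuration of a (generalized) double-tadpole — including the ``bad'' configurations of Fig.~\ref{fig:double_tadpole} and chains thereof — survives at order $N^0$ \emph{after} the $\dim(\pmb P)$ normalization, and the bad double-tadpoles are precisely designed to saturate bounds, so the margin is thin and the argument is delicate.

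For the melon with $\pmb P \in \{\pmb S, \pmb A\}$, the positivity of $f_\cG^{(0)}$ I would get from the explicit amplitude formula \eqref{eq:ampliS}. A bare melon has $V=2$ and $E=5$ edges; tracing as above, $f_\cG(N)\dim(\pmb P) = \sum_G \mathcal{A}(G)$ over its closed stranded configurations. The dominant configurations are those with all five edges unbroken and maximal face count; for these $B_1 = B_2 = 0$, so by \eqref{eq:ampliS} the amplitude is $\varepsilon(G)/5!^{\,10}$ times $N^{-\omega(G)}$, and $\varepsilon(G) = \prod_{e} \epsilon(\sigma^e)$. The point is that, in the leading configuration(s), the product of signatures is $+1$ regardless of whether $\epsilon = 1$ or $\epsilon = \sgn$ — for the melon the five edges can be taken with the identity permutation (or, for the antisymmetric case, the signs pair up), and one checks directly that the leading term is a strictly positive rational, namely $m_{\pmb P} = (1/5!)^4$ as quoted in Theorem~\ref{theorem_LO} times the appropriate factor. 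Subleading configurations contribute only at $\mathcal{O}(1/N)$ and cannot spoil the sign. The generalized melon reduces to the bare one by replacing each dressed edge by $f_{\text{dressing}}(N)\,\pmb P$ and noting that the leading coefficient of each two-point insertion is itself non-negative (indeed positive, by induction on $V$, using this very lemma). I would present the positivity computation only for $\pmb S$ in detail, the $\pmb A$ case being identical up to bookkeeping of signs.
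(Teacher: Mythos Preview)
Your Schur-lemma argument for the factorization \eqref{eq:schur1} is correct and matches the paper's reasoning.

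For the double-tadpole, you are over-reading the statement. The lemma concerns only the \emph{bare} double-tadpole (one vertex, two self-loops), not generalized or chained configurations; the ``bad'' double-tadpoles of Fig.~\ref{fig:double_tadpole} and their chains are relevant to the later analysis of stranded graphs, not to this lemma. The bare case is immediate: the open two-point subgraph has at most four internal faces (cf.\ Appendix~\ref{ap:bounds}), the single vertex contributes $N^{-5}$, and since the entries of $\pmb P$ are $O(1)$ one reads off $f_\cG(N)=\mathcal{O}(1/N)$ directly, without tracing. Your tracing approach is not wrong, but your face count ``$4+2=6$'' is off (closing with an unbroken edge can add up to five faces, not two), and the detour through $\dim(\pmb P)$ is unnecessary.

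The genuine gap is in the melon positivity argument for $\pmb P=\pmb A$. You exhibit one leading stranded configuration (identity permutations on all edges) with positive sign, but you do not exclude that \emph{other} leading configurations carry negative signs and cancel it. The paper's argument has two ingredients you are missing: (i) all leading-order stranded configurations of the melon have unbroken edges and the same (unbroken-type) boundary graph, but the pairings of external strands differ by a permutation, so they contribute to distinct tensor structures and \emph{cannot} cancel against one another; (ii) for $\pmb A$, each leading configuration individually has positive sign, because the product of the signatures of the six permutations labeling its unbroken edges (five internal plus the external one) is always even --- this follows from the symmetry of the vertex kernel under half-edge permutations (footnote~\ref{ft:vertex_inv}). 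Your claim that ``the signs pair up'' gestures at (ii) but does not supply the argument. Also, drop the discussion of generalized melons and the induction on $V$: the lemma is about bare melons only.
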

\begin{proof}
The functional form of Eq.~\eqref{eq:schur1} is a direct consequence of the irreducibility of the representation. It  follows from Schur's lemma for any two-point graph $\cG$. 

Let us first assume that $\cG$ is a double-tadpole. It is clear that any stranded configuration of $\cG$ has at most four faces (this can be formalized with the help of e.g. the bounds of Appendix~\ref{ap:bounds}). They contribute a factor of order at most $N^4$, which is compensated by the $1/N^5$ scaling of the vertex. Hence $f_{\mathcal{G}}(N) = \mathcal{O}(1/N)$.

Next, we assume $\cG$ to be a melon. Consider one of its stranded configurations $G$. As will be clear from Remark~\ref{rem:broken_unbroken} below, we can assume that $G$ contains only unbroken edges. We have $(5\times4)/2=10$ internal corners at our disposal on each vertex to build up faces, so $20$ corners in total. From the structure of the melon and the unbroken character of the edges of $G$, it is also clear that any face must have length at least two. It immediately follows that $F(G)\leq 10$, leading to a contribution to the amplitude scaling like $N^{10}$ at most. Taking the two factors of $1/N^5$ coming from the vertices into account, we infer that $f_{\mathcal{G}}(N) = \mathcal{O}(1)$. Let us finally specialize to $\pmb P  \in \{ \pmb S , \pmb A\}$. Given that any unbroken edge configuration contributes to $\pmb P$, it is straightforward to show that: 
a) this bound can be saturated; b) the configurations that do so have only unbroken edges and the same boundary graph, namely, that of an unbroken edge; c) despite having identical boundary graphs, the way in which the external strands are being paired up in any two such configurations differ by a permutation. As a result, there can be no cancellation between leading order stranded configurations, which implies that $f_\cG^{(0)} \neq 0$.\footnote{This is a crucial difference with double-tadpoles. For the latter, leading order stranded configurations are of the doubly-broken type, and as a result, necessarily cancel out once resummed into the full amplitude.} Given the symmetric structure of the melon and of its leading order contributions, it is also possible to show that $f_\cG^{(0)} > 0$, irrespectively of the choice of irreducible representation. This is direct for the symmetric traceless propagator since there are no signs involved in its unbroken stranded contributions. For the antisymmetric representation, we can infer from the structure of a leading-order melon stranded graph that the product of the signatures of the permutations labeling its unbroken edges (including the external one) is necessarily even, leading to an overall positive sign. We leave the details of the proof, which follows from footnote \ref{ft:vertex_inv}, to the interested reader.  
\end{proof}

\begin{remark}\label{rem:improved_lemma}
If one works with the colorable interaction kernel \eqref{eq:colored_5-simplex} instead of \eqref{eq:5-simplex}, it is possible to prove that $f_\cG^{(0)} \geq 0$ for any melon $\cG$, and $f_\cG^{(0)} > 0$ for at least one such $\cG$. Indeed, it can be shown that, in this particular case, the unique leading-order stranded configuration of a closed melon happens to be decorated by the same unbroken edge (that is, the same permutation $\sigma \in \mathcal{S}_5$) on all six propagators. Hence, the coefficient associated to this particular unbroken edge is raised to an even power, and the overall sign of the amplitude is always positive. Moreover, it is straightforward to see that any $\sigma$ can contribute to a leading-order melon in such a way, as there is always at least one non-zero unbroken contribution in each propagator. We then conclude that at least one melon is non-vanishing at leading order.
\end{remark}

In light of the previous Lemma, it is clear that double-tadpole submaps are well-behaved in the large $N$ limit, even though some of their stranded configurations are not. To prove the existence of the large $N$ limit, we must therefore make sure to always bound a double-tadpole Feynman map as a whole. Moreover, it is also clear from Lemma~\ref{lemma:melon_tadpole} that melon two-point functions will contribute to the leading order. By dressing double-tadpole subgraphs with such two-point functions, we can generate a family of stranded graphs with arbitrarily negative degrees, but no double-tadpoles. This indicates that the whole family of two-point functions generated by double-tadpoles and melons needs to be treated with care: the existence of the large $N$ expansion cannot be deduced from bounds on their individual stranded configurations. 

We then follow the method of \cite{Benedetti:2017qxl} and adapt it to rank $5$. We consider a modified theory with covariance $K\pmb P$ where $K$ is a real number. Let us denote by $\Sigma^{(2)}$ the contribution of melon and double-tadpole maps to the self-energy. By Lemma~\ref{lemma:melon_tadpole}, we have:
\begin{equation}
\Sigma^{(2)}_{\pmb a,\pmb b}=\left(\lambda K f_1^{\pmb P}+\lambda^2K^5f_2^{\pmb P}\right) \pmb P_{\pmb a,\pmb b}\,,
\end{equation}
where $f_1^{\pmb P}(N)$ and $f_2^{\pmb P}(N)$ are series in $1/N$ verifying
\begin{equation}
f_1^{\pmb P}(N)=\mathcal{O}(1/N) \qquad \mathrm{and} \qquad f_2^{\pmb P}(N)= m_{\pmb P} + \mathcal{O}(1/N)\,.
\end{equation}
Moreover, for $\pmb P \in \{ \pmb S , \pmb A\}$, the constant $m_{\pmb P}$ is necessarily non-vanishing and positive.\footnote{Owing to Remark~\ref{rem:improved_lemma}, we also have $m_{\pmb P} >0$ for \emph{any} $\pmb P$ with the alternative choice of vertex kernel \eqref{eq:colored_5-simplex}.} Up to symmetry factors, it essentially counts the number of leading order melon stranded graphs.

As an illustration, for $\pmb P = \pmb A$ or $\pmb S$, we have the exact formula:\footnote{Analogous formulas exist for mixed representations, but they are slightly more involved as in those cases the Feynman amplitudes may depend on the embedding information of the Feynman maps.}
\begin{align}
\Sigma^{(2)}_{\pmb a,\pmb b}&=15\frac{\lambda K^2}{N^5}\sum_c \pmb P_{a_1a_2a_3a_4a_5,c_1c_2c_3c_4c_5}\pmb P_{c_5c_6c_7c_8c_9,c_9c_4c_{10}c_{11}c_{12}}\pmb P_{c_{10}c_6c_3c_{13}c_{14},c_{14}c_{11}c_7c_2c_{15}}\pmb P_{c_{15}c_{13}c_{12}c_8c_1,b_1b_2b_3b_4b_5} \crcr
& +120 \frac{\lambda^2K^5}{N^{10}}\sum_{c,d}\pmb P_{a_1a_2a_3a_4a_5,c_1c_2c_3c_4c_5}\pmb P_{c_5c_6c_7c_8c_9,d_5d_6d_7d_8d_9}\pmb P_{c_9c_4c_{10}c_{11}c_{12},d_9d_4d_{10}d_{11}d_{12}}\crcr
& \qquad\pmb P_{c_{12}c_8c_{3}c_{13}c_{14},d_{12}d_8d_{3}d_{13}d_{14}}\pmb P_{c_{14}c_{11}c_{7}c_{2}c_{15},d_{14}d_{11}d_{7}d_{2}d_{15}} \pmb P_{c_{15}c_{13}c_{16}c_{2}c_{1},d_{15}d_{13}d_{10}d_{6}d_{1}}\pmb P_{d_1d_2d_3d_4d_5,b_1b_2b_3b_4b_5}
\end{align}
Using the explicit expression of the propagators, we can find exact expressions for $f_1^{\pmb P}$ and $f_2^{\pmb P}$. For instance, we determined (by numerical methods) that:\footnote{The computation could in principle be performed for $f_2^{\pmb P}$ as well, but it is more costly.}
\begin{align}
f_1^{\pmb A}= \frac{(N-4)^2(N^2-13N+34)}{115200 N^5} \,, \crcr
f_1^{\pmb S}= \frac{(N+8)^2(N^5+19N^4+50N^3-356N^2+8N+672)}{960N^4(N+4)^2(N+6)^2}  \, . 
\end{align}
More interestingly for the large $N$ limit itself, we can in fact evaluate $m_{\pmb P} = \underset{N\to \infty}{\lim} f_2^{\pmb P}(N)$ exactly, which we briefly sketch. Consider a melon map $\cG$. A leading order stranded configuration of $\cG$ with unbroken boundary graph can only have unbroken edges. Furthermore, once we fix the structure of the external strands, there is a unique choice of configuration of the five internal edges that makes the graph leading order. Since in both $\pmb S$ and $\pmb A$, an unbroken edge is weighted by the combinatorial factor $1/5!$ (up to a sign), we conclude that the contribution of $\cG$ to $m_{\pmb P}$ is $(1/5!)^5$. Given that there are $5!$ melon maps, this finally leads to: 
\begin{equation}\label{eq:m}
m_{\pmb S} = m_{\pmb A} = (1/5!)^4\,.
\end{equation}

Following \cite{Benedetti:2017qxl}, we denote $\Sigma^{(2)}=\lambda K f_1^{\pmb P} + \lambda^2 K^5 f_2^{\pmb P}$, $T^6$ the interaction of equation \eqref{eq:model1}, and define the subtracted interaction:
\begin{equation}
\frac{\lambda}{6N^5}:T^6:_K=\frac{\lambda}{6N^5}T^6-\frac{1}{2}\Sigma^{(2)}T\pmb P T\,.
\end{equation}
As is clear from the notation, this enforces a form of Wick ordering with respect to the covariance $K \pmb P$, which subtracts the double-tadpole and melon interactions. As a result, the model with covariance $K \pmb P$ and interaction $:T^6:_K$ can be expanded in terms of Feynman maps which have neither double-tadpoles nor melons subgraphs. 

The last step amounts to choosing $K$ in such a way that the model with covariance $K \pmb P$ and subtracted interaction is nothing but our original model of equation \eqref{eq:model}.
\begin{align}
F_{\pmb P}(\lambda)&=\frac{6}{N^{5}}\lambda\partial_{\lambda}\ln \left\lbrace \left[e^{\frac{1}{2}\partial_T \pmb P \partial_T}e^{\frac{\lambda}{6N^5}T^6}\right]_{T=0}\right\rbrace \crcr
&=\frac{6}{N^{5}}\lambda\partial_{\lambda}\ln \left\lbrace \left[e^{\frac{1}{2}\partial_T \pmb P \partial_T}e^{\frac{\lambda}{6N^5}:T^6:_K+\frac{\Sigma^{(2)}}{2}T\pmb P T}\right]_{T=0}\right\rbrace \crcr
&=\frac{6}{N^{5}}\lambda\partial_{\lambda}\ln \left\lbrace \left[e^{\frac{1}{2}\frac{1}{1-\Sigma^{(2)}}\partial_T \pmb P \partial_T}e^{\frac{\lambda}{6N^5}:T^6:_K}\right]_{T=0}\right\rbrace
\end{align}
From the last line, we need to ensure that $K = (1-\Sigma^{(2)})^{-1}$, which results in a polynomial equation for $K$:
\begin{equation}\label{eq:sde}
1-K+\lambda f_1^{\pmb P}K^2+\lambda^2f_2^{\pmb P} K^6=0 \,. 
\end{equation}
Equivalently, this equation can be deduced from the Schwinger-Dyson equation of melon and double-tadpole two-point functions, which we have illustrated in Fig.~\ref{fig:SDE}. For $N$ large and $\lambda$ small enough this equation admits a unique solution $K(\lambda,N)$ with the following properties: it is a series in both $\lambda$ and $1/N$, it is uniformly bounded in both $N$ and $\lambda$, and
\begin{equation}
\lim_{\lambda \rightarrow 0}\left[\lim_{N\rightarrow \infty}K(\lambda,N)\right]=1 \, . 
\end{equation} 
Furthermore, $\lim_{N\rightarrow \infty}K(\lambda,N)$ is a series in $\lambda^2$ which coincides with the generating function of Fuss-Catalan numbers $A_n (6,1)$.

\begin{figure}[htbp]
\centering
\includegraphics[scale=.8]{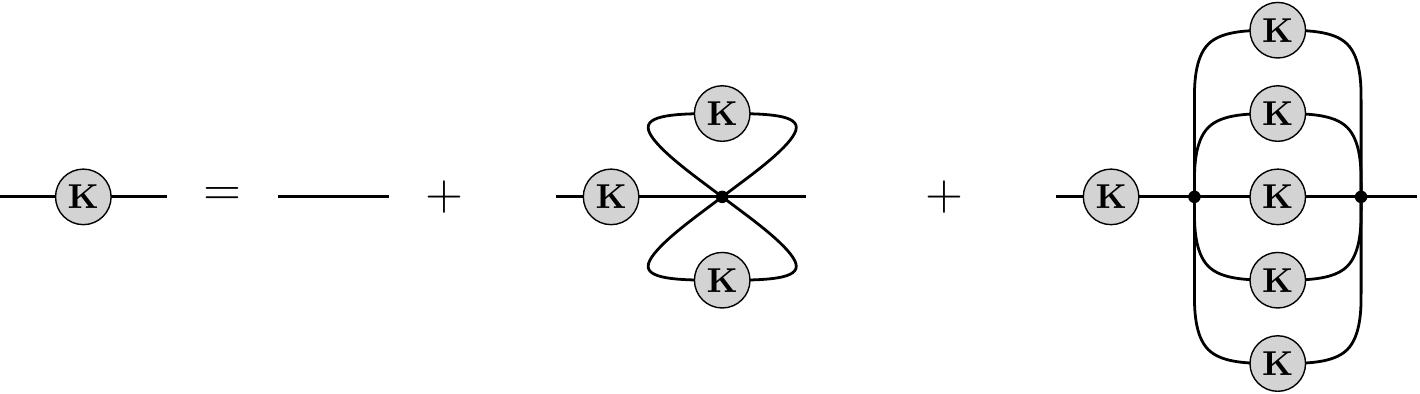}
\caption{Schematic structure of the Schwinger-Dyson equation resumming arbitrary two-point melon and double-tadpole maps (for simplicity, we are ignoring embedding information).}
\label{fig:SDE}
\end{figure}

We can then write equation \eqref{eq:model} as:
\begin{equation}
F_{\pmb P}(\lambda)=\frac{6}{N^{5}}\lambda\partial_{\lambda}\ln \left\lbrace \left[e^{\frac{K(\lambda,N)}{2}\partial_T \pmb P \partial_T}e^{\frac{\lambda}{6N^5}:T^6:_{K(\lambda,N)}}\right]_{T=0}\right\rbrace\,,
\end{equation}
and obtain the looked-for perturbative expansion in terms of Feynman maps with no double-tadpoles or melons:
\begin{equation}
F_{\pmb P}(\lambda)=\sum_{\substack{\hat{G} \text{ connected, rooted }\\  \text{with no double-tadpoles or melons}}} \lambda^{V(\hat{G})}\left[ K(\lambda,N) \right]^{U(\hat{G})+B_1(\hat{G})+B_2(\hat{G})}\mathcal{A}(\hat{G})\,.
\label{eq:model_subtracted}
\end{equation}
In this equation, $\mathcal{A}$ designates the same amplitude map as defined in Eq.~\eqref{eq:ampli}.
Given that $K(\lambda,N)$ is also a series in $1/N$, the $1/N$ expansion in Theorem \ref{theorem_princ} follows from Remark \ref{rem:broken_unbroken} and from Proposition \ref{prop:positive_degree}.

\section{Non-negativity of the degree}\label{sec:deletions}

In this section, we prove that the degree of a stranded graph with no melon and no double-tadpole is non-negative. Because the distinction between graphs and embedded graphs does not matter for this purpose, we will ignore it. In particular, our figures should now be understood as representing equivalent classes of maps which only differ through their embedding. 

\subsection{Flip distance between boundary graphs and scaling bounds}\label{sec:flip_distance}

In the following, it will be convenient to extract large $N$ scaling information by direct inspection of the boundary graph of a given stranded configuration. 

To this effect, we first introduce combinatorial moves acting on pairs of edges in a boundary graph, which we call \emph{flips}. Given two distinct edges $e_1$ and $e_2$ in a boundary graph $B$, a flip amounts to: 1) cutting $e_1$ and $e_2$ open; and 2) recombining the resulting four half-edges in one of two possible channels, to obtain a new boundary graph $\tilde{B}$. This is illustrated in Fig.~\ref{fig:flips_boundary}. It is easy to see that the set of (not necessarily connected) boundary graphs with prescribed number of vertices is stable under flips. Moreover, these moves are ergodic in this space: given two $5$-regular graphs, it is always possible to transform one into the other through a finite number of successive flips. As a result, we can introduce a notion of \emph{flip distance} between such graphs.  

\begin{figure}[htbp]
\centering
\includegraphics[scale=1]{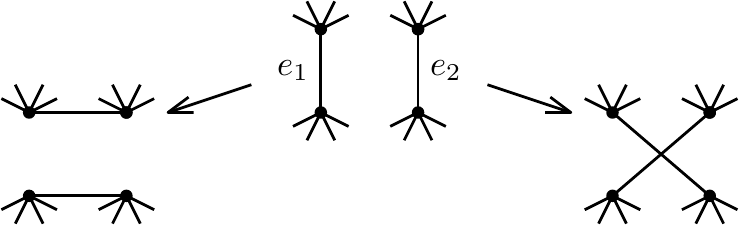}
\caption{The two possible flips of edges $e_1$ and $e_2$ in a boundary graph.}
\label{fig:flips_boundary}
\end{figure}

\begin{definition}
Let $B_1$ and $B_2$ be boundary graphs with $n\geq 2$ vertices. We define the \emph{flip distance} between $B_1$ and $B_2$, denoted by $d(B_1 , B_2)$, as the minimal number of successive flips required to map $B_1$ to $B_2$. 
\end{definition}
\nomenclature[c]{\textbf{Flip distance.}}{Minimal number of successive flips required to map two boundary graphs; see Sec.~\ref{sec:flip_distance}.}
It is elementary to check that $d$ defines a proper notion of distance on the space of boundary graphs with $n$ vertices. By convention, we will also postulate that $d(B_1 , B_2) = \infty$ whenever $B_1$ and $B_2$ do not have the same number of vertices. 
The relation between flip distance and scaling is captured by the following proposition.
\begin{proposition}\label{propo:general_deletion}
Consider a stranded graph $G$, and a strict subgraph $S \subset G$. Let $B$ be a boundary graph such that $d(B, S_\partial) < \infty$. Then, there exists a stranded graph $S'$ such that $S'_\partial = B$ and
\begin{equation}\label{propo:bound1}
\vert F(G') - F(G) \vert \leq \vert F(S) - F(S') \vert + d(B, S_\partial)\,,
\end{equation}
where $G'$ is the graph obtained by substitution of $S$ by $S'$ into $G$.

In particular, if $G$ and $G'$ contain only unbroken edges, $F(S') = 0$, and $G'$ remains connected, we find:
\begin{equation}\label{propo:bound2}
\omega(G) \geq \omega(G') + 5 \left( V(S) - V(S') \right) - F(S) - d(S_\partial , S'_\partial )\,.
\end{equation}
\end{proposition}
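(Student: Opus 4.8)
The plan is to prove the two displayed inequalities in turn, the first by tracking how faces of $G$ can be destroyed or created under the combinatorial surgery that transforms $S_\partial$ into $B$ and then $S$ into $S'$, and the second as a direct algebraic consequence of the first together with the degree formula \eqref{eq:degree}.

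First I would set up the surgery carefully. A face of $G$ is a closed cycle of strands; each such cycle either lies entirely inside $S$ (an internal face of $S$), entirely outside $S$, or alternates between arcs inside $S$ and arcs outside, each inside-arc entering and leaving $S$ through a pair of external legs of $S$. The key observation is that the arcs of such a ``crossing'' face that lie outside $S$ are in canonical bijection with the edges of the boundary graph $S_\partial$, and the way these arcs are glued together through $S$ is exactly the combinatorial data that $S_\partial$ records (up to the permutation ambiguity inside each tensor, which is immaterial for counting closed cycles). Now a single flip on two edges $e_1,e_2$ of a boundary graph $B'$ recombines four half-edges; performed on the actual stranded graph, this re-routes the corresponding external strands of the subgraph, and such a local recombination of two strands can change the number of closed cycles they belong to by at most one — either two cycles merge into one, or one splits into two, or the count is unchanged. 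Hence each of the $d(B,S_\partial)$ flips needed to bring $S_\partial$ to $B$ changes $F$ by at most one in absolute value. Once the external wiring matches $B$, we are free to choose any stranded filling $S'$ with $S'_\partial = B$; replacing the internal structure of $S$ by that of $S'$ changes only the internal faces and the way the crossing faces are routed inside the subgraph, and a bound of the form $|F(G')-F(G)| \le |F(S)-F(S')| + d(B,S_\partial)$ follows by combining the ``internal'' discrepancy $|F(S)-F(S')|$ with the per-flip bound. I expect this bookkeeping — in particular, being scrupulous that no face is double-counted and that the flip/cycle-count bound is genuinely tight at one — to be the main obstacle; everything else is routine.

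For the second inequality I would specialize: assume $G,G'$ have only unbroken edges, so $B_1(G)=B_2(G)=B_1(G')=B_2(G')=0$ and the degree formula reduces to $\omega(G) = 5 + 5V(G) - F(G)$, and similarly for $G'$. Since $G'$ is obtained from $G$ by deleting $S$ and inserting $S'$, we have $V(G) - V(G') = V(S) - V(S')$. Subtracting the two degree formulas gives $\omega(G) - \omega(G') = 5\left(V(S)-V(S')\right) - \left(F(G)-F(G')\right)$. Applying \eqref{propo:bound1} with the hypothesis $F(S')=0$, we get $F(G)-F(G') \le F(G) - F(G') \le |F(G)-F(G')| \le F(S) + d(S_\partial, S'_\partial)$ (taking $B = S'_\partial$, which is legitimate since $d(S'_\partial,S_\partial)<\infty$ — both have the same number of vertices, namely the number of external legs, which is preserved under substitution). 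Substituting yields
\begin{equation*}
\omega(G) - \omega(G') \ge 5\left(V(S)-V(S')\right) - F(S) - d(S_\partial, S'_\partial)\,,
\end{equation*}
which is exactly \eqref{propo:bound2}. The connectedness hypothesis on $G'$ is what guarantees that $\omega(G')$ is well-defined as the degree of an honest (connected) stranded graph, so that the induction in the next sections can be applied to it; it plays no role in the counting itself.

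A couple of technical points I would flag along the way: the ambiguity ``up to permutations of indices in a same tensor'' in the definition of $G_\partial$ must be checked to not affect face counts — it does not, since permuting which external half-strands of a tensor are identified only relabels the endpoints of arcs and leaves the cycle structure of any collection of strands untouched; and one should note that the flip moves as defined act on the space of (possibly disconnected) $5$-regular graphs on a fixed vertex set, so $d(B,S_\partial)$ is always finite once the vertex counts agree, which is precisely why the hypothesis $d(B,S_\partial)<\infty$ is equivalent to $B$ and $S_\partial$ having the same number of vertices.
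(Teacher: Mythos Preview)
Your proposal is correct and follows essentially the same approach as the paper's proof: realize the boundary flips as cut-and-glue operations on external strands of $S$ (each changing $F(G)$ by at most one), then adjust internal faces to reach $S'$, and finally derive \eqref{propo:bound2} by subtracting degree formulas. Your write-up is more careful about the bookkeeping (e.g.\ the role of the permutation ambiguity in $G_\partial$ and of the connectedness hypothesis), but the argument is the same; one small point worth tightening is that in the paper $S'$ is \emph{constructed} from the post-flip subgraph by adjusting only internal faces, so the external strand pairing is literally preserved (not just the boundary graph), which is what guarantees the crossing-face count is unchanged in that step.
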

\begin{proof}
The idea is to perform a succession of cut-and-glue operations on the internal strands of $S$ (while leaving the rest of the graph unchanged), until we obtain a new subgraph with boundary $B$. Since each cut-and-glue operation is reflected by a flip at the level of boundary graphs, this can be done in at most $d(B,S_\partial)$ steps. We then insert or remove internal faces to change $F(S)$ into $F(S')$, and obtain the target stranded subgraph $S'$. This last step is responsible for the first term in the right-hand-side of Eq.~\ref{propo:bound1}. Furthermore, it is clear that each cut-and-glue operation changes the number of faces in the graph by $-1$, $0$ or $1$, which explains the second term.  
Finally, if we assume that $G$ and $G'$ contain only unbroken edges, then $B_1 = B_2 =0$ for both graphs, and together with $F(S')=0$, Eq.~\eqref{propo:bound2} follows from Eqs.~\eqref{propo:bound1} and \eqref{eq:degree}.
\end{proof}
Equation \eqref{propo:bound2} will be particularly relevant because it will allow us to derive inductive bounds of the form $\omega(G) \geq \omega(G')$ from the local combinatorial condition:
\begin{equation}
d(S_\partial , S'_\partial ) \leq  5 \left( V(S) - V(S') \right) - F(S) \,.
\end{equation}

As a simple illustration of equation \eqref{propo:bound1}, consider the boundary graph of a single edge $e$. If $e$ is doubly-broken, it is at flip distance one from the boundary graph of a broken edge, which is itself at flip distance one from the boundary graph of an unbroken edge; see Fig.~\ref{fig:propagator_bndy}. 

\begin{figure}[htbp]
\centering
\includegraphics[scale=1]{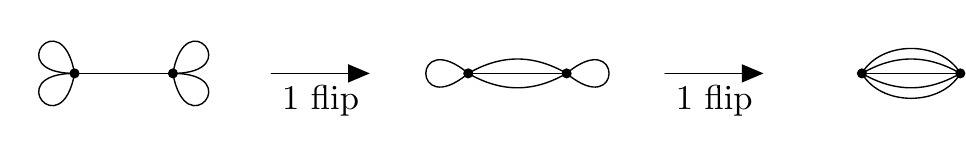}
\caption{Flip distance between the boundary graphs of the doubly-broken, broken and unbroken propagators.}
\label{fig:propagator_bndy}
\end{figure}

Therefore, we can replace a broken edge by an unbroken one in such a way that the number of faces decreases at most by one. As a result, the number of broken edges decreases by one, which implies that the degree \eqref{eq:degree} can only decrease. Likewise, we can replace a doubly-broken edge by an unbroken one in such a way that the number of faces decreases at most by two, whereas the number of doubly-broken edges decreases by one. Again, the degree can only decrease. This leads to the following observation.
\begin{remark}\label{rem:broken_unbroken}
For any stranded graph $G$, there exists a stranded graph $G'$ with $B_1(G')= B_2(G') = 0$, and such that
\begin{equation}
\omega(G)\geq \omega(G')\,.
\end{equation}
\end{remark}
Hence, for the purpose of finding lower bounds on the degree, we can restrict ourselves to graphs with only unbroken edges. This property is assumed in the remainder of the present section.

We now turn to the definition of basic combinatorial moves, which we will use in combination in the proof of subsection \ref{subsec:main_proof}. A first straightforward example concerns double-tadpoles.
\begin{lemma}\label{lemma:double_tadpole_deletion}
Consider a stranded graph $G$ with a double-tadpole subgraph $S$. It is possible to replace $S$ by an unbroken edge in such a way that the resulting graph $G'$ verifies:
\begin{equation}
\omega(G) \geq \omega(G') - 1\,.
\end{equation}
\end{lemma}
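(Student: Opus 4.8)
The plan is to use Proposition~\ref{propo:general_deletion} with $S$ the double-tadpole subgraph and $S'$ a single unbroken edge. First I would recall that, by Remark~\ref{rem:broken_unbroken}, we may assume $G$ (and hence $G'$) has only unbroken edges, so that Eq.~\eqref{propo:bound2} is applicable: it gives
\begin{equation*}
\omega(G) \geq \omega(G') + 5\left( V(S) - V(S') \right) - F(S) - d(S_\partial, S'_\partial)\,.
\end{equation*}
Here $V(S) = 1$ (one vertex) and $V(S') = 0$ (an unbroken edge is a ring-type subgraph with no vertex), so $5(V(S)-V(S')) = 5$. It therefore suffices to choose the internal strand structure of $S$ so that $F(S) + d(S_\partial, S'_\partial) \leq 6$.

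Next I would count the internal faces of a double-tadpole stranded configuration. As already noted in the paper (and formalizable via Appendix~\ref{ap:bounds}), a double-tadpole has at most four internal faces; these arise as two faces of length one (one from each self-loop) plus at most two faces of length two. So $F(S) \leq 4$. The boundary graph $S_\partial$ of a double-tadpole is a $5$-regular graph on two vertices (the two external legs), obtained by pinching the single traversing strand structure; depending on the configuration it is one of a small number of such graphs. The target $S'_\partial$ is the boundary graph of an unbroken edge, namely the two vertices joined by five parallel edges. The key sub-step is to verify that $d(S_\partial, S'_\partial) \leq 2$ for an appropriate choice of $S$ — indeed, one should pick the double-tadpole configuration whose boundary graph is closest to that of an unbroken edge. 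For instance, taking the bad double-tadpole whose boundary is that of a doubly-broken edge, Fig.~\ref{fig:propagator_bndy} shows it is at flip distance $2$ from the unbroken boundary graph; and since all double-tadpole boundary graphs on two vertices are mutually within flip distance a small constant, one can always realize a configuration with $d(S_\partial, S'_\partial) \leq 2$. Combining $F(S) \leq 4$ and $d(S_\partial, S'_\partial) \leq 2$ yields $F(S) + d(S_\partial, S'_\partial) \leq 6$, hence $\omega(G) \geq \omega(G') + 5 - 6 = \omega(G') - 1$.

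One detail to handle carefully is connectivity: Proposition~\ref{propo:general_deletion} requires $G'$ to remain connected after substituting $S'$ for $S$. Since a double-tadpole is a two-point subgraph (two external legs), replacing it by an edge connecting those same two legs cannot disconnect $G$, so this hypothesis is automatic. The main obstacle I anticipate is the bookkeeping of which double-tadpole stranded configuration to substitute and the precise computation of its boundary-graph flip distance to the unbroken edge: one must make sure that the configuration achieving $F(S) \leq 4$ simultaneously admits $d(S_\partial, S'_\partial) \leq 2$, rather than these two bounds being attained only separately. This is where a short explicit case check (or appeal to the bounds of Appendix~\ref{ap:bounds} together with Fig.~\ref{fig:double_tadpole} and Fig.~\ref{fig:propagator_bndy}) is needed; everything else is a direct substitution into Eq.~\eqref{propo:bound2}.
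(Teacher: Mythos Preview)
Your approach is correct and essentially the same as the paper's: apply Eq.~\eqref{propo:bound2} with $V(S)-V(S')=1$ and show $F(S)+d(S_\partial,B_u)\leq 6$. One clarification: $S$ is a \emph{given} stranded subgraph of $G$, so you cannot ``choose the internal strand structure of $S$'' or ``pick the double-tadpole configuration''; fortunately this doesn't matter, because your two bounds $F(S)\leq 4$ and $d(S_\partial,B_u)\leq 2$ hold \emph{universally} for every double-tadpole configuration (the latter since any two-point boundary graph is one of the three in Fig.~\ref{fig:propagator_bndy}), so your closing worry about the bounds being ``attained only separately'' is unfounded. The paper's own proof records the slightly sharper trade-off $F(S)\leq 4,\,3,\,2$ according as $S_\partial$ is of doubly-broken, simply-broken, or unbroken type (with $d=2,\,1,\,0$ respectively), but your cruder bound already suffices.
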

\begin{proof}
We notice that: $F(S)\leq 4$ if $S_\partial$ is of the doubly-broken type; $F(S)\leq 3$ if $S_\partial$ is of the simply-broken type; and $F(S)\leq 2$ otherwise. The result then follows from Eq.~\eqref{propo:bound2}.
\end{proof}
Less straightforward examples will be the focus of the next four subsections. 

\subsection{Single-tadpole deletions}

We first look for combinatorial moves that replace a single-tadpole subgraph with two (unbroken) propagators, and delete as few faces as possible. If we ignore for the moment the permutations labeling the two edges after the deletions, there are exactly three ways of doing so, which amount to a choice of pairing of the external legs of the subgraph: we call these \textit{deletion channels}, or simply \emph{channels}. They are the \emph{parallel} (pairing $(a,c)$ and $(b,d)$), \emph{cross} (pairing $(a,d)$ and $(b,c)$) and \emph{orthogonal} (pairing $(a,b)$ and $(c,d)$) channels, as illustrated in Fig.~\ref{fig:channel_single_tadpole}. Note that this nomenclature is purely conventional: it depends on an arbitrary labeling of the external legs of the tadpole. In the following, we will fix a canonical labeling for each possible structure of the boundary graph. 

\begin{figure}[htbp]
\centering
\includegraphics[scale=1]{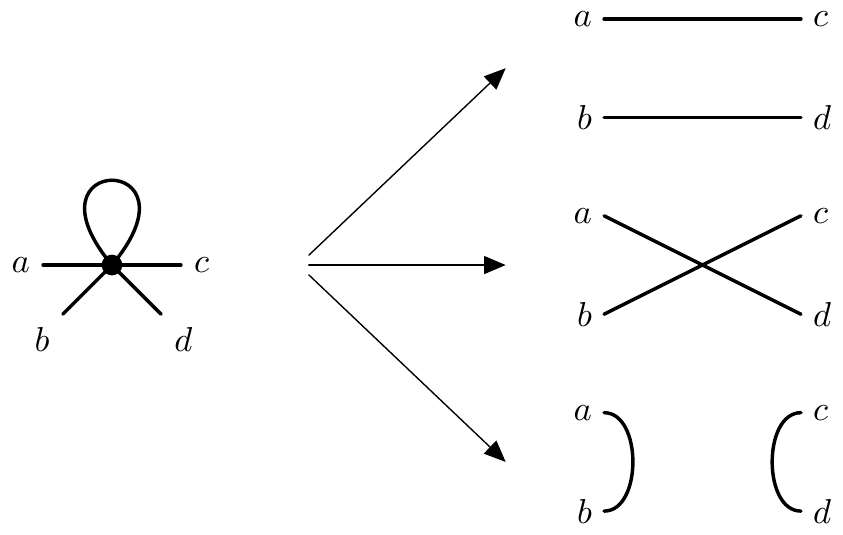}
\caption{The three deletion channels of a single-tadpole. From top to bottom: parallel, cross and orthogonal channels.}
\label{fig:channel_single_tadpole}
\end{figure}

To find a suitable deletion along the lines of Proposition \ref{propo:general_deletion}, we first need to determine the structure of the boundary graph $S_\partial$. Up to a relabeling of the vertices, we find the five possibilities represented in Fig.~\ref{fig:tadpole_config}. Indeed, first notice that the structure of the vertex imposes the presence of a $K_4$ subgraph (the complete graph on $4$ vertices). We have represented this subgraph in grey in Fig.~\ref{fig:tadpole_config}. We are left with a choice of pairing of eight remaining half-edges (two per vertex), to form the four edges that we have represented in black. 
The five configurations we end up with are distinguished by the lengths of the cycles formed by the black edges, and can be labeled by the partitions of $4$. Indeed, we have a budget of four edges, which can be split up into: four cycles of length one ($1+1+1+1$); two cycles of length one and one of length two ($1+1+2$); one cycle of length one and one of length three ($1+3$); two cycles of length two ($2+2$); or one cycle of length four ($4$). 

\begin{figure}[htbp]
\centering
\begin{tabular}{ccccc}
\subfloat[1+1+1+1]{\includegraphics[width=0.15\textwidth]{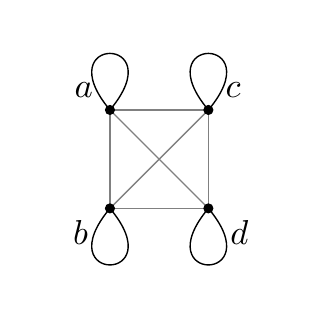}}
&
\subfloat[1+1+2]{\includegraphics[width=0.15\textwidth]{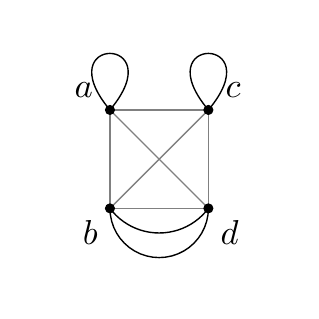}}
&
\subfloat[1+3]{\includegraphics[width=0.15\textwidth]{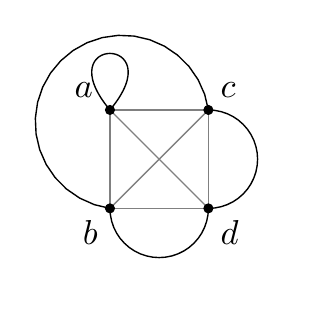}}
&
\subfloat[2+2\label{fig:tadpole_config_d}]{\includegraphics[width=0.15\textwidth]{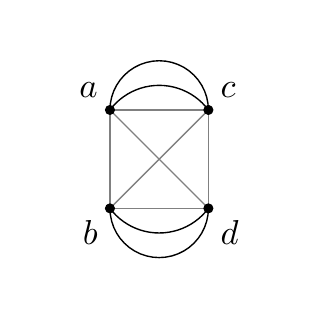}}
&
\subfloat[4\label{fig:tadpole_config_e}]{\includegraphics[width=0.15\textwidth]{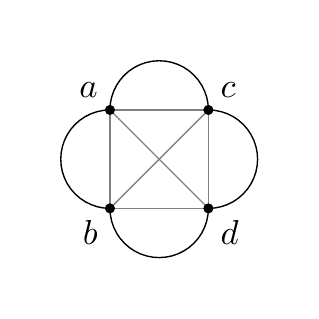}}
\end{tabular}
\caption{The five possible boundary graphs of a single-tadpole.}
\label{fig:tadpole_config}
\end{figure}

We can now write the following Lemma.
\begin{lemma}\label{lemma:single_tadpole}
Let $G$ be a stranded graph, and $S$ a strict single-tadpole subgraph of $G$. Call $G'$ the graph obtained after a deletion of $S$ in the channel $c$, and assume that $G'$ remains connected. 
\begin{enumerate}
\item If $S_\partial$ is in the configuration $1+1+2$, it is possible to choose $G'$ such that:
\begin{enumerate}
\item $\omega(G) \geq \omega(G') + 1$ 
when $c$ is the parallel channel;
\item $\omega(G) \geq \omega(G') - 1$ 
when $c$ is any other channel.
\end{enumerate}
\item If $S_\partial$ is in any other configuration, it is possible to choose $G'$ such that
$\omega(G) \geq \omega(G')$.
\end{enumerate}
\end{lemma}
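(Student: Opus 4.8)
The strategy is to apply Proposition~\ref{propo:general_deletion} — more precisely the bound \eqref{propo:bound2} — to the single-tadpole subgraph $S$, using the five possible boundary structures of Fig.~\ref{fig:tadpole_config} as an organizing principle. Since we may assume (Remark~\ref{rem:broken_unbroken}) that $G$ has only unbroken edges, and since the deleted subgraph $S'$ consists of two unbroken propagators with $F(S')=0$, Eq.~\eqref{propo:bound2} reduces to $\omega(G) \geq \omega(G') + 5(V(S)-V(S')) - F(S) - d(S_\partial, S'_\partial)$. Here $V(S)=1$ and $V(S')=0$ (two propagators carry no vertex), so the inequality becomes $\omega(G) \geq \omega(G') + 5 - F(S) - d(S_\partial, S'_\partial)$. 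Thus everything comes down to, for each of the five configurations and each admissible deletion channel $c$, (i) bounding $F(S)$ above, i.e. counting the maximal number of internal faces the single-tadpole stranded graph can carry given its boundary structure, and (ii) computing the flip distance $d(S_\partial, S'_\partial)$, where $S'_\partial$ is the boundary graph of the two-propagator configuration dictated by channel $c$.

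First I would enumerate the boundary graphs: as the excerpt already explains, the vertex forces a $K_4$ on the four external vertices, and the remaining eight half-edges pair up into four edges whose cycle-type is a partition of $4$, giving the five cases $1+1+1+1$, $1+1+2$, $1+3$, $2+2$, $4$. Next, for the face count $F(S)$, I would use the bounds from Appendix~\ref{ap:bounds} (which the excerpt invokes for exactly this kind of estimate): the single-tadpole has one self-loop and two external legs, so $15V=15$ corners, of which the self-loop "uses up" some, and the number of internal faces is controlled by how the self-loop strands close — a short face of length one appears precisely when the self-loop pairs a strand with itself at the vertex. The worst case (largest $F(S)$) will be the configuration that allows the most length-one and length-two internal faces; the $1+1+2$ configuration is singled out precisely because it is the one where $F(S)$ can be large enough that the naive bound $5 - F(S) - d \geq 0$ fails, forcing the case distinction on the channel. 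For the flip-distance computation, I would draw $S_\partial$ (the $K_4$ plus the partition-of-4 edges on the external vertices) and, for each channel, draw the target $S'_\partial$ (the boundary of two disjoint unbroken propagators on the same four vertices, which is just a perfect matching pattern determined by the channel), then exhibit an explicit short sequence of flips realizing the transformation and argue minimality by a parity or connectivity obstruction. In the $1+1+2$ case, the parallel channel should land at flip distance achieving the extra $+1$, while the cross and orthogonal channels cost one more flip, producing the $-1$; in all other configurations, the combination $5 - F(S) - d(S_\partial,S'_\partial)$ works out to be $\geq 0$ for at least one (in fact, a suitable) channel, which is all the statement claims ("it is possible to choose $G'$").

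The remaining subtlety is the one flagged in Proposition~\ref{propo:general_deletion}: after the cut-and-glue operations we must land on a stranded subgraph with only unbroken edges and $F(S')=0$, and $G'$ must stay connected — connectivity is hypothesized in the Lemma, so I only need to handle the "unbroken with $F(S')=0$" part, which is automatic since two propagators joined to the ambient graph in the prescribed channel can always be taken unbroken and devoid of internal faces. One should also double-check that the permutations labeling the two replacement edges (which we ignored when defining the three channels) can be absorbed into the flip sequence without extra cost — this follows because flips act on boundary graphs, which do not see the permutation labels of unbroken edges, so any unbroken labeling is reachable at no additional flip-distance penalty.

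\textbf{Main obstacle.} I expect the principal difficulty to be the precise face-counting for $F(S)$ in each configuration — ensuring the upper bounds are tight enough that the arithmetic $5 - F(S) - d$ comes out with exactly the claimed sign, and in particular verifying that the $1+1+2$ configuration is genuinely the only one requiring a channel-dependent statement. The flip-distance estimates themselves are elementary (each is $0$, $1$, or $2$), but getting the *combination* with $F(S)$ right, uniformly over the five cases and three channels, is where the bookkeeping is delicate; this is presumably why the paper defers some of these cases to Lemma~\ref{lemma:particular_cases} and Appendix~\ref{app:particular}.
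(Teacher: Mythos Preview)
Your overall framework is exactly the paper's: apply Proposition~\ref{propo:general_deletion} in the form of Eq.~\eqref{propo:bound2}, which here reads $\omega(G)\geq\omega(G')+5-F(S)-d(S_\partial,B_c)$. However, you have misidentified where the work lies. The key observation you are missing is that $F(S)\leq 1$ \emph{uniformly}, for every boundary configuration: the $5$-simplex vertex has exactly one corner between any two of its half-edges, so between the two internal half-edges of the self-loop there is a single corner, and hence at most one internal face (necessarily of length one). There is no configuration-dependent face count to perform, and Appendix~\ref{ap:bounds} plays no role here. With $F(S)\leq 1$ in hand, the bound reduces to $\omega(G)\geq\omega(G')+4-d(S_\partial,B_c)$, and \emph{all} of the case analysis is in the flip distances.

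This also means your diagnosis of why $1{+}1{+}2$ is special is backwards: it is not that $F(S)$ is larger there, but that the flip distances are. In fact the relevant distances are not ``$0$, $1$, or $2$'' as you suggest --- the target $B_c$ consists of two disjoint $5$-fold multi-edges, while $S_\partial$ contains a $K_4$, so several flips are needed just to disconnect. The paper shows $d(S_\partial,B_c)\leq 4$ in the non-exceptional configurations (this already requires four flips for $1{+}1{+}1{+}1$), and organizes the casework via a distance graph among the five configurations combined with a few explicit bounds (e.g.\ $d\leq 2$ for $2{+}2$ in the parallel channel, $d\leq 3$ or $4$ for configuration $4$ depending on the channel). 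For $1{+}1{+}2$ one gets $d\leq 3$ in the parallel channel (hence the gain of $+1$) but only $d\leq 5$ in the others (hence the loss of $-1$). Finally, your remark that some cases are deferred to Lemma~\ref{lemma:particular_cases} and Appendix~\ref{app:particular} is off target: those concern larger two-point subgraphs ($H_0,\dots,H_{12}$) arising later in the induction, not the single-tadpole analysis.
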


\begin{proof}
The single-tadpole $S$ can support at most one internal face, and exactly one vertex is lost upon deletion. $G'$ being connected, Proposition~\ref{propo:general_deletion} guarantees that we can arrange the strands in such a way that:
\begin{equation}
\omega(G) \geq \omega(G') + 4 - d(S_\partial , B_c)\,,
\end{equation} 
where $B_c$ is the boundary graph characterizing the channel $c$. For instance, if $c$ is the parallel channel, $B_c$ is the four-vertex graph in which vertices $a$ and $c$ are connected by five edges, and likewise for vertices $b$ and $d$. It remains to bound the flip distance between $S_\partial$ and $B_c$. It helps to first determine the flip distance between the various possible configurations of $S_\partial$, which we have represented in Fig.~\ref{fig:distance_tadpole}. It is then apparent that the Lemma follows from the following sufficient conditions: 
\begin{itemize}
\item if $S_\partial$ is in the configuration $1+1+1+1$, $d(S_\partial , B_c)\leq 4$ for any $c$;
\item if $S_\partial$ is in the configuration $2+2$ and $c$ is the parallel channel, then $d(S_\partial , B_c)\leq 2$;
\item if $S_\partial$ is in the configuration $4$, $d(S_\partial , B_c)\leq 3$ if $c$ is the parallel or orthogonal channel, and $d(S_\partial , B_c)\leq 4$ otherwise;
\item if $S_\partial$ is in the configuration $3+1$ and $c$ is the cross channel, then $d(S_\partial , B_c)\leq 4$.
\end{itemize}

\begin{figure}[htbp]
\centering
\includegraphics[scale = 1]{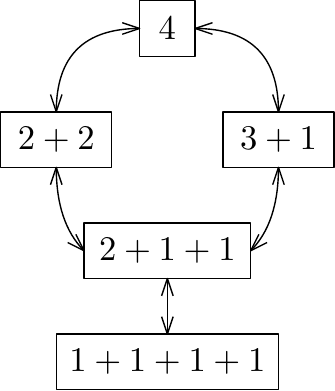}
\caption{Distance between the five single-tadpole configurations: any two partitions connected by an edge are at flip distance one from each other.}
\label{fig:distance_tadpole}
\end{figure}

If $S_\partial$ is in the configuration $1+1+1+1$, we need $2$ flips to disconnect the graph in the appropriate channel, and $2$ more flips to remove the self-loops. Hence, we have $d(S_\partial , B_c) \leq 4$
. This is illustrated in Fig.~\ref{fig:ex_deletion_tadpole}, for the parallel channel.

\begin{figure}[htbp]
\centering
\includegraphics[scale=1]{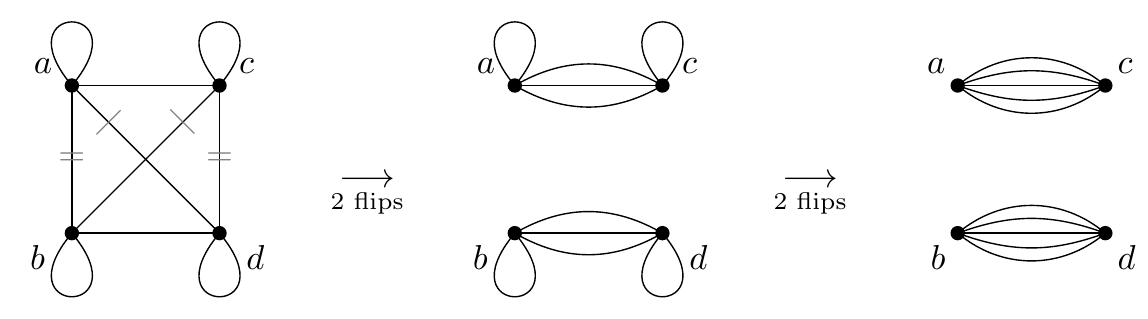}
\caption{Flip distance between a tadpole in the configuration $1+1+1+1$ and the parallel channel.}
\label{fig:ex_deletion_tadpole}
\end{figure}

If $S_\partial$ is in the configuration $2+2$ and $c$ is the parallel channel, we can infer that $d(S_\partial, B_c) \leq 2$ by first flipping the edges $(a,b)$ and $(c,d)$, then the edges $(a,d)$ and $(b,c)$.

Likewise, if $S_\partial$ is in the configuration $4$, we can show that $d(S_\partial, B_c) \leq 3$ if $c$ is the parallel or orthogonal channel. One needs an extra flip in the cross channel, because $a$ and $d$ (resp. $b$ and $c$) are initially connected by a single edge; hence, $d(S_\partial, B_c) \leq 4$  in that case.

If $S_\partial$ is in the configuration $3+1$ and $c$ is the cross channel, we need $3$ flips to disconnect. This has the effect of creating a second self-loop. We can then perform one more flip to remove the self-loops, and obtain the boundary graph $B_c$. As a result, $d(S_\partial , B_c) \leq 4$. This is illustrated in Fig.~\ref{fig:ex_deletion_tadpole2}.
\begin{figure}[htbp]
\centering
\includegraphics[scale=1]{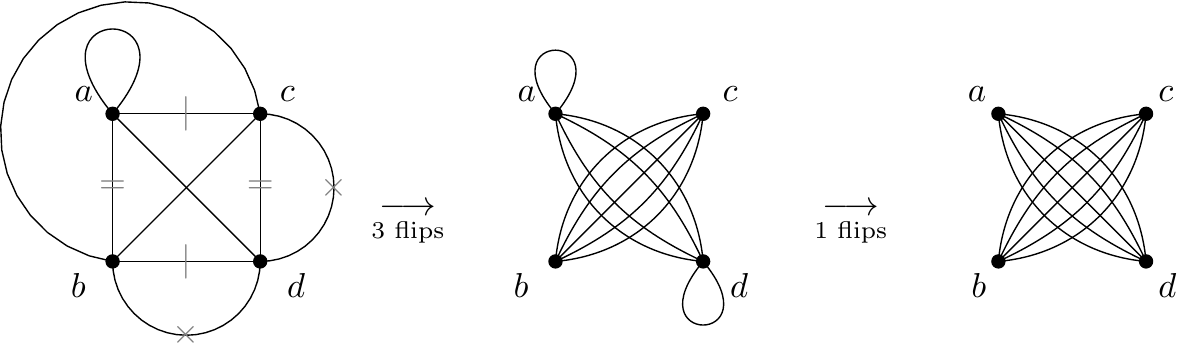}
\caption{Flip distance between a tadpole in the configuration $1+3$ and the cross channel.}
\label{fig:ex_deletion_tadpole2}
\end{figure}
\end{proof}

\subsection{Dipole deletions}

We will now look for combinatorial moves that replace a dipole subgraph with four (unbroken) propagators and delete as few faces as possible. In contrast to the single-tadpole deletions of the previous section, there are many more ways of doing so, leading to many more than three channels of deletions. However, for our purpose, it will be sufficient to consider only four of those channels. Indeed, all we need is a sufficiently rich set of deletion moves to ensure that, in all situations, at least one of them can be performed while maintaining our combinatorial constraints (connectedness, and the absence of melons or double-tadpoles). This subset of channels is presented in Fig.~\ref{fig:channel_dipole}. Note that, apart from the fact that the groups of half-edges $\{ 1, 2, 3, 4 \}$ and $\{ 5, 6, 7, 8 \}$ are attached to different vertices, the labeling is purely conventional at this stage. This will be taken advantage of and made more precise in the proof of Lemma \ref{lemma:dipole} (see also Fig.~\ref{fig:dipole_config}).

\begin{figure}[htbp]
\centering
\captionsetup[subfigure]{labelformat=empty}
\subfloat[]{\includegraphics[scale=1]{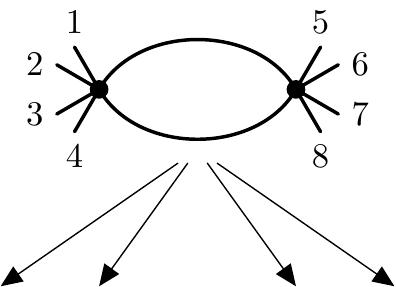}}
\\
\subfloat[(1)]{\includegraphics[scale=0.8]{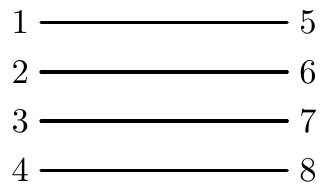}}
\hspace{1cm}
\subfloat[(2a)]{\includegraphics[scale=0.8]{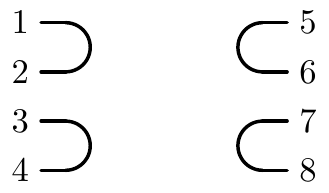}}
\hspace{1cm}
\subfloat[(2b)]{\includegraphics[scale=0.8]{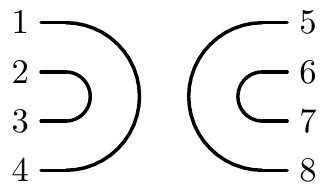}}
\hspace{1cm}
\subfloat[(2c)]{\includegraphics[scale=0.8]{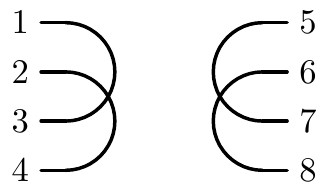}}
\caption{The four deletion channels we consider for a dipole; from left to right: channels $(1)$, $(2a)$, $(2b)$ and $(2c)$.}
\label{fig:channel_dipole}
\end{figure}

The reason for choosing these four channels is that, if we assume that the dipole is of type $I$, then at least one of them does not disconnect the graph. Indeed, suppose channel $(2a)$ disconnects. Then the graph is in either one of the configurations depicted in Fig.~\ref{fig:dipole_connected_a}, \ref{fig:dipole_connected_b} and \ref{fig:dipole_connected_c}, with subgraphs $A$ and $B$ not necessarily connected. 
\begin{itemize}
\item If it is in the first configuration, then channels $(2b)$ and $(2c)$ also disconnect but channel $(1)$ does not. Indeed, otherwise there would be generalized double-tadpoles on both vertices of the dipole, which means that the latter would be of type $II$.
\item If it is in the second configuration, channels $(2b)$ and $(2c)$ do not disconnect. Otherwise, the subgraphs $A$ and $B$ would have to be disconnected, in a way that would either generate two generalized double-tadpoles, or a generalized melon. Both cases are excluded given that the dipole is of type $I$.
\item If it is in the third configuration, suppose that channel $(2b)$ also disconnects. Then, the subgraph $B$ must be disconnected and the dipole is in the configuration of Fig.~\ref{fig:dipole_connected_d} with $A$, $B$ and $C$ subgraphs not necessarily connected. Then, channel $(2c)$ does not disconnect, otherwise $C$ would have to be disconnected and there would again be two generalized double-tadpoles. 
\end{itemize}
All in all, at least one channel does not disconnect if the dipole is of type $I$.
\begin{figure}[htbp]
\centering
\subfloat[]{\label{fig:dipole_connected_a}\includegraphics[scale=1]{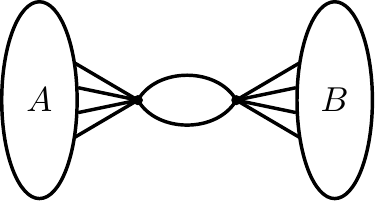}}
\hspace{1cm}
\subfloat[]{\label{fig:dipole_connected_b}\includegraphics[scale=1]{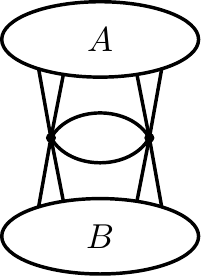}}
\hspace{1cm}
\subfloat[]{\label{fig:dipole_connected_c}\includegraphics[scale=1]{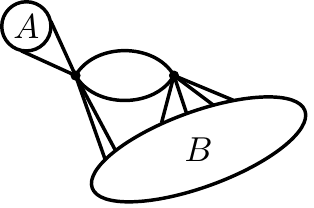}}
\hspace{1cm}
\subfloat[]{\label{fig:dipole_connected_d}\includegraphics[scale=1]{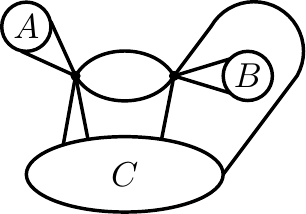}}
\caption{Possible configurations of a dipole which disconnects the graph upon deletion in channel $(2a)$.}
\label{fig:dipole_connected}
\end{figure}
  
The following lemma gives us tools to recursively remove dipoles from a stranded graph. 
\begin{lemma}\label{lemma:dipole}
Let $G$ be a stranded graph and $S$ a strict dipole subgraph of $G$. Call $G'$ the graph obtained after deletion of $S$ in the channel $c$, and assume $G'$ remains connected. There exists a conventional labeling of the external legs of $S$ (see Fig.~\ref{fig:channel_dipole}) such that: if $c$ is channel $(1),\,(2a),\,(2b)$ or $(2c)$, then it is possible to choose $G'$ such that 
$\omega(G)\geq \omega(G')$. 
\end{lemma}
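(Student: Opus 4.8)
The plan is to apply Proposition~\ref{propo:general_deletion} with $S$ the dipole subgraph and $S'$ the two-edge (or four-edge) configuration determined by the channel $c$, exactly as in the single-tadpole analysis of Lemma~\ref{lemma:single_tadpole}. A dipole has $V(S)=2$ vertices and, in the deleted graph, $V(S')=0$; moreover $S$ has at most two internal faces (each vertex carries $(5\cdot 4)/2=10$ internal corners, but the two-edge structure linking the two vertices forces any internal face to traverse both vertices, so there are at most $10$ internal faces in principle — however only faces that close entirely inside $S$ count, and a short case check shows the relevant bound is $F(S)\le 2$ after we fix the boundary structure). Plugging into Eq.~\eqref{propo:bound2} with $F(S')=0$ gives
\begin{equation*}
\omega(G)\geq \omega(G') + 10 - F(S) - d(S_\partial, B_c)\,,
\end{equation*}
so it suffices to show $d(S_\partial, B_c) \leq 10 - F(S)$ for at least one of the four channels $B_c \in \{(1),(2a),(2b),(2c)\}$, under the chosen conventional labeling.

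The first key step is to enumerate the possible boundary graphs $S_\partial$ of a dipole, in analogy with Fig.~\ref{fig:tadpole_config}. The dipole vertex structure forces a fixed skeleton of edges coming from the two $K_6$'s (minus the two internal edges), leaving a bounded number of ways to pair up the remaining external half-strands; I expect a short finite list of configurations, which I would organize by the cycle structure of the "free" edges just as the single-tadpole case was organized by partitions of $4$. For each such configuration of $S_\partial$, and for each admissible channel, I would then exhibit an explicit short sequence of flips realizing $B_c$, bounding $d(S_\partial, B_c)$; this is the combinatorial heart of the argument and mirrors the flip-distance diagrams (Figs.~\ref{fig:distance_tadpole}, \ref{fig:ex_deletion_tadpole}, \ref{fig:ex_deletion_tadpole2}). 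The "conventional labeling" freedom is used precisely to align the worst-case configurations of $S_\partial$ with the channel(s) that are cheapest in flip distance — this is why the statement only claims the bound for a suitable labeling rather than all labelings.

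The main obstacle is bookkeeping: unlike the single-tadpole (one internal face, three channels, five configurations), the dipole has more boundary configurations and we are restricting to only four of the many possible deletion channels. The delicate point is to verify that, for every possible $S_\partial$, the chosen labeling makes at least one of $(1),(2a),(2b),(2c)$ simultaneously (i) non-disconnecting — which is already guaranteed for type-$I$ dipoles by the configuration analysis around Fig.~\ref{fig:dipole_connected}, and will be imposed as the hypothesis that $G'$ stays connected — and (ii) within flip distance $10 - F(S)$ of $S_\partial$. I would handle this by a finite case analysis over the list of boundary configurations, in each case drawing the explicit flip sequence and checking the count; the generous slack of $10 - F(S)$ (at least $8$ when $F(S)\le 2$) should make all cases comfortably succeed, the only subtlety being configurations where $S_\partial$ already has the full $K_8$-like connectivity that needs several flips to break apart into the two $K_5$-blocks required by channel $(1)$. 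I expect that even those worst cases cost no more than $8$ flips, so the inequality $\omega(G)\ge\omega(G')$ follows.
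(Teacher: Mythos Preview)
Your overall strategy (apply Proposition~\ref{propo:general_deletion}, enumerate boundary graphs, bound flip distances) is the right one, but two genuine gaps separate your proposal from a working proof.

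First, you misread the quantifier in the statement. The lemma asserts that, after fixing one conventional labeling, the bound $\omega(G)\ge\omega(G')$ holds for \emph{each} of the four channels $(1),(2a),(2b),(2c)$ (whenever $G'$ stays connected), not merely for one of them. So you must show $d(S_\partial,B_c)\le 10-F(S)$ for all four $B_c$ simultaneously, with the same labeling. This is exactly why the case analysis in the paper is more involved than you anticipate: one has to check every configuration against every channel, and in at least one case ($6+2$ vs.\ channel $(2c)$) the bound $d\le 9$ is saturated, so there is no ``generous slack''.

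Second, and more importantly, you are missing the key reduction that makes the boundary-graph enumeration tractable. A dipole has exactly one internal corner at each vertex, so $F(S)\in\{0,1\}$ (your $F(S)\le 2$ is loose). The paper does \emph{not} attempt to enumerate $S_\partial$ when $F(S)=0$; instead it first reconfigures the two internal corners into a single internal face (Fig.~\ref{fig:dipole_reduction}), producing $\tilde G$ with $F(\tilde S)=1$ and $\omega(G)\ge\omega(\tilde G)$, and then works exclusively in the case $F(S)=1$. This reduction is what forces every remaining external strand to run from one $K_4$ block to the other, collapsing the enumeration to exactly five boundary graphs (labeled by even partitions of $8$). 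Without it, the two internal corners lie on external strands and the list of possible $S_\partial$ is substantially longer and harder to organize; your ``I expect a short finite list'' would not materialize in the form you imagine. The paper then proves $d(S_\partial,B_c)\le 9$ for all five configurations and all four channels via an explicit flip-distance diagram (Fig.~\ref{fig:distance}) plus a handful of direct constructions.
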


\begin{proof}
We have two cases to consider: the dipole contains one internal face or none. In the latter case ($F(S)=0$), the two corners of the dipole can either be on the same external face or on two distinct ones, as represented in Fig.~\ref{fig:dipole_reduction}. In both those cases, this subset of strands can be reconfigured in such a way as to ensure that the dipole contains an internal face. Moreover, such a move does not affect the rest of the graph. We obtain in this way a graph $\tilde{G}$ containing a dipole subgraph $\tilde{S}$ such that: $F(\tilde{S})=1$ and $\omega(G) \geq \omega(\tilde{G}) + 1$. It is then clear that the Lemma will hold in general if we can prove it for configurations like $\tilde{G}$, which we now turn too.
\begin{figure}[htbp]
\centering
\includegraphics[scale=.8]{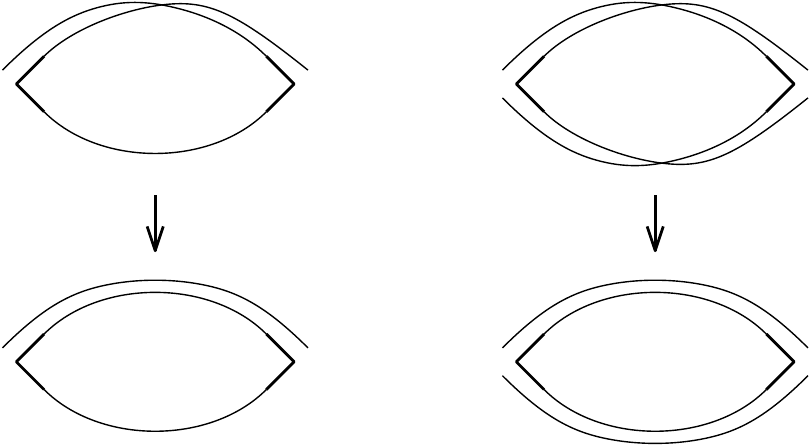}
\caption{Two types of configurations of the dipole when $F(S)=0$: the two internal corners can lie on the same external face (left), or on two distinct ones (right). In both cases, we can reconfigure this subset of strands in such a way as to ensure $F(S)=1$, without affecting the rest of the graph $G$.}
\label{fig:dipole_reduction}
\end{figure}

We can thus assume, without loss of generality, that $F(S)=1$, which makes it easier to determine the structure of the boundary graph $S_\partial$. We can proceed similarly as for single-tadpoles, and associate a $K_4$ subgraph to each of the two vertices in the dipole, which we represent in gray. We are left with a choice of pairing between eight additional half-edges, four of them attached to each $K_4$ subgraph, which we can represent in black. Given that the two internal corners of the dipole have been used to build up the internal face, any pairing of the black half-edges must connect one $K_4$ subgraph to the other. Consequently, we can again classify the allowed contractions in terms of the number and lengths of cycles with support on black edges only. The resulting boundary graphs can be labeled by partitions of 8 into even integers, yielding five possibilities: $8 = 6 + 2 = 4 + 4 = 4 + 2 + 2 = 2 + 2 + 2 + 2$. See Fig.~\ref{fig:dipole_config}.

\begin{figure}[htbp]
\centering
\begin{tabular}{ccc}
\subfloat[2+2+2+2]{\includegraphics[width=0.25\textwidth]{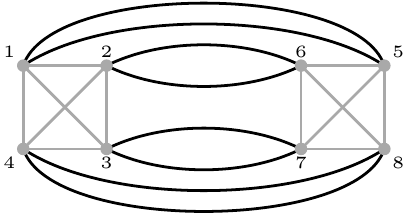}}
&
\subfloat[4+4]{\includegraphics[width=0.25\textwidth]{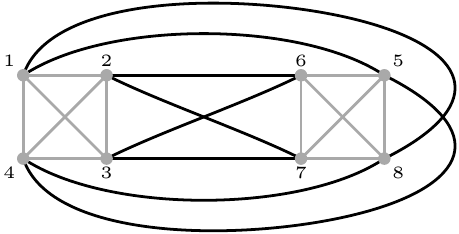}}
&
\subfloat[6+2]{\includegraphics[width=0.25\textwidth]{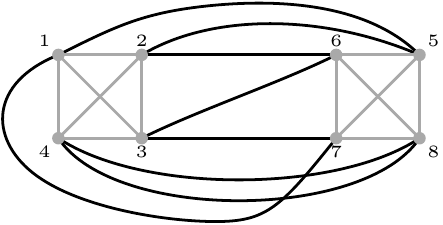}} \\
\subfloat[8]{\includegraphics[width=0.25\textwidth]{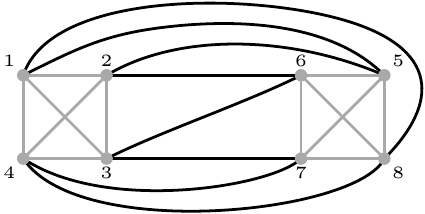}}
& &
\subfloat[4+2+2]{\includegraphics[width=0.25\textwidth]{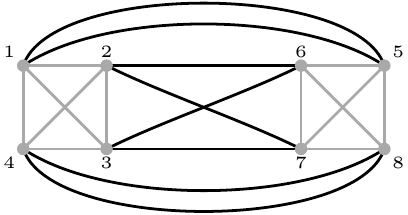}} 
\end{tabular}
\caption{The five possible boundary graphs of a dipole.}
\label{fig:dipole_config}
\end{figure}

Given that $G'$ remains connected, $F(S)=1$ and $V(S)=2$, Proposition~\ref{propo:general_deletion} guarantees that we can arrange the strands in such a way that:
\begin{equation}
\omega(G) \geq \omega(G') + 9 - d(S_\partial , B_c)\,,
\end{equation} 
where $B_c$ is the boundary graph characterizing the channel $c$. Hence, the looked-for bound will follow from $d(S_\partial , B_c) \leq 9$, which we now prove. 

We can first determine the flip distance between the five boundary graphs of Fig.~\ref{fig:dipole_config}, which is reported in Fig.~\ref{fig:distance}. As a result, it is sufficient to prove that:
\begin{itemize}
\item if $c=1$ and $S$ is in the configuration $2+2+2+2$, then $d(S_\partial , B_c )\leq 6$;
\item if $c=2a$ and $S$ is in the configuration $4+2+2$ or $4+4$, then $d(S_\partial , B_c )\leq 8$;
\item if $c=2b$ and $S$ is in the configuration $8$ or $4+2+2$, then $d(S_\partial , B_c )\leq 8$;
\item if $c=2c$ and $S$ is in the configuration $2+2+2+2$ or $4+4$, then $d(S_\partial , B_c )\leq 8$; if, on the other hand, $S$ is in configuration $6+2$, then $d(S_\partial , B_c )\leq 9$.
\end{itemize}

\begin{figure}[htbp]
\centering
\includegraphics[scale = 1]{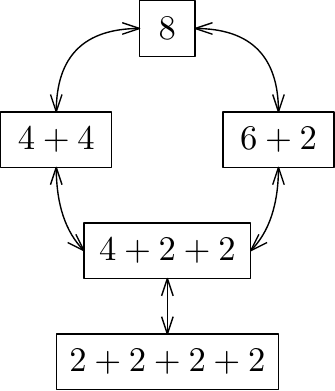}
\caption{Distance between the five dipole configurations: any two partitions connected by an edge are at flip distance one from each other.}
\label{fig:distance}
\end{figure}

\textbf{Channel $1$ (parallel channel).} To map the $2+2+2+2$ configuration to the parallel configuration, we need to cut all twelve grey edges. This can be achieved in $6$ flips. The other four dipole configurations being at distance at most $3$ from $2+2+2+2$, we always have $d(S_\partial , B_1) \leq 9$.

\textbf{Channel $(2a)$.} Let us first look at the grey edges. 
In the two configurations $4+4$ and $4+2+2$, we need to cut eight of the twelve grey strands; this requires $4$ flips. 
We then have to perform four more flips on pairs of black edges to obtain the boundary graph $B_c$. We thus have $d(S_\partial , B_c) \leq 8$ in the configurations $4+4$ and $4+2+2$. Since $8$ is at distance one from $4+4$, while $6+2$ and $2+2+2+2$ are at distance one from $4+2+2$, we conclude that $d(S_\partial , B_c) \leq 9$ in all cases.

\textbf{Channel $(2b)$.} We again need to perform $4$ flips on grey edges. In configurations $8$ and $4+2+2$, we can then obtain $B_{2b}$ after $4$ more flips on black edges. As $4+2+2$ is at distance one from $4+4$, $6+2$ and $2+2+2+2$, we always have $d(S_\partial , B_c) \leq 9$. 

\textbf{Channel $(2c)$.} As before, we have to cut eight of the twelve grey strands, which can be achieved in $4$ flips. Then, for the configurations $2+2+2+2$ and $4+4$, we can implement $4$ additional flips on black strands to obtain the boundary graph $B_{2c}$. Given that $8$ and $4+4$ are at distance one from either $2+2+2+2$ or $4+4$, we infer that $d(S_\partial , B_{2c}) \leq 9$ for these four configurations.  
We can finally check that the configuration $6+2$ can also be mapped to $B_{2c}$ in $9$ flips (4 flips on grey edges, and $5$ on black edges), as illustrated in Fig.~\ref{fig:moves_62}).
\begin{figure}[htbp]
\centering
\includegraphics[scale=1.2]{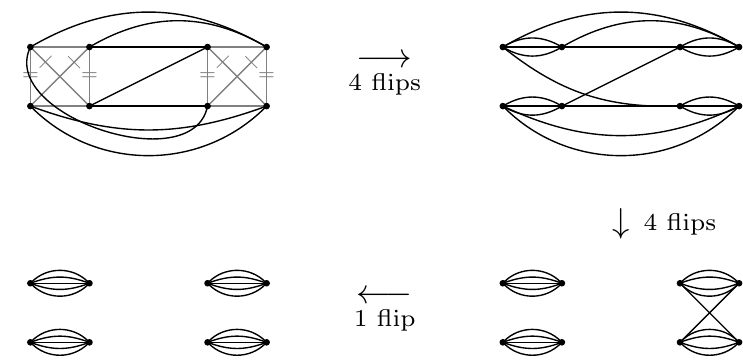}
\caption{Graphical proof that the configuration $6+2$ is at flip distance (at most) $9$ from $B_{2c}$.}
\label{fig:moves_62}
\end{figure}
\end{proof}

\subsection{Dipole-tadpole and quartic rung deletions}

We will now proceed with the deletion of the four-point \textit{dipole-tadpole} subgraphs. As for the deletion of single-tadpoles, there are three possible channels: parallel, cross and orthogonal. They are represented in Fig.~\ref{fig:dipole_tadpole_channel}.

\begin{figure}[htbp]
\centering
\includegraphics[scale=1]{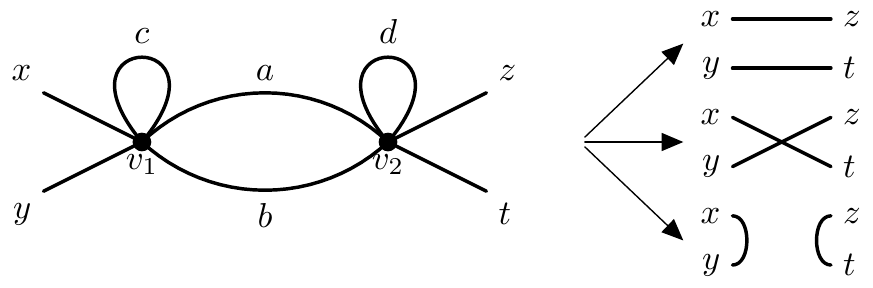}
\caption{The three channels of deletion for the dipole-tadpole four-point subgraph.}
\label{fig:dipole_tadpole_channel}
\end{figure}

In the following Lemma, we prove that it is always advantageous to delete a dipole-tadpole in the orthogonal channel, and in at least one of the parallel or cross channels. 
\begin{lemma}
Let $G$ be a stranded graph and $S$ a strict dipole-tadpole subgraph of $G$. Call $G'$ the graph obtained after deletion of $S$ in the channel $c$ and assume $G'$ remains connected. 

\begin{itemize}
\item[(a)] Suppose $c$ is the orthogonal channel. Then, it is possible to choose $G'$ such that $\omega(G)\geq \omega(G')$. 

\item[(b)] Suppose $c$ is the parallel (resp. cross) channel, and call $G''$ the graph obtained after deletion of $S$ in the cross (resp. parallel) channel. Then, if it is not possible to choose $G'$ such that $\omega(G)\geq \omega(G')$, provided that $G''$ remains connected, it is possible to choose $G''$ such that $\omega(G)\geq \omega(G'')$.

\end{itemize}
\label{lemma:dipole_tadpole}
\end{lemma}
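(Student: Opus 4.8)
The plan is to apply Proposition~\ref{propo:general_deletion}, following the template of Lemmas~\ref{lemma:double_tadpole_deletion}, \ref{lemma:single_tadpole} and \ref{lemma:dipole}. A dipole-tadpole has $V(S)=2$, and each of its three deletion channels replaces $S$ by two propagators, so $V(S')=0$. Assuming (legitimately, by Remark~\ref{rem:broken_unbroken}) that $G$ and $G'$ contain only unbroken edges, choosing the two replacement propagators unbroken so that $F(S')=0$, and using that $G'$ stays connected, Eq.~\eqref{propo:bound2} gives
\[
\omega(G)\ \ge\ \omega(G')\ +\ 10\ -\ F(S)\ -\ d\big(S_\partial,\,B_c\big)\,,
\]
where $B_c$ is the boundary graph of the channel $c$; here $B_{\mathrm{orth}}$ is the union of five parallel edges joining the two external legs on one vertex with five joining the two legs on the other, and $B_{\mathrm{par}},B_{\mathrm{cross}}$ are the other two ``pentupled'' perfect matchings of the four external legs. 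So the whole lemma reduces to (i) an upper bound on the number $F(S)$ of internal faces of a dipole-tadpole, and (ii) for each admissible $S_\partial$, a bound on the flip distance to $B_{\mathrm{orth}}$ (part (a)) and to at least one of $B_{\mathrm{par}},B_{\mathrm{cross}}$ (part (b)).

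For (i), the same corner-counting arguments used for single-tadpoles and dipoles apply: on each vertex the only corner joining the two half-edges of a single internal edge is the one carried by that vertex's self-loop, so each self-loop closes at most one (length-one) internal face; and the remaining strands through the two connecting edges close at most one further (length-two) face, exactly as in the dipole analysis of Lemma~\ref{lemma:dipole}. Hence $F(S)\le 3$ always, and when this maximum is attained the internal face structure is essentially forced, which in turn constrains $S_\partial$. (Equivalently, one may first reconfigure internal strands of $S$ to close any closable face — a move that only improves the bound on $\omega$ — and reduce to $F(S)=3$ at the outset, as was done for dipoles.) The admissible boundary graphs themselves form a short list: the $K_6$-structure of the two vertices, together with the presence of the two self-loops and the two connecting edges, rigidifies $S_\partial$, leaving only finitely many $5$-regular graphs on the four external legs, which can be catalogued by the cycle structure of the strands running through $\{\ell_1,\ell_2,e_1,e_2\}$ — just as in Figs.~\ref{fig:tadpole_config} and \ref{fig:dipole_config}. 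The lemma then follows once we check, for every admissible $S_\partial$, that $d(S_\partial,B_c)\le 10-F_{\max}(S_\partial)$, where $F_{\max}(S_\partial)\le 3$ is the largest number of internal faces compatible with that boundary.

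The remaining work is this finite case analysis. I would first record the pairwise flip distances among the admissible $S_\partial$'s (cf.\ Figs.~\ref{fig:distance_tadpole} and \ref{fig:distance}), and then, for each of them, exhibit explicit flip sequences establishing: (a) $d(S_\partial,B_{\mathrm{orth}})\le 10-F_{\max}(S_\partial)$ in every case — the orthogonal channel being the ``safe'' one precisely because $B_{\mathrm{orth}}$, in which the two legs of each vertex are joined directly, is the configuration most compatible with the rigid tadpole structure of $S_\partial$; and (b) $\min\{d(S_\partial,B_{\mathrm{par}}),d(S_\partial,B_{\mathrm{cross}})\}\le 10-F_{\max}(S_\partial)$, with the additional requirement that one selects whichever of the parallel/cross channels keeps $G'$ connected. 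This last point is exactly what the conditional phrasing of part (b) encodes: for a \emph{separating} dipole-tadpole, one of the parallel/cross deletions severs the adjacent generalized double-tadpole, so only the other channel is available and the analysis must show it still suffices. I expect the bookkeeping here to be the main obstacle — correctly enumerating the boundary graphs of a dipole-tadpole, making the flip-distance estimates to $B_{\mathrm{orth}},B_{\mathrm{par}},B_{\mathrm{cross}}$ sharp enough in every case, and verifying that a good channel can always be chosen compatibly with connectedness (particularly delicate sub-configurations may, as elsewhere in the argument, be isolated and treated separately).
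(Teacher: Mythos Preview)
Your overall strategy---apply Proposition~\ref{propo:general_deletion} directly and reduce to a finite flip-distance check on boundary graphs---is sound and is in fact how the paper handles the residual cases. But the execution contains a genuine error: your claim that $F(S)\le 3$ is false. A dipole-tadpole has two self-loops \emph{and} two connecting edges, and internal faces can thread through both; the self-loop corners are not only used for length-one faces. The correct general bound, used in the paper, is $F(S)\le 5$. With that weaker bound your target inequality becomes $d(S_\partial,B_c)\le 5$, which is a much tighter flip-distance estimate than the $\le 7$ you would have needed, and it is not at all clear your proposed direct enumeration would deliver it uniformly across all admissible $S_\partial$ and all three channels.

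The paper avoids this difficulty by \emph{not} attacking the dipole-tadpole head-on. Instead it views $S$ as two single-tadpoles and applies Lemma~\ref{lemma:single_tadpole} twice in succession. This immediately dispatches every case except the one where both tadpoles are of the awkward $1{+}1{+}2$ type with their ``good'' channel being orthogonal. Only for this constrained residual configuration does the paper resort to the boundary-graph enumeration you sketch, and the constraint on $v_2$ then forces enough structure on $S_\partial$ that sharper, case-dependent face bounds become available ($F(S)\le 3$ when $S_\partial$ is disconnected, $F(S)\le 4$ in an intermediate case), making the flip-distance check tractable. Your reduction-to-$F(S)=3$ move (by reconfiguring internal strands as in the dipole lemma) does not obviously work here either, since the two self-loops and the two connecting edges interact and you cannot independently maximize the number of faces in each piece.
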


\begin{proof}

Call $v_1$ the vertex connected to $(x,y)$ and $v_2$ the other vertex. We first observe that a number of situations can be dealt with Lemma \ref{lemma:single_tadpole}, through successive deletions of the tadpoles at $v_1$ and $v_2$. We distinguish three cases. 
\begin{itemize}
\item If neither $v_1$ nor $v_2$ are of type $1+1+2$, then we can perform the move in any channel (and in particular in the orthogonal channel): delete $v_1$ in the parallel channel,  then $v_2$ in the desired channel. This is illustrated in Fig.~\ref{fig:lemma_diptad_case1}. 

\item If one of them (say $v_1$) is of type $1+1+2$, there is a single channel $c$ in which $v_1$ can be deleted (in which case we gain a factor $1/N$). We then distinguish two subcases. If the channel $c$ pairs $(x,y)$ and $(a,b)$, we first delete $v_2$ in the parallel channel, then $v_1$ in the channel $c$. This implements the orthogonal deletion, as illustrated in Fig.~\ref{fig:lemma_diptad_case2a}. 

If, on the other hand, the channel $c$ is parallel/cross (say, it pairs $(x,a)$ and $(y,b)$), we delete $v_1$ in this channel, and then $v_2$ in any desired channel, as shown in Fig.~\ref{fig:lemma_diptad_case2b}. 

\item Assume, finally, that both $v_1$ and $v_2$ are of type $1+1+2$. If at least one of them (say $v_1$) can be deleted in the parallel or cross channel, we perform this move and then delete $v_2$ in the appropriate channel: the first deletion yields a $1/N$ suppression, while the second deletion brings at most a factor of $N$. We can therefore perform the deletion in all three channels, as illustrated in Fig.~\ref{fig:lemma_diptad_case3a}. 

On the other hand, if both $v_1$ and $v_2$ can only be deleted in the orthogonal channel, we are still able to implement the orthogonal channel: delete $v_1$ in the orthogonal channel (yields a $1/N$ suppression), then $v_2$ in the parallel channel (results in an additional factor of $N$). This is illustrated in Fig.~\ref{fig:lemma_diptad_case3b}. 

\end{itemize}

\begin{figure}[htbp]
\centering
\begin{tabular}{c}
\subfloat[Neither $v_1$ nor $v_2$ are of type $1+1+2$\label{fig:lemma_diptad_case1}]{\includegraphics[scale=.9]{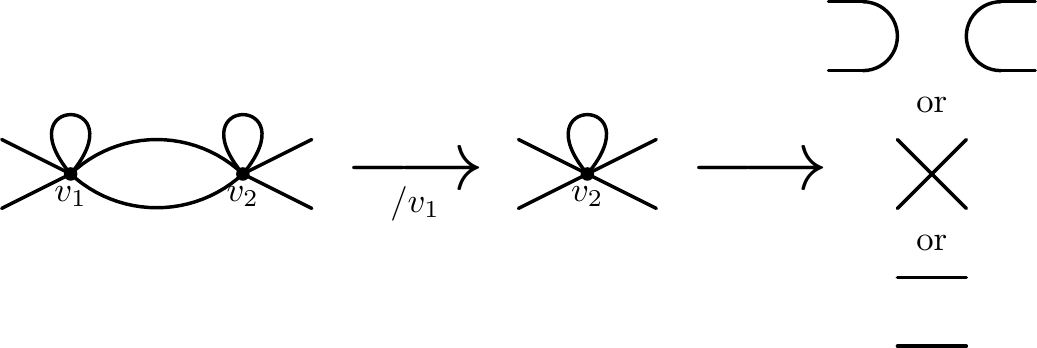}} \\
\subfloat[Only $v_1$ is of type $1+1+2$ and can be deleted in the orthogonal channel\label{fig:lemma_diptad_case2a}]{\includegraphics[scale=0.9]{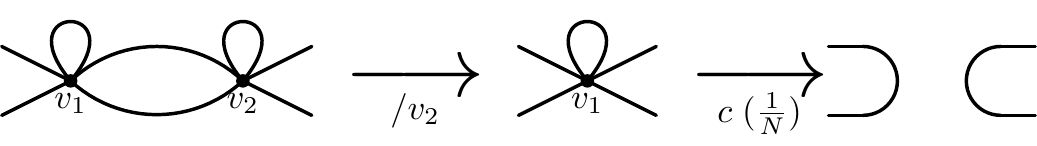}} \\
\subfloat[Only $v_1$ is of type $1+1+2$ and can be deleted in the parallel/cross channel\label{fig:lemma_diptad_case2b}]{\includegraphics[scale=0.9]{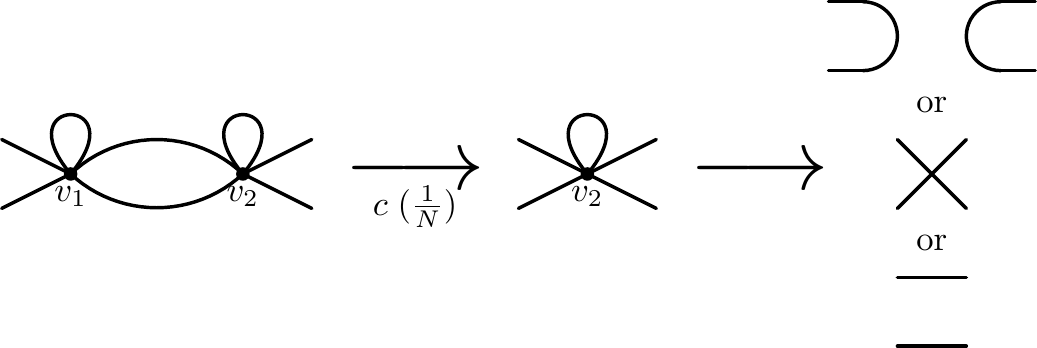}} \\
\subfloat[$v_1$ and $v_2$ are of type $1+1+2$, and $v_1$ can be deleted in the parallel/cross channel\label{fig:lemma_diptad_case3a}]{\includegraphics[scale=0.9]{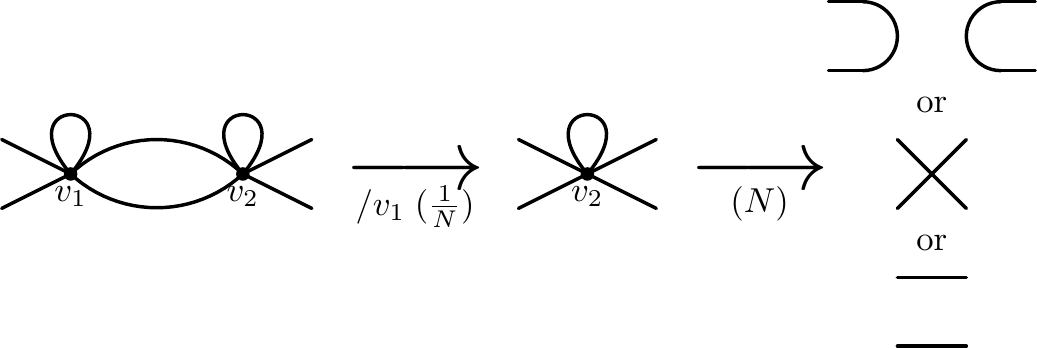}} \\
\subfloat[$v_1$ and $v_2$ are of type $1+1+2$, and both can be deleted in the cross channel\label{fig:lemma_diptad_case3b}]{\includegraphics[scale=0.9]{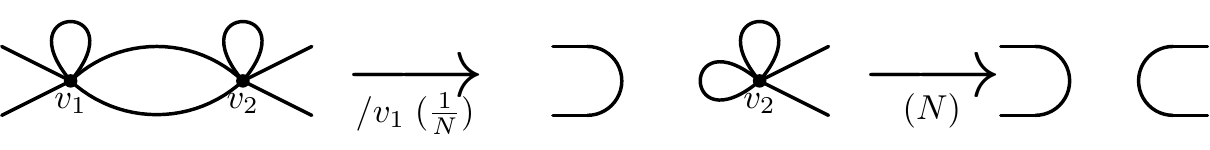}} 
\end{tabular}
\caption{Dipole-tadpole deletions from tadpole deletions (Lemma \ref{lemma:dipole_tadpole}).}
\label{fig:dip_tad_1st}
\end{figure}

All in all, the only subcases left to investigate are about the parallel/cross channels, in the following situation: $v_2$ is of type $1+1+2$ and can be deleted in the orthogonal channel; furthermore, if $v_1$ is of type $1+1+2$, its easy channel is also the orthogonal one. In particular, we can now assume that $v_2$ is in one of the two configurations shown in Fig.~\ref{fig:config_v2}.

\begin{figure}[htbp]
\centering
\captionsetup[subfigure]{labelformat=empty}
\subfloat[]{\includegraphics[scale=.6]{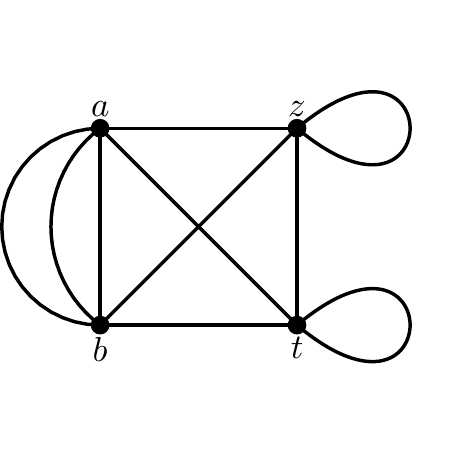}} \hspace{2cm}
\subfloat[]{\includegraphics[scale=.6]{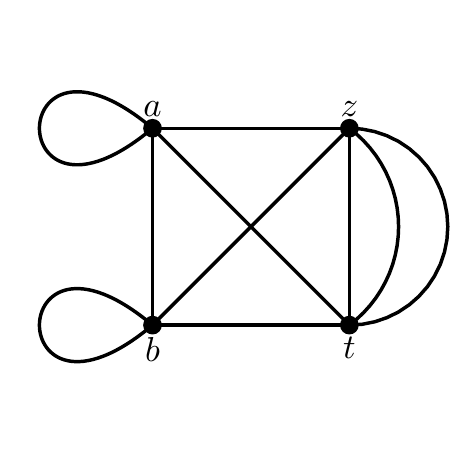}}
\caption{Special configurations of $v_2$.}\label{fig:config_v2}
\end{figure}

\medskip

We now determine the allowed boundary graphs in the remaining configurations. The Lemma will follow if we can prove that $d(S_\partial,B_c) \leq 10 - F(S)$, with $c$ either the parallel or cross channel. For this purpose, it will be sufficient to consider equivalent classes of boundary graphs under exchanges of $x$ and $y$, $z$ and $t$, as well as $(x,y)$ and $(z,t)$.
As $v_2$ can only be in the configurations of Fig.~\ref{fig:config_v2}, the full boundary graph must have one of the two structures depicted in Fig.~\ref{fig:full_bndy_v2}. We distinguish three cases.

\begin{figure}[htbp]
\centering
\includegraphics[scale=1]{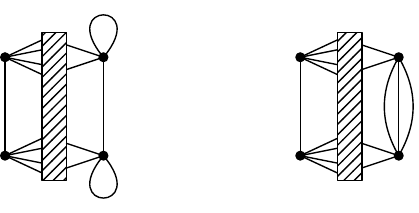}
\caption{Two special configurations for the boundary graph of a dipole-tadpole.}
\label{fig:full_bndy_v2}
\end{figure}

Assume first that $S_\partial$ has two connected components. In that case, we necessarily have $F(S)\leq 3$. Indeed, we infer from the boundary structure of tadpoles in Fig.~\ref{fig:tadpole_config} that at most three corners are available at $v_1$ (resp. $v_2$) to support faces of length two or higher. Furthermore, each such face will use at least one corner. But when $S_\partial$ is disconnected, two of those corners must already be occupied by strands that connect $z$ to $t$ (resp. $x$ to $y$). We have therefore at most one face of length two or higher. Remembering that each tadpole line can support an additional face, we obtain the claimed bound: $F(S) \leq 3$. The six inequivalent boundary graphs which can be realized under those conditions are represented in Fig.~\ref{fig:config_ortho_del_diptad}. It is straightforward to check that $d(S_\partial , B_c) \leq 7 \leq 10 - F(S)$, where $c$ is e.g. the parallel channel. 

\begin{figure}[htbp]
\centering
\includegraphics[scale=1]{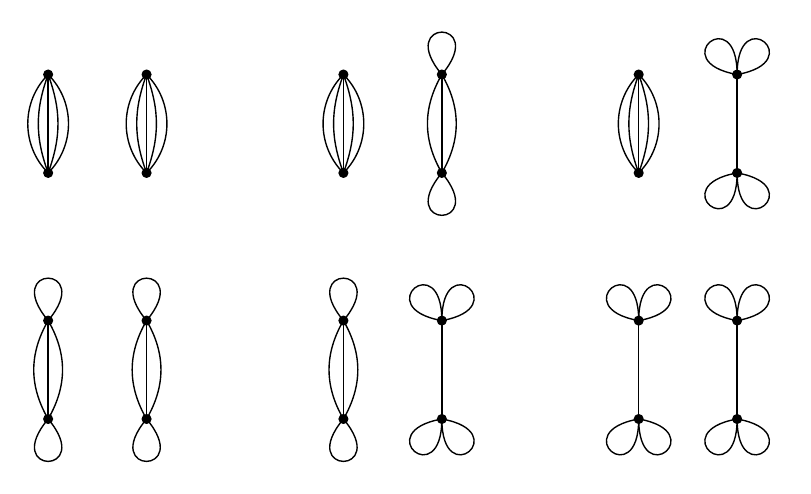}
\caption{The six inequivalent boundary graphs with two connected components.}
\label{fig:config_ortho_del_diptad}
\end{figure}

We can now assume that $S_\partial$ is connected. We note that there is at least two edges connecting the pair of vertices $(x,y)$ and $(z,t)$. Indeed, if there were only one such edge, we would have an additional seven half-edges to match from the pair $(x,y)$, which by parity is impossible. Let us first assume that there are exactly two edges connecting $(x,y)$ to $(z,t)$, and that furthermore, they are not both in the same parallel/cross channel.  For definiteness, and without loss of generality, we can suppose those edges are between $y$ and $t$, and between $y$ and $z$. There are then four possible boundary graphs, as represented in Fig.~\ref{fig:bndy_exactly_1para}. Compared to the previous paragraph, one additional corner is available at $v_1$ (resp. $v_2$) to build up faces of length two or higher. This leads to the weaker bound $F(S) \leq 4$. However, a straightforward inspection of the graphs of Fig.~\ref{fig:bndy_exactly_1para} shows that $d(S_\partial , B_c) \leq 6 \leq 10 - F(S)$ always holds (and is saturated for the last configuration), where $c$ is the parallel channel. 

\begin{figure}[htbp]
\centering
\includegraphics[scale=1]{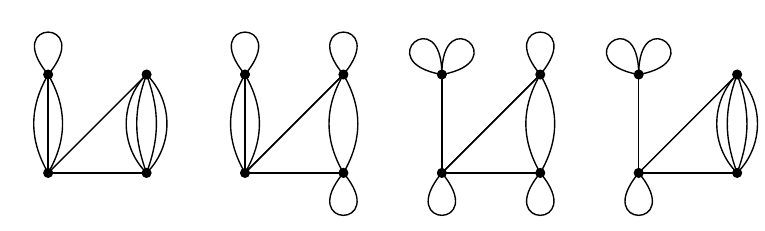}
\caption{The four inequivalent boundary graphs with exactly one edge in the cross channel, and one edge in the parallel channel.}
\label{fig:bndy_exactly_1para}
\end{figure}

Let us finally suppose that there are at least two edges in the same parallel or cross channel. Without loss of generality, we can assume it to be the parallel channel. After a straightforward (but tedious) inspection, we find another sixteen inequivalent boundary graphs, which we have depicted in Fig.~\ref{fig:bndy_16}. Any such configuration verifies $d(S_\partial , B_c) \leq 5 \leq 10 - F(S)$, where we have used that $F(S) \leq 5$ for any dipole-tadpole $S$. 

This concludes the proof.
\begin{figure}[htbp]
\centering
\includegraphics[scale=1]{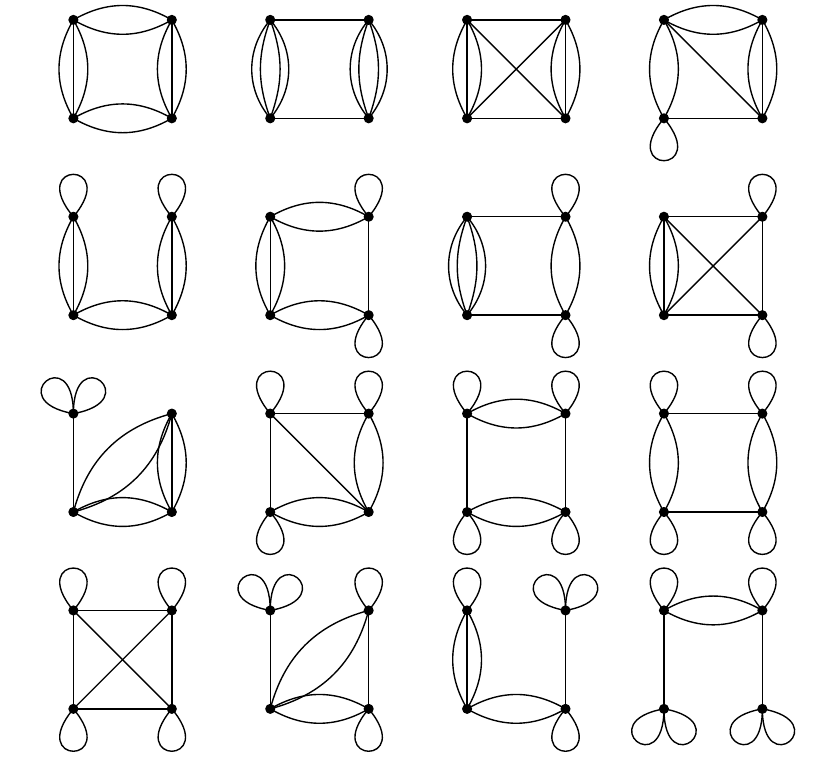}
\caption{The sixteen inequivalent boundary graphs with at least two edges in the parallel channel.}
\label{fig:bndy_16}
\end{figure}
\end{proof}

We will also need to delete a particular type of four-point subgraph represented in Fig.~\ref{fig:deletion_melon_4pt}. We call this type of graph \textit{quartic rung}. There are three different channels of deletion but we will only consider one (as represented in Fig.~\ref{fig:deletion_melon_4pt}). 

\begin{figure}[htbp]
\centering
\includegraphics[scale=1]{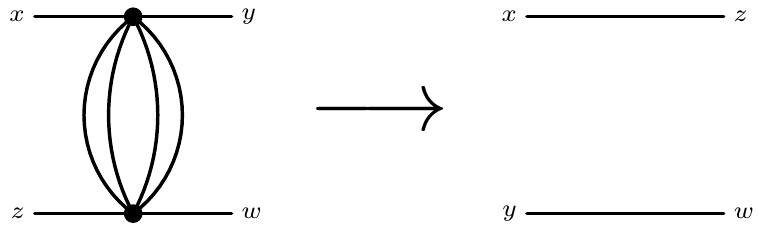}
\caption{Deletion of a quartic rung subgraph.}
\label{fig:deletion_melon_4pt}
\end{figure}

\begin{lemma}
Let $G$ be a stranded graph and $S$ a strict quartic rung subgraph of $G$. Call $G'$ the graph obtained after deletion of $S$ in the channel depicted in Fig.~\ref{fig:deletion_melon_4pt} and assume $G'$ remains connected. 

It is always possible to perform the deletion in such a way that $\omega(G)\geq \omega(G')$. 
\label{lemma:melon4pt}
\end{lemma}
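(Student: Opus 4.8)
We follow the same strategy as in Lemmas~\ref{lemma:single_tadpole}--\ref{lemma:dipole_tadpole}, applying Proposition~\ref{propo:general_deletion} to $S$ and to the subgraph $S'$ obtained by opening $S$ in the prescribed channel. A quartic rung contains $V(S)=2$ interaction vertices, and the deletion leaves only two propagators, so $V(S')=0$; after invoking Remark~\ref{rem:broken_unbroken} to reduce to graphs with unbroken edges and arranging $F(S')=0$, Eq.~\eqref{propo:bound2} yields
\begin{equation*}
\omega(G)\geq \omega(G')+10-F(S)-d(S_\partial,B_c)\,,
\end{equation*}
where $B_c$ denotes the boundary graph of the prescribed deletion channel. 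The lemma therefore reduces to the purely combinatorial estimate $d(S_\partial,B_c)\leq 10-F(S)$, exactly as in the proof of Lemma~\ref{lemma:dipole_tadpole}.

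To control the right-hand side I would proceed in two stages. First, as in the proof of Lemma~\ref{lemma:dipole}, one may reconfigure the internal strands of $S$ so as to raise $F(S)$ as much as possible — each such move can only decrease $\omega(G)$ — so that it suffices to treat the resulting canonical configurations. The point is that only a bounded number of internal corners at the two vertices can be used to close internal faces, so $F(S)$ is bounded; moreover, the larger $F(S)$ is, the more rigidly the external strand connectivity is constrained. Second, I would enumerate the admissible boundary graphs $S_\partial$ as $5$-regular multigraphs on the four external legs — keeping track of which two legs sit on the same interaction vertex, and working up to the symmetries of the prescribed channel — together with the largest value of $F(S)$ compatible with each, just as the tadpole and dipole boundary graphs were classified by their cycle structure in Figs.~\ref{fig:tadpole_config} and \ref{fig:dipole_config}. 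This reduces the statement to a finite list of cases.

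The remaining work is to verify, for each boundary graph on the list, that it lies within flip distance $10-F(S)$ of $B_c$. As before, it is convenient to first tabulate the mutual flip distances of the candidate boundary graphs, reduce to a few extremal configurations, and exhibit an explicit sequence of flips in each; in some of these configurations $S$ contains a dipole subgraph and one can hope to inherit the bound from Lemma~\ref{lemma:dipole} rather than re-derive it, provided the induced channel is an admissible one.

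The main obstacle is that, in contrast with the three preceding deletion lemmas, the statement commits to a \emph{single} channel — the one that keeps $G'$ free of melons and double-tadpoles, as required by the induction — so there is no freedom to select the cheapest channel. The tension is sharpest precisely for the ``generic'' quartic rung, whose number of internal faces is maximal, so that the budget $10-F(S)$ is as small as possible, and whose boundary graph is the rigid one produced by a leading-order melonic insertion. The crux is therefore to check that, even for this configuration, the prescribed channel is still reachable within budget, and is not, by an unlucky coincidence, the one channel that costs one flip too many; this comes down to matching the chosen channel against the constrained form that $S_\partial$ is forced into when $F(S)$ is maximal.
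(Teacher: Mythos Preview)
Your outline is correct and matches the paper's approach exactly: reduce via Proposition~\ref{propo:general_deletion} to $d(S_\partial,B_c)\leq 10-F(S)$, then push $F(S)$ up by reconfiguring dipole corners as in Fig.~\ref{fig:dipole_reduction}, enumerate the resulting boundary graphs, and bound the flip distance. What is missing is only the explicit endgame, which in fact dissolves the very worry you flag in your last paragraph.

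The quartic rung has four parallel internal edges, hence $\binom{4}{2}=6$ dipole subgraphs; applying the Fig.~\ref{fig:dipole_reduction} move to each lets you assume $F(S)=6$ outright, so the budget is $10-F(S)=4$. With all six dipole corners absorbed into internal faces, the only external corners left at each vertex are the one linking its two external legs and the eight linking an external leg to an internal one. This forces $S_\partial$ to contain exactly one $(x,y)$ edge, exactly one $(z,w)$ edge, and eight edges running between the pairs $\{x,y\}$ and $\{z,w\}$; there are just five such boundary graphs. Since $B_c$ consists of five $(x,y)$ edges and five $(z,w)$ edges, the two ``same-vertex'' edges are already in place, and the remaining eight cross edges (none of them self-loops) can be paired off into four flips, each producing one $(x,y)$ and one $(z,w)$ edge. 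Hence $d(S_\partial,B_c)\leq 4$ on the nose. So the single prescribed channel is not an unlucky one: the constraint that maximizes $F(S)$ is precisely what guarantees that $S_\partial$ already shares two edges with $B_c$.
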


\begin{proof}

We want to prove the following bound:
\begin{equation}
d(S_{\partial},B)\leq 10-F(S)
\end{equation}
with $B$ the boundary graph of the deletion channel depicted in Fig.~\ref{fig:deletion_melon_4pt} with no self-loops.

First notice that we can apply the argument of Fig.~\ref{fig:dipole_reduction} (from the proof of Lemma~\ref{lemma:dipole}) to each of the six dipole subgraphs of $S$. This allows to assume, without loss of generality, that $F(S)= 6$. 
In such a situation, $S_\partial$ is one of the five boundary graphs represented in Fig.~\ref{fig:bndy_lemma5}: there is one edge from $x$ to $y$, one edge from $z$ to $w$, and each of the eight remaining edges connects a vertex of the pair $(x,y)$ to a vertex of the pair $(z,w)$. It is then clear that eight edges need to be reconfigured to map $S_\partial$ to $B$. Given that those do not include any self-loop, we conclude that $d(S_\partial, B) \leq 4 = 10 - F(S)$. 
\begin{figure}[htbp]
\centering
\includegraphics[scale=1]{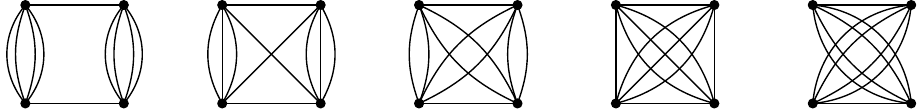}
\caption{The five possible boundary graphs of a quartic rung $S$ with $F(S)=6$.}
\label{fig:bndy_lemma5}
\end{figure}
\end{proof}

\subsection{Two-point subgraph deletions}\label{subsec:main_proof}

The following Lemma will allow us to find inductive bounds on two-particle reducible graphs.
\begin{lemma}\label{lemma:2PR}
Let $G$ be a two-particle reducible stranded graph. That is, $G$ has the structure represented in Fig.~\ref{fig:2PR}, where $S_1$ and $S_2$ are (non-empty) two-point subgraphs. Denote by $G_1$ (resp. $G_2$) the graph obtained by closing $S_1$ (resp. $S_2$) with an unbroken edge $e_1$ (resp. $e_2$). It is possible to choose $e_1$ and $e_2$ such that:
\begin{equation}
\omega(G) \geq \omega(G_1) + \omega(G_2) - 4\,.
\end{equation}
Moreover, if $S_2$ has no tadpole and no dipole, then 
\begin{equation}
\omega(G) \geq \omega(G_1)+1 \,.
\end{equation}

\begin{figure}[htbp]
\centering
\includegraphics[scale=1]{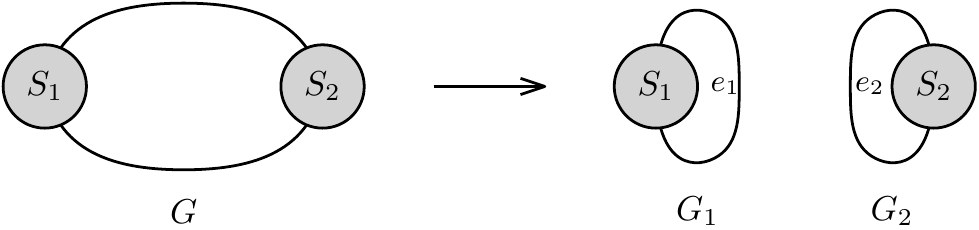}
\caption{Deletion of two-particle reducible components.}
\label{fig:2PR}
\end{figure}
\end{lemma}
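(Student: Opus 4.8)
The plan is to work throughout with unbroken edges, as permitted by Remark~\ref{rem:broken_unbroken}, so that $\omega(X)=5+5V(X)-F(X)$ for every (connected) stranded graph $X$ below. Write $G=S_1\cup S_2\cup\{f,g\}$, where $f,g$ are the two edges of the two-particle cut, so $V(G)=V(S_1)+V(S_2)$. I would first record the face decompositions
\begin{equation*}
F(G)=F_{\mathrm{int}}(S_1)+F_{\mathrm{int}}(S_2)+c\,,\qquad F(G_i)=F_{\mathrm{int}}(S_i)+c_i\,,
\end{equation*}
where $F_{\mathrm{int}}(S_i)$ is the number of internal faces of $S_i$, $c$ the number of faces of $G$ meeting $f\cup g$, and $c_i$ the number of faces of $G_i$ running through the closing edge $e_i$: these hold because a face not meeting the cut (resp. not running through $e_i$) stays inside a single $S_j$, hence is one of its internal faces. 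The key inequality is $c\le 5$: a face meeting the cut alternates between arcs inside $S_1$ and arcs inside $S_2$ each time it crosses a strand of $f$ or $g$, so it uses an \emph{even} number, at least $2$, of the ten strands of $f\cup g$; as every such strand lies on one of these faces, $c\le 10/2=5$.

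Next I would bound how large $c_i$ can be made by choosing $e_i$. The boundary graph $S_{i,\partial}$ is $5$-regular on two vertices, with $m_i\in\{1,3,5\}$ edges between them and $(5-m_i)/2$ self-loops at each. Viewing the five external faces of $S_i$ as a perfect matching $M_i$ on the ten external half-strands, the faces of $G_i$ through $e_i$ are exactly the cycles of $M_i\cup M_{e_i}$, where $M_{e_i}$ is the cross-matching induced by $e_i$. Taking $M_{e_i}$ equal to $M_i$ on its $m_i$ cross-pairs and joining the remaining $\ell$- and $r$-self-loops of $M_i$ pairwise into $4$-cycles gives $c_i=m_i+(5-m_i)/2=(m_i+5)/2\ge 3$; I would write out this construction in the three cases $m_i=1,3,5$.

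With $e_1,e_2$ so chosen ($c_1,c_2\ge 3$), the first bound is immediate:
\begin{equation*}
\omega(G)-\omega(G_1)-\omega(G_2)=-5+F(G_1)+F(G_2)-F(G)=-5+c_1+c_2-c\ge -5+6-5=-4\,.
\end{equation*}
For the second bound, suppose $S_2$ has no tadpole and no dipole, i.e. no self-loop and no double edge; then every internal face of $S_2$ has length $\ge 3$, since a face of length $1$ or $2$ forces a self-loop or a double edge. Counting corners, $15V(S_2)=E_2+\sum_\phi\mathrm{len}(\phi)$ (sum over internal faces $\phi$), with $E_2$ the number of corners on the five external faces of $S_2$. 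At most one external face has length $1$, because any two half-edges of a $5$-simplex vertex share exactly one corner, so a length-$1$ external face requires both external legs to sit on a single vertex and to meet through its unique shared corner; hence $E_2\ge 1+2\cdot 4=9$ and $F_{\mathrm{int}}(S_2)\le (15V(S_2)-9)/3=5V(S_2)-3$. Combining this with $\omega(G)-\omega(G_1)=5V(S_2)-F_{\mathrm{int}}(S_2)-c+c_1$, $c\le 5$, and $c_1\ge 3$ yields $\omega(G)-\omega(G_1)\ge 3-5+3=1$.

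The main obstacle is the stranded bookkeeping behind the two structural inputs $c\le 5$ and $\max_{e_i}c_i\ge 3$ (equivalently, the face identities for $F(G)$ and $F(G_i)$); everything downstream is arithmetic. One should also check the degenerate configurations — a single vertex carrying both external legs, or very small $S_i$ — but in part (b) the hypothesis already rules out the worst case, since a one-vertex two-point subgraph is a double-tadpole and hence contains a tadpole.
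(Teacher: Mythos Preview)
Your argument is correct. For the first inequality you proceed exactly as the paper does: the bounds $c\le 5$ and $c_i\ge 3$ are precisely the paper's $F(G)\le F(S_1)+F(S_2)+5$ and $F(G_i)\ge F(S_i)+3$, and the arithmetic is identical.

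For the second inequality your route differs from the paper's. The paper simply combines the first bound with the observation that $G_2$ (the closure of $S_2$ by an unbroken edge) has at most one tadpole or one dipole, hence at most one short face; plugging into $\omega(G_2)=5+\sum_k\frac{k-3}{3}F_k(G_2)$ gives $\omega(G_2)\ge 5-\tfrac{2}{3}$, so $\omega(G_2)\ge 5$, and then $\omega(G)\ge \omega(G_1)+\omega(G_2)-4\ge\omega(G_1)+1$. You instead bound $F_{\mathrm{int}}(S_2)$ directly by corner counting: every internal face of $S_2$ has length $\ge 3$ (this is exactly where ``no tadpole, no dipole'' enters, and your justification is correct --- a length-$2$ face needs two distinct edges between the same pair of vertices), and the external faces eat at least $9$ corners, so $F_{\mathrm{int}}(S_2)\le 5V(S_2)-3$. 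The paper's argument is shorter because it recycles the first inequality; yours is more self-contained and in fact yields slightly sharper intermediate information (your $E_2\ge 9$ holds without any hypothesis on $S_2$, and under the hypothesis one actually gets $E_2\ge 13$ or $14$ depending on whether the two external legs share a vertex). Either way the conclusion is the same.
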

\begin{proof}
The boundary graph of $S_1$ (resp. $S_2$) is one of the three configurations shown in Fig.~\ref{fig:propagator_bndy}. It is then apparent that we can choose $e_1$ and $e_2$ such that:\footnote{Note that we are constrained by the fact that $e_1$ and $e_2$ are required to be unbroken. Without this restriction, we could ensure that $F(G_1) = F(S_1) + 5$ and $F(G_2) = F(S_2) + 5$.}
\begin{equation}
F(G_1) \geq F(S_1) +3 \quad \mathrm{and} \quad F(G_2) \geq F(S_2) + 3\,.
\end{equation}
Furthermore, it is clear that $F(G) \leq F(S_1) + F(S_2) + 5$ (there are ten strands in $G\setminus(S_1 \cup S_2)$, and they all belong to faces of length at most two). Hence, $F(G_1)+ F(G_2) - F(G) \geq 1$, which is equivalent to $\omega(G_1)+\omega(G_2) - 4 \leq \omega(G)$.
Finally, if we assume that $S_2$ has no tadpole or dipole, it is clear that $G_2$ has at most one dipole or one tadpole. Hence, 
\begin{equation}
\omega(G_2)=5+\sum_{k \geq 1} \frac{k-3}{3}F_k (G_2) \geq 5-2/3
\end{equation}
which implies $\omega(G_2)\geq 5$ (since $\omega \in \mathbb{N}$). As a consequence, $\omega(G) \geq \omega(G_1) +1$.
\end{proof}

Finally, to prove the main result of this section (Proposition~\ref{prop:positive_degree}), we will also rely on a number of special two-point moves, which we gather in the next Lemma.  
\begin{lemma}
Let $G$ be a stranded graph and $S$ a strict subgraph of $G$. Suppose $S$ is one of the two-point subgraphs of Fig.~\ref{fig:particular_cases}. Call $G'$ the graph obtained from $G$ by substituting $S$ with an unbroken edge $e$. 
Then, it is always possible to choose $e$ in such a way that $\omega(G)\geq \omega(G')+1$.

\begin{figure}[H]
\captionsetup[subfigure]{labelformat=empty}
\begin{center}
\begin{tabular}{cccc}
\subfloat[$H_0$]{\includegraphics[scale=0.9]{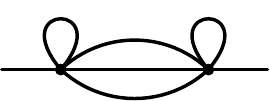}}
&
\subfloat[$H_1$]{\includegraphics[scale=0.9]{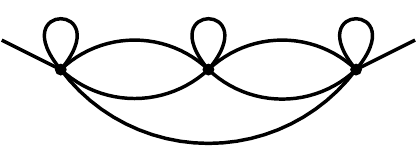}}
&
\subfloat[$H_2$]{\includegraphics[scale=0.9]{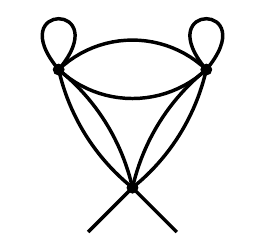}}
&
\subfloat[$H_3$]{\includegraphics[scale=0.9]{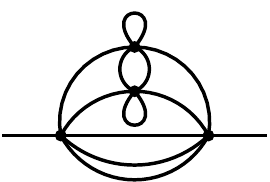}}
\\
\subfloat[$H_4$]{\includegraphics[scale=0.9]{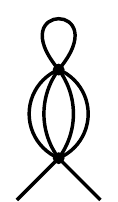}}
&
\subfloat[$H_{5}$]{\includegraphics[scale=0.9]{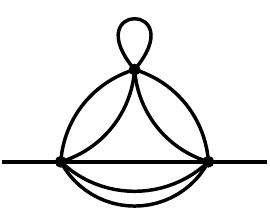}}
&
\subfloat[$H_{6}$]{\includegraphics[scale=0.9]{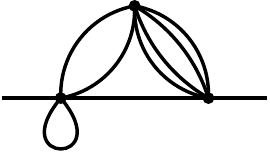}}
&
\subfloat[$H_{7}$]{\includegraphics[scale=0.9]{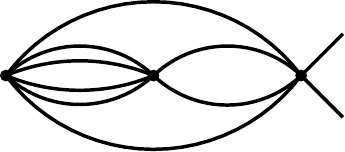}}
\\
\subfloat[$H_{8}$]{\includegraphics[scale=0.9]{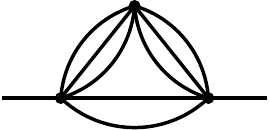}}
&
\subfloat[$H_{9}$]{\includegraphics[scale=0.9]{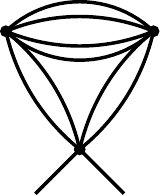}}
&
\subfloat[$H_{10}$]{\includegraphics[scale=0.9]{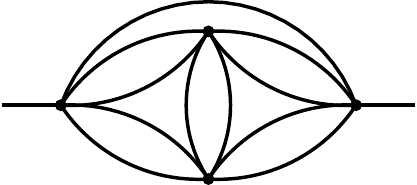}}
&
\subfloat[$H_{11}$]{\includegraphics[scale=0.9]{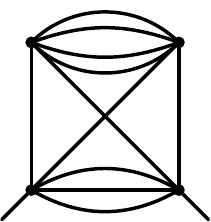}}
\\
\subfloat[$H_{12}$]{\includegraphics[scale=0.9]{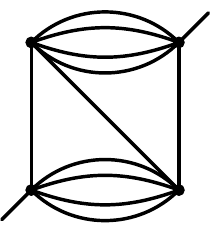}}
&
&
&
\end{tabular}
\end{center}
\caption{Particular two-point subgraphs which can always decrease the degree upon deletion.}
\label{fig:particular_cases}
\end{figure}
\label{lemma:particular_cases}
\end{lemma}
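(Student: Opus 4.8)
The plan is to prove Lemma~\ref{lemma:particular_cases} by the same flip-distance technology used throughout this section: for each of the thirteen two-point subgraphs $H_0, \dots, H_{12}$ in Fig.~\ref{fig:particular_cases}, we compute $V(S)$, the maximal number of internal faces $F(S)$ compatible with the stranded structure, and the possible boundary graphs $S_\partial$, and then check that a suitable unbroken edge $e$ can be reached by few enough flips. Concretely, since $G'$ is obtained from $G$ by replacing $S$ (with $V(S)$ vertices) by a single unbroken edge, Proposition~\ref{propo:general_deletion} gives
\begin{equation}
\omega(G) \geq \omega(G') + 5 V(S) - F(S) - d(S_\partial, B)\,,
\end{equation}
where $B$ is the boundary graph of a bare unbroken edge. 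Hence the desired bound $\omega(G) \geq \omega(G') + 1$ follows from the purely combinatorial inequality
\begin{equation}\label{eq:pc_target}
d(S_\partial, B) \leq 5 V(S) - F(S) - 1\,,
\end{equation}
which we must verify graph-by-graph. Before doing so, we invoke the now-familiar reduction of Fig.~\ref{fig:dipole_reduction}: in each of the $H_i$, any dipole subgraph whose two internal corners do not already lie on a common internal face can be reconfigured, at the cost of one unit of degree, so that it does; this lets us assume $F(S)$ is maximal, which in turn pins down $S_\partial$ to a short explicit list in each case.

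The key structural input is that each $H_i$ contains a certain number of interaction vertices (two, three or four, as visible in the figure), and on each vertex exactly $\binom{5}{2}=10$ internal corners are available; the $5$-simplex contraction pattern forces a $K_4$ among the five strand-slots of each vertex after we quotient by internal faces, exactly as in the proofs of Lemmas~\ref{lemma:single_tadpole} and~\ref{lemma:dipole}. The boundary graph $S_\partial$ is then a $5$-regular graph on two vertices (the two external legs of the two-point subgraph), i.e. the two external legs are joined by five edges possibly with self-loops; since $S_\partial$ has only two vertices, $d(S_\partial, B)$ is bounded by the number of self-loops plus a small correction, and one checks case by case that this never exceeds $5V(S) - F(S) - 1$. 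For the subgraphs with more vertices the slack in \eqref{eq:pc_target} is larger (the right-hand side grows like $5V(S)$ while the face count and flip distance grow more slowly), so those cases are comfortable; the tight cases are the ones with the fewest vertices and the most faces, and there one may additionally need to choose the channel of deletion carefully, and occasionally first delete an internal tadpole or dipole via Lemmas~\ref{lemma:single_tadpole}, \ref{lemma:dipole} or~\ref{lemma:dipole_tadpole} to expose a configuration where \eqref{eq:pc_target} holds.

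I expect the main obstacle to be bookkeeping rather than conceptual: enumerating, for each $H_i$, the list of admissible boundary graphs once $F(S)$ is saturated, and confirming that at least one admissible deletion channel realizes \eqref{eq:pc_target}. A secondary subtlety is that $e$ must be \emph{unbroken}, which (as in the footnote to Lemma~\ref{lemma:2PR}) costs up to two extra units of flip distance relative to an unconstrained recombination; one must make sure the inequality still closes with that penalty included. Because the verification is entirely routine but lengthy, I would relegate the case-by-case check to Appendix~\ref{app:particular}, presenting here only the general reduction (maximize $F(S)$, classify $S_\partial$, bound $d(S_\partial, B)$, apply Proposition~\ref{propo:general_deletion}) and a representative worked example — say the two-vertex subgraph $H_0$ — to illustrate how the flip count is carried out, leaving the remaining twelve cases to the appendix.
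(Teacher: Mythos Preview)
Your proposal identifies the correct target inequality $d(S_\partial, B_u) \leq 5V(S) - F(S) - 1$ and the correct overall architecture (graph-by-graph verification via Proposition~\ref{propo:general_deletion}, deferred to an appendix), which matches the paper.

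The execution differs in one substantive way. You propose to first invoke the dipole reduction of Fig.~\ref{fig:dipole_reduction} to drive $F(S)$ to its maximum, then enumerate the compatible boundary graphs and bound flip distances. The paper works in the opposite order: since $S_\partial$ is a $5$-regular graph on \emph{two} vertices, it can only be one of the three propagator types of Fig.~\ref{fig:propagator_bndy} (unbroken, broken, doubly-broken), so that $d(S_\partial, B_u)\in\{0,1,2\}$ is known at the outset; the real work is then to bound $F(S)$ from above in each of these three boundary cases. For this the paper does not try to determine $F(S)$ exactly, but uses the face-counting inequalities of Appendix~\ref{ap:bounds}: the boundary type forces a lower bound on the total external-strand length $l$, hence an upper bound on $\mathcal{I}=15V-5-l$, and then $F\leq\lfloor(\mathcal{I}+2F_1+F_2)/3\rfloor$ (or its $k=3$ refinement in the tight cases, notably $H_0$) controls $F(S)$ through the numbers of tadpoles and dipoles visible in each $H_i$. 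This is more economical than your route, because the dipole reduction only fixes the length-$2$ faces sitting on dipole pairs, not tadpole faces or longer faces; after your reduction you would still need a tool to bound the remaining contributions to $F(S)$, and the natural tool is precisely the Appendix~\ref{ap:bounds} inequality --- at which point the preliminary reduction step is redundant. Your remark that the unbroken constraint on $e$ ``costs up to two extra units of flip distance'' is also slightly off: that cost is exactly the $d(S_\partial,B_u)\in\{0,1,2\}$ already appearing in the inequality, not something additional.
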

\begin{proof}
See Appendix~\ref{app:particular}.
\end{proof}

\begin{lemma}\label{lemma:two-point_deletion}
Let $G$ be a stranded graph with no double-tadpole, no melon, and no \emph{separating dipole-tadpole} (as introduced in Definition~\ref{def:dipole-tadpole} and Fig.\ref{fig:dipole-tadpole}). Suppose there exists a proper two-point subgraph $H \subset G$, which can be deleted (i.e. replaced by an unbroken edge) in a way that strictly decreases the degree. Then, there exists a graph $G'$, with no double-tadpole and no melon, such that:
\begin{equation}
V(G') < V(G) \quad \mathrm{and} \quad \omega(G) \geq \omega(G')\,.
\end{equation}  
\end{lemma}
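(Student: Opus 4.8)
The plan is to try $G' := \tilde G$, the graph obtained from $G$ by the prescribed degree-decreasing deletion of $H$, and to repair it only when necessary. Since deleting $H$ removes all the vertices of $H$, and $H$ contains at least one vertex, we have $V(\tilde G) < V(G)$; and by hypothesis $\omega(G) \ge \omega(\tilde G)+1 \ge \omega(\tilde G)$. Hence, if $\tilde G$ contains neither a melon nor a double-tadpole, we are done with $G' = \tilde G$. Otherwise, because $G$ contains neither and $\tilde G$ is obtained from $G$ by replacing the two-point subgraph $H$ with a single unbroken edge $e$, every melon or double-tadpole of $\tilde G$ must contain $e$; fix one such forbidden subgraph $X$.

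Restoring $H$ in place of $e$ lifts $X$ to a structure inside $G$. If $X$ is a double-tadpole $\{v,e,e'\}$, then $G$ contains the generalized double-tadpole $D=\{v,e',H\}$ as a two-point subgraph, and deleting $D$ from $G$ is literally the same move as deleting the bare double-tadpole $X$ from $\tilde G$; Lemma~\ref{lemma:double_tadpole_deletion} then produces a graph $G''$ with $\omega(\tilde G)\ge\omega(G'')-1$, whence $\omega(G)\ge\omega(\tilde G)+1\ge\omega(G'')$ and $V(G'')<V(G)$. If instead $X$ is a melon on two vertices $u,w$, then $G$ contains a generalized melon: the vertices $u,w$ joined by four parallel edges, placed in parallel with the two-point subgraph $H$. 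This exhibits cheap subgraphs — a quartic rung (the four parallel edges) and dipoles (any two of them) — which we delete using Lemma~\ref{lemma:melon4pt}, Lemma~\ref{lemma:dipole} or, in degenerate cases, Lemma~\ref{lemma:dipole_tadpole}, in a channel that keeps the graph connected and destroys the offending melon, obtaining $G''$ with $\omega(G)\ge\omega(G'')$ and $V(G'')<V(G)$ at no cost in degree.

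In all cases $G''$ has strictly fewer vertices than $G$ and satisfies $\omega(G)\ge\omega(G'')$, but it may again contain a single new forbidden subgraph, created by the last move and containing the newly inserted edge; we then iterate. Termination is immediate since $V$ strictly decreases at each step, so the delicate point — and the main obstacle — is to keep the bound $\omega(G)\ge\omega(G'')$ valid along the whole chain. Melon-side repairs (quartic-rung and dipole deletions) are free, so the single unit of slack we started with is never spent there; the only costly move is a double-tadpole deletion (cost one), and one must show it can be triggered at most once before the process is forced into the free regime or halts. This is exactly where the three hypotheses enter: ``no melon'' and ``no double-tadpole'' pin down the local structure around the newly created subgraph, while ``no separating dipole-tadpole'' excludes precisely the nested configuration — a generalized double-tadpole attached to a dipole-tadpole — that would otherwise demand two consecutive costly deletions. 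The finitely many two-point configurations that escape the generic deletion lemmas are exactly the subgraphs $H_0,\dots,H_{12}$ of Fig.~\ref{fig:particular_cases}, and by Lemma~\ref{lemma:particular_cases} each of them can be deleted with a strict gain; this absorbs the remaining cases and completes the argument.
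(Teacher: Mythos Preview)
Your overall strategy matches the paper's: delete $H$ to form $\tilde G$, and if a forbidden subgraph appears, repair it. The double-tadpole branch is essentially correct and coincides with the paper's argument, including the use of the ``no separating dipole-tadpole'' hypothesis to rule out a second consecutive double-tadpole. However, there are two genuine gaps.

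First, your iteration is not well-controlled on the melon side. You never specify clearly whether the quartic-rung deletion is performed in $G$ or in $\tilde G$, and the two choices behave differently. If you delete it in $\tilde G$, the edge $e$ sits on one pair of external legs of the rung, and the channel of Lemma~\ref{lemma:melon4pt} may produce a degenerate object; if you delete it in $G$, you get $\omega(G)\ge\omega(G'')$ with \emph{no} slack left, so a subsequent double-tadpole would break the bound. The paper sidesteps this by never iterating: in every melon situation (both the direct one and the one arising after the single double-tadpole deletion) it goes back to $G$, deletes the quartic rung once via Lemma~\ref{lemma:melon4pt}, and argues from the explicit local picture that this single move cannot create a new melon or double-tadpole. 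That terminates the argument in at most two steps; your open-ended iteration would require bookkeeping you have not supplied. The invocations of Lemma~\ref{lemma:dipole} and Lemma~\ref{lemma:dipole_tadpole} are also unmotivated here --- neither is needed, and you do not say which channel or why connectedness is preserved.

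Second, your closing appeal to the graphs $H_0,\dots,H_{12}$ and Lemma~\ref{lemma:particular_cases} is misplaced. Those graphs do not arise inside the proof of this lemma; rather, Lemma~\ref{lemma:particular_cases} is what \emph{produces} a two-point subgraph $H$ satisfying the hypothesis of the present lemma, in the proof of Proposition~\ref{prop:positive_degree}. Here you are already handed such an $H$, and the task is purely to control what its deletion might create --- the $H_i$'s play no role.
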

\begin{proof}
Let us call $G_1$ the (connected) graph obtained from $G$ by deletion of $H$, and such that $\omega(G) \geq \omega(G_1)+1$.

If $G_1$ does not contain double-tadpoles or melons, we take $G' = G_1$.

If, on the other hand, $G_1$ contains a melon, then $G$ was in the configuration of Fig.~\ref{fig:config_deletion_2pt}, namely: the subgraph $H$ is adjacent to a quartic rung. 
In this case, we define $G'$ as the graph obtained by deletion of the quartic rung, following Lemma~\ref{lemma:melon4pt}. It is clear that this move cannot create melons or double-tadpoles, and by Lemma~\ref{lemma:melon4pt}, $\omega(G) \geq \omega(G')$. 
\begin{figure}[H]
\centering
\captionsetup[subfigure]{labelformat=empty}
\includegraphics[scale=1]{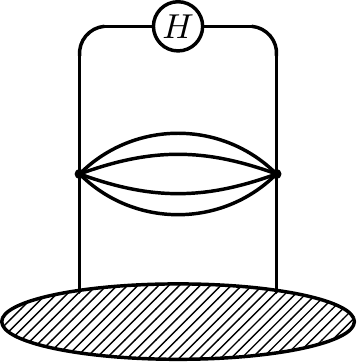}
\caption{Configuration of the graph generating a melon upon deletion of $H$.}
\label{fig:config_deletion_2pt}
\end{figure}

Finally, $G_1$ can contain a double-tadpole, in which case $G$ contains the subgraph depicted on the left side of Fig.~\ref{fig:deletion_2pt}. As illustrated in the same figure, we can subsequently remove the double-tadpole from $G_1$ to obtain a graph $G_2$ such that $\omega(G_1) \geq \omega(G_2) - 1$ (following Lemma~\ref{lemma:double_tadpole_deletion}). As a result, $\omega(G) \geq \omega(G_2)$. $G_2$ cannot contain a double-tadpole, otherwise $G$ would contain a separating dipole-tadpole (as illustrated in Fig.~\ref{fig:ind_dipole_tadpole_a}), which is excluded by assumption. If $G_2$ does not contain a melon either, we define $G' = G_2$ and conclude. If, on the other hand, $G_2$  contains a melon, then $G$ contains a subgraph with a quartic rung as represented in Fig.~\ref{fig:config_3_dbtad}. In that case, we can again invoke Lemma~\ref{lemma:melon4pt} to delete the quartic rung, and obtain a graph $G'$ with no double-tadpole or melon.
\begin{figure}[H]
\centering
\captionsetup[subfigure]{labelformat=empty}
\includegraphics[scale=1]{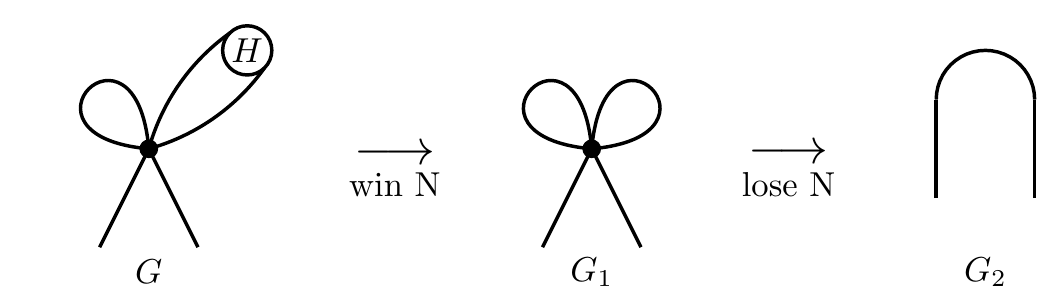}
\caption{Steps for deleting $H$ if it appears within a generalized double-tadpole. We gain a factor $N$ with the first step and lose one with the second step.}
\label{fig:deletion_2pt}
\end{figure}

\begin{figure}[H]
\centering
\captionsetup[subfigure]{labelformat=empty}
\includegraphics[scale=1]{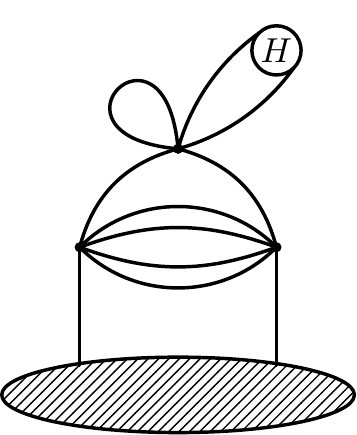}
\caption{Configuration leading to a melon when deleting the generalized double-tadpole containing $H$.}
\label{fig:config_3_dbtad}
\end{figure}
\end{proof}

\subsection{Main proposition}

\begin{proposition}
Let $G$ be a stranded graph. If $G$ has no double-tadpole and no melon, then $\omega(G) \geq 0$. 
\label{prop:positive_degree}
\end{proposition}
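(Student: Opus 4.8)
The plan is to proceed by strong induction on the number of vertices $V(G)$. The base case $V(G)=0$ is a ring graph, whose degree is non-negative by definition. For the inductive step, I assume the statement holds for all stranded graphs with fewer vertices than $G$, and that $G$ has no double-tadpole and no melon. By Remark~\ref{rem:broken_unbroken}, I may assume $G$ contains only unbroken edges. If $G$ is an end graph (no short face), then Proposition~1 already gives $\omega(G)\geq 0$. So I may assume $G$ contains at least one short face, hence at least one face of length one or two, which forces the presence of a tadpole or dipole subgraph.

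First I would dispose of the two-particle reducible case: if $G$ has a nontrivial 2PR structure as in Fig.~\ref{fig:2PR}, apply Lemma~\ref{lemma:2PR}. Either both pieces $G_1,G_2$ have strictly fewer vertices (and are free of melons and double-tadpoles, possibly after absorbing the closing edge into the structure — this needs a small check) giving $\omega(G)\geq \omega(G_1)+\omega(G_2)-4$; I would need to argue the subtracted constant is harmless, presumably because when $S_2$ contains no tadpole/dipole the improved bound $\omega(G)\geq\omega(G_1)+1$ applies, and otherwise $S_2$ itself contains a short-face-supporting subgraph that can be handled by the earlier lemmas. After reducing to the 2PI case, the core of the argument is: locate a tadpole, dipole, dipole-tadpole, or one of the special subgraphs $H_0,\dots,H_{12}$ of Lemma~\ref{lemma:particular_cases}, and delete it using the appropriate lemma among Lemma~\ref{lemma:double_tadpole_deletion}, Lemma~\ref{lemma:single_tadpole}, Lemma~\ref{lemma:dipole}, Lemma~\ref{lemma:dipole_tadpole}, Lemma~\ref{lemma:melon4pt}, or Lemma~\ref{lemma:particular_cases}, always in a channel that keeps the graph connected and does not create a melon or double-tadpole. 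Then Lemma~\ref{lemma:two-point_deletion} packages exactly this: whenever we can find a two-point subgraph whose deletion strictly decreases the degree, we obtain $G'$ with $V(G')<V(G)$, no melon, no double-tadpole, and $\omega(G)\geq\omega(G')$; the inductive hypothesis gives $\omega(G')\geq 0$, hence $\omega(G)\geq 0$.

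The main obstacle — and where most of the work lies — is the bookkeeping needed to guarantee that in every configuration at least one permissible move exists. Concretely, I would organize the case analysis by the type of the short-face-supporting subgraph: (i) a type-$I$ tadpole, handled by Lemma~\ref{lemma:single_tadpole} (noting that the only dangerous boundary configuration $1+1+2$ in a non-parallel channel still only costs one unit, which is recovered by the ``no melon/double-tadpole after deletion'' repair in Lemma~\ref{lemma:two-point_deletion}); (ii) a type-$I$ dipole, handled by Lemma~\ref{lemma:dipole} together with the earlier argument (Fig.~\ref{fig:dipole_connected}) that a type-$I$ dipole always admits a non-disconnecting channel; (iii) a type-$I$ dipole-tadpole, handled by Lemma~\ref{lemma:dipole_tadpole} — here one must use the orthogonal channel, or the parallel/cross fallback, and crucially the hypothesis that $G$ has no separating dipole-tadpole; (iv) the remaining type-$II$ situations, where the tadpole/dipole/dipole-tadpole sits inside a generalized double-tadpole or generalized melon or generalized dipole-tadpole — but generalized melons and generalized double-tadpoles have been subtracted, so these reduce to generalized dipole-tadpoles, whose internal two-point dressings either recurse (fewer vertices, handled inductively via 2PR) or bottom out in one of the explicit subgraphs $H_i$ of Lemma~\ref{lemma:particular_cases}.

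Thus the proof is essentially a dispatch: reduce to 2PI and unbroken, find a short face, identify the minimal offending subgraph, and apply the matching deletion lemma through the funnel of Lemma~\ref{lemma:two-point_deletion}. I expect the genuinely delicate point to be verifying exhaustiveness — that the list of handled subgraphs (tadpole, dipole, dipole-tadpole, quartic rung, $H_0$--$H_{12}$) really does cover every way a face of length one or two can arise in a 2PI graph with no melon, no double-tadpole, and no separating dipole-tadpole — and in ensuring that each deletion move can be performed without re-creating a melon or double-tadpole, which is precisely the content that makes Lemma~\ref{lemma:two-point_deletion} (with its quartic-rung repair step) indispensable.
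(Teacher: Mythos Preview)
Your overall framework---induction on $V(G)$, reduce to unbroken edges, locate a short-face subgraph, delete it via the appropriate lemma---matches the paper's, but two steps in your plan do not work as stated.

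First, the ``dispose of 2PR first'' step is a genuine gap. The general bound $\omega(G)\geq\omega(G_1)+\omega(G_2)-4$ from Lemma~\ref{lemma:2PR} is too weak: even if induction gave $\omega(G_1),\omega(G_2)\geq 0$, you would only obtain $\omega(G)\geq -4$. The improved bound $\omega(G)\geq\omega(G_1)+1$ requires $S_2$ to contain \emph{no tadpole and no dipole whatsoever}, which you cannot arrange in a generic 2PR split. Worse, closing $S_1$ or $S_2$ with an edge may create a melon or a double-tadpole (e.g.\ if $S_i$ is a quartic rung), so the induction hypothesis need not apply to $G_1,G_2$; this is not the ``small check'' you suggest. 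The paper does \emph{not} reduce to 2PI up front; it reserves Lemma~\ref{lemma:2PR} for the very last case, where the structure guarantees one piece has no short face.

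Second, your treatment of type-$II$ is incorrect. You write that ``generalized melons and generalized double-tadpoles have been subtracted'', but only \emph{bare} melons and double-tadpoles are excluded by hypothesis; generalized versions (with nontrivial two-point insertions) are allowed and are exactly what makes type-$II$ configurations delicate. The paper's key idea, which your proposal is missing, is an \emph{inclusion forest} argument: to each type-$II$ tadpole or dipole $S$ one associates a canonical two-point subgraph $C_S$ (Fig.~\ref{fig:not_easy_2}), shows that the family $\{C_S\}$ is nested, and picks a minimal $C_{S_0}$. One of its internal two-point blobs $H$ then contains no type-$II$ tadpole/dipole (by minimality) and no type-$I$ one (those having been excluded in earlier cases), hence no short face; now the improved 2PR bound applies and feeds Lemma~\ref{lemma:two-point_deletion}.

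Finally, the order of cases matters and carries hidden dependencies you do not track: separating dipole-tadpoles are handled first precisely because Lemma~\ref{lemma:two-point_deletion} assumes their absence; type-$I$ tadpoles are eliminated before type-$I$ dipoles because several dipole subcases are ruled out only under ``no type-$I$ tadpole''.
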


\begin{proof}
If $G$ has no short face or no vertex, we have already seen that $\omega(G) \geq 0$. In all other cases, we proceed by induction on the number of vertices. From now on, we assume that $V(G)\geq 1$, and that $G$ contains at least one tadpole or one dipole.  

Even if $G$ cannot have double-tadpoles or melons, generalized double-tadpoles and melons are still allowed. To avoid difficulties with such subgraphs, we will first deal with type-$I$ dipoles and tadpoles, and study the more involved case of type-$II$ configurations separately (see subsection~\ref{sec:types} and Fig.~\ref{fig:not_easy} for definitions).  

We now proceed with an exhaustive graph-theoretic distinction of cases. In each situation, we will look for a strict subgraph of $G$ which can be deleted without increasing the degree, and while preserving our combinatorial constraints (namely: connectedness, the absence of melons or double-tadpoles, as well as the absence of broken or doubly-broken edges). From the induction hypothesis, it will then follow that $\omega(G) \geq 0$.

\begin{figure}[htbp]
\centering
\subfloat[\label{fig:ind_dipole_tadpole_a}]{\includegraphics[scale=1.2]{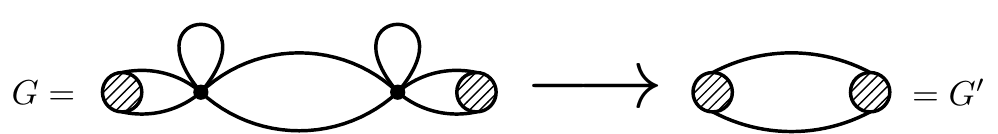}}
\\
\subfloat[\label{fig:ind_dipole_tadpole_b}]{\includegraphics[scale=1.2]{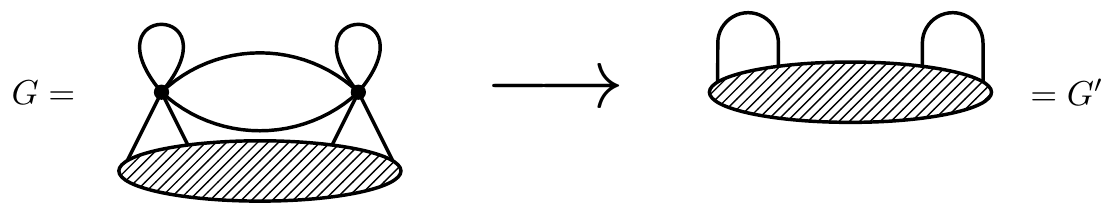}}
\caption{The two possible primary configurations of a dipole-tadpole. It can be separating (top), in which case the graph remains connected upon deletion in the parallel/cross channel; or non-separating (bottom), in which case the graph remains connected upon deletion in the orthogonal channel.}
\label{fig:ind_dipole_tadpole}
\end{figure}

\begin{itemize}
\item[\emph{Case A.}] First suppose that there exists a separating dipole-tadpole in $G$ (see Fig.~\ref{fig:ind_dipole_tadpole_a}). Performing a deletion of this dipole-tadpole in the parallel or cross channel, we obtain a graph $G'$ which cannot contain double-tadpoles or melons. Furthermore, by Lemma~\ref{lemma:dipole_tadpole} (b), we can make sure that $\omega(G)\geq \omega(G') \geq 0$ (the last inequality follows from the induction hypothesis, because $G'$ has strictly fewer vertices than $G$).
\end{itemize}

We can now assume that there are no more separating dipole-tadpoles.

\begin{itemize}
\item[\emph{Case B.}] Suppose that there exists a non-separating dipole-tadpole in $G$ (see Fig.~\ref{fig:ind_dipole_tadpole_b}). We can then use Lemma~\ref{lemma:dipole_tadpole} to delete the latter in the orthogonal channel, and obtain a connected graph $G'$ with $\omega(G) \geq \omega(G')$.
If $G'$ has no double-tadpole and no melon, then we are done. If, on the other hand, $G'$ has a double-tadpole, then $G$ is in either one the configuration represented in Fig.~\ref{fig:ind_dipole_tadpole_ortho_a} or contains the subgraph $H_2.$
If, instead, $G'$ has a melon, then $G$ either contains the two-point subgraph $H_4$ (see Fig.~\ref{fig:particular_cases}) or contains the four-point function depicted in Fig.~\ref{fig:ind_dipole_tadpole_ortho_c}.

\begin{figure}[H]
\centering
\subfloat[]{\label{fig:ind_dipole_tadpole_ortho_a}\includegraphics[scale=1]{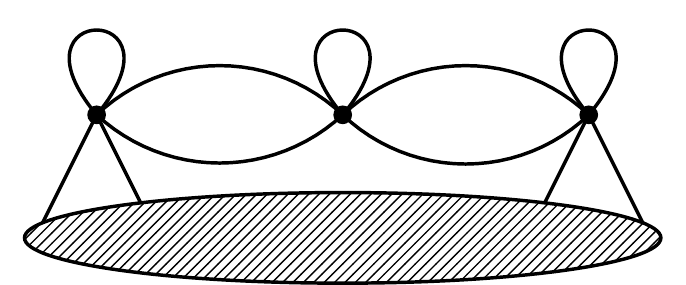}}
\hspace{1cm}
\subfloat[]{\label{fig:ind_dipole_tadpole_ortho_c}\includegraphics[scale=1]{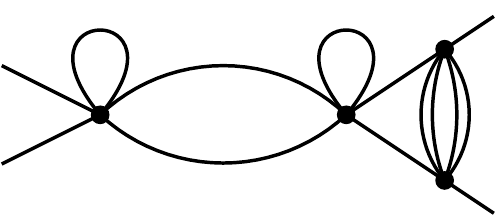}}
\caption{Two configurations of a non-separating dipole-tadpole that can create double-tadpoles or melons upon deletion in the orthogonal channel. 
}
\label{fig:ind_dipole_tadpole_ortho}
\end{figure}

In the two situations of Fig.~\ref{fig:ind_dipole_tadpole_ortho}, we can try to perform the deletion of the dipole-tadpole in the parallel channel, which cannot disconnect the graph. One may however create double-tadpoles or melons, in which case $G$ contains one of the subgraph $H_1$, $H_2$, $H_3$, or the two-point subgraph depicted in Fig.~\ref{fig:ind_dipole_tadpole_2pt}. In the latter case, we can use Lemma~\ref{lemma:melon4pt} to remove the quartic rung, and reduce the problem to the situation in which $G$ contains the subgraph $H_0$. In conclusion, we have shown that $G$ always contains a subgraph $H_i$ covered by Lemma~\ref{lemma:particular_cases}. We can therefore apply Lemma~\ref{lemma:two-point_deletion} as a last step, which outputs a suitable graph $G''$ with $\omega(G)\geq \omega(G'')$.

\begin{figure}[H]
\centering
\includegraphics[scale=1]{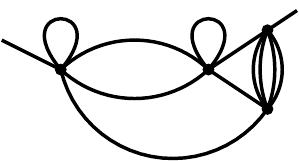}
\caption{Configuration of a dipole-tadpole subgraph which can generate a melon in the parallel or cross channel.}
\label{fig:ind_dipole_tadpole_2pt}
\end{figure}
 
\end{itemize}

We can now assume that there are no more dipole-tadpole subgraphs in $G$. 

\begin{itemize}
\item[\emph{Case C.}] Suppose that there exists a type-$I$ tadpole in $G$. Then the graph remains connected upon deletion of this tadpole in any channel. By application of Lemma~\ref{lemma:single_tadpole}, we obtain a graph $G'$ with $\omega(G)\geq \omega(G')$. 
\end{itemize}
\begin{itemize}
\item If $G'$ has a double-tadpole, then $G$ either contained a dipole-tadpole or the subgraph $H_4$ depicted in Fig.~\ref{fig:particular_cases}. The first situation has already been excluded, and we can use Lemmas~\ref{lemma:particular_cases} and \ref{lemma:two-point_deletion} in combination to deal with the second. 

\item If $G'$ has a melon, then  $G$ either contained the subgraph $H_5$ from Fig.~\ref{fig:particular_cases} or the four-point graph of Fig.~\ref{fig:ind_easy_tadpole}.
In the second case, we can use Lemma~\ref{lemma:melon4pt} to delete the quartic rung. This last step cannot create a melon. If it does not create a double-tadpole either, we conclude.  If it does, then $G$ necessarily contained the subgraph $H_6$ from Fig.~\ref{fig:particular_cases}. We can deal with this situation, as well as with the configuration $H_5$, by application of Lemmas~\ref{lemma:particular_cases} and \ref{lemma:two-point_deletion}. 
\end{itemize}

\begin{figure}[htbp]
\centering
\includegraphics[scale=1]{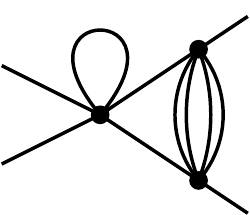}
\caption{A four-point subgraph that can lead to the creation of a melon when deleting a type-$I$ tadpole.}
\label{fig:ind_easy_tadpole}
\end{figure}

From now on, we assume that there is no type-$I$ tadpole in $G$. 

\begin{itemize}
\item[\emph{Case D.}] Suppose that $G$ contains a type-$I$ dipole. We can then attempt to delete this dipole in one of the channels covered by Lemma~\ref{lemma:dipole}. 
\end{itemize}

\begin{itemize}
\item If the channels $(2a),(2b)$ and $(2c)$ all disconnect the graph, we instead perform the deletion in the parallel channel to obtain a new graph $G'$. Thanks to our assumption that the deleted dipole was of type-$I$, $G'$ is necessarily connected (see our earlier discussion around Fig.~\ref{fig:dipole_connected}). Moreover, $G'$ cannot have double-tadpoles or melons. By application of Lemma~\ref{lemma:dipole}, we can ensure that $\omega(G) \geq \omega(G')$.

\item If one of the three channels $2$ does not disconnect the graph, for example the channel $(2a)$ (which we can assume without loss of generality), we perform the deletion in this channel. However, this can create a double-tadpole. In this case $G$ was in one of the first three configurations of Fig.~\ref{fig:ind_config_easy_dipole}.

\begin{figure}[htpb]
\centering
\captionsetup[subfigure]{labelformat=empty}
\subfloat[(i)]{\includegraphics[scale=1]{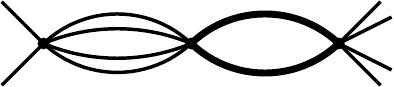}}
\hspace{1cm}
\subfloat[(ii)]{\includegraphics[scale=1]{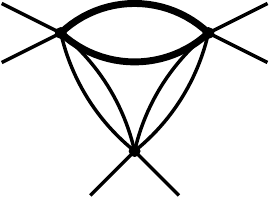}}
\hspace{1cm}
\subfloat[(iii)]{\includegraphics[scale=1]{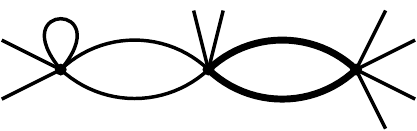}}
\\
\subfloat[(iv)]{\includegraphics[scale=1]{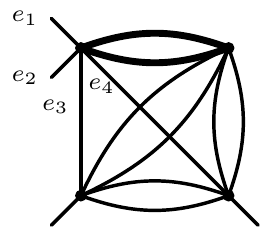}}
\hspace{1cm}
\subfloat[(v)]{\includegraphics[scale=1]{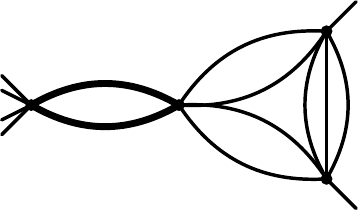}}
\hspace{1cm}
\subfloat[(vi)]{\includegraphics[scale=1]{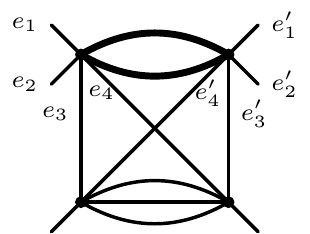}}
\hspace{1cm}
\subfloat[(vii)]{\includegraphics[scale=1]{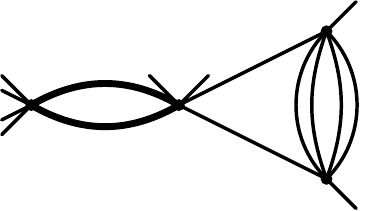}}
\caption{Seven configurations of a dipole (in bold) that lead to the creation of a melon or a double-tadpole. The first three lead to the creation of a double-tadpole while the four other lead to the creation of a melon.}
\label{fig:ind_config_easy_dipole}
\end{figure}

This can also create a melon. In this case, $G$ was either in one of the last four configurations of Fig.~\ref{fig:ind_config_easy_dipole} or contains the subgraph $H_{10}$ depicted in Fig.~\ref{fig:particular_cases}.

$H_{10}$ can be dealt with by application of Lemmas~\ref{lemma:particular_cases} and \ref{lemma:two-point_deletion}. We have to look at the other configurations in more detail.

\begin{itemize}
\item[(i)] We delete the dipole in the parallel channel instead. If this creates a double-tadpole, the graph is either in the configuration of Fig.~\ref{fig:ind_deletion_easy_dipole_a} or contains the subgraph $H_7$ of Fig.~\ref{fig:particular_cases}.

\begin{figure}[htbp]
\centering
\subfloat[]{\label{fig:ind_deletion_easy_dipole_a}\includegraphics[scale=1]{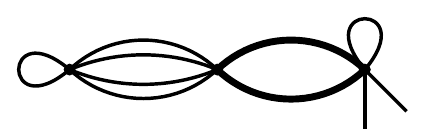}}
\hspace{0.125cm}
\subfloat[]{\label{fig:ind_deletion_easy_dipole_b}\includegraphics[scale=1]{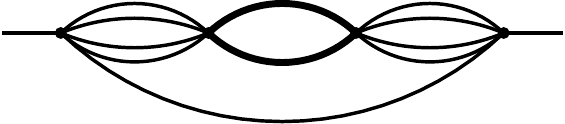}}
\hspace{0.125cm}
\subfloat[]{\label{fig:ind_deletion_easy_dipole_c}\includegraphics[scale=1]{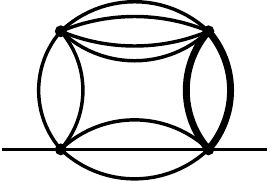}}
\\
\subfloat[]{\label{fig:ind_deletion_easy_dipole_d}\includegraphics[scale=1]{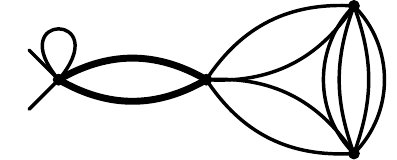}}
\hspace{1cm}
\subfloat[]{\label{fig:ind_deletion_easy_dipole_e}\includegraphics[scale=1]{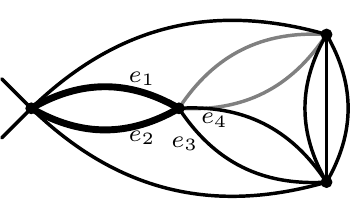}}
\\
\subfloat[]{\label{fig:ind_deletion_easy_dipole_f}\includegraphics[scale=1]{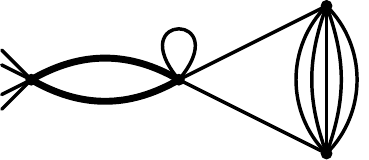}}
\hspace{1cm}
\subfloat[]{\label{fig:ind_deletion_easy_dipole_g}\includegraphics[scale=1]{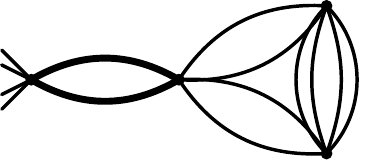}}
\caption{The interesting configurations when deleting a type-$I$ dipole. The type-$I$ dipole we want to delete is represented in bold.}
\label{fig:ind_deletion_easy_dipole}
\end{figure}

The first graph contains a type-$I$ tadpole (at its left end) so it has already been excluded.  We can deal with $H_7$ by means of Lemmas~\ref{lemma:particular_cases} and \ref{lemma:two-point_deletion}.

If this creates a melon, the graph is in one of the configurations of Fig.~\ref{fig:ind_deletion_easy_dipole_b} and \ref{fig:ind_deletion_easy_dipole_c}.

We can first use Lemma~\ref{lemma:melon4pt} to delete a quartic rung in both of these graphs. The first one reduces to $H_4$, the second one to $H_0$. In both situations, we can then conclude by invoking Lemmas~\ref{lemma:particular_cases} and \ref{lemma:two-point_deletion}.

\item[(ii)] We perform the move of Fig.~\ref{fig:ind_move_easy_dipole}, which can be justified from Lemma~\ref{lemma:dipole} as follows. We first try to delete the dipole in the (unique) channel $2$ that connects $a$ to $w$ and $b$ to $x$. If this move turns out to connect $c$ to $y$ (resp. $z$) and $d$ to $z$ (resp. $y$), we are done. If not, $c$ is mapped to $d$, which may result in a disconnected graph. In that situation, we instead perform a deletion in the (again unique) channel that connects $a$ to $x$ and $b$ to $y$. We are then guaranteed that $c$ is not mapped to $d$, and we have successfully implemented the combinatorial move of Fig.~\ref{fig:ind_move_easy_dipole}. 
\begin{figure}[H]
\centering
\includegraphics[scale=1]{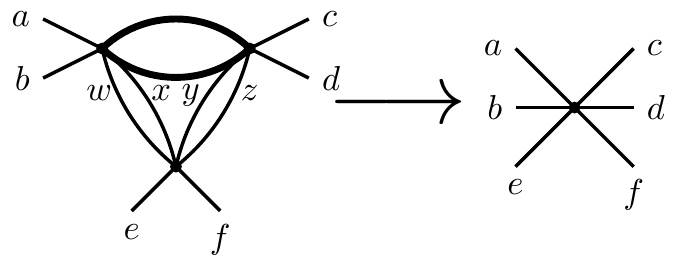}
\caption{Graphical representation of the move used to delete a type-$I$ dipole (in bold) in the configuration (ii).}
\label{fig:ind_move_easy_dipole}
\end{figure}

If this creates a double-tadpole, then $G$ must contain one of the subgraphs $H_8$ or $H_9$, as depicted in Fig.~\ref{fig:particular_cases}.

If this creates a melon, we are instead led to the two-point subgraph $H_{10}$.

Again, these special cases are dealt with Lemmas~\ref{lemma:particular_cases} and \ref{lemma:two-point_deletion}. 

\item[(iii)] The configuration with a generalized tadpole on the vertex that already has a tadpole is forbidden, otherwise $G'$ would be disconnected. Hence $G$ contains a type-$I$ tadpole, which we have already excluded. 

\item [(iv)] We delete the dipole in the unique channel $2$ that sends $e_1$ onto $e_3$ (and $e_2$ onto $e_4$). We either obtain a dipole-tadpole or a quartic rung. 

This can create a melon or a double-tadpole only if $G$ contains $H_{10}$. 

\item [(v)] We perform the deletion in the parallel channel. This cannot create a double-tadpole but can create a melon if the graph was in one of the two configurations depicted in Fig.~\ref{fig:ind_deletion_easy_dipole_d} and \ref{fig:ind_deletion_easy_dipole_e}.

The first one is incompatible with our assumption that the dipole can be deleted in one of the channels $2$ without disconnecting the graph. For the second one, we delete instead the grey dipole in an appropriate channel $2$, for instance the one that sends $e_1$ onto $e_3$ and $e_2$ onto $e_4$. This reduces to the particular case $H_4$, which is covered by Lemmas~\ref{lemma:particular_cases} and \ref{lemma:two-point_deletion}. 

\item[(vi)] We instead implement a deletion in an appropriate channel $2$. In more detail, we first try the deletion that sends $e_1$ to $e_3$ (and $e_2$ to $e_4$). The graph is guaranteed to remain connected unless this move maps $e_1'$ to $e_2'$. In that case, we can instead implement a deletion in the channel $2$ that maps $e_1$ to $e_4$, in which case $e_1'$ is mapped to $e_3'$ or $e_4'$. In both situations, the graph remains connected. However, a melon could be created, in which case the graph is the two-point subgraph $H_{11}$ depicted in Fig.~\ref{fig:particular_cases}.

We then conclude with Lemmas~\ref{lemma:particular_cases} and \ref{lemma:two-point_deletion}. 

\item[(vii)] We start by deleting the quartic rung subgraph on the right using Lemma~\ref{lemma:melon4pt}. This can create a double-tadpole if $G$ was in one of the configurations depicted in Fig.~\ref{fig:ind_deletion_easy_dipole_f} and \ref{fig:ind_deletion_easy_dipole_g}. The first one is excluded because it contains a melon. The second one is excluded because the graph disconnects in all three channels $2$ of the dipole we started from. 

This can also create a melon if the graph contains either of the subgraphs $H_{11}$ and $H_{12}$ of Fig.~\ref{fig:particular_cases}. We can deal with both of these with Lemmas~\ref{lemma:particular_cases} and \ref{lemma:two-point_deletion}. 
\end{itemize}

\end{itemize}

We can now assume that there are no type-$I$ dipoles left. 

\begin{itemize}
\item[\emph{Case E.}] We finally assume that $G$ contains a type-$II$ tadpole or dipole.

Consider a type-$II$ tadpole or dipole $S\subset G$, and assume that the root edge is not contained in $S$. By definition, we know that $S$ is included in a two-point subgraph with one of the structures depicted in Fig.~\ref{fig:not_easy_2}. Moreover, there is a unique such subgraph that \emph{does not contain the root edge}. We call it the \emph{canonical two-point subgraph} associated to $S$, and denote it by $C_S$.  
\begin{figure}[H]
\centering
\captionsetup[subfigure]{labelformat=empty}
\subfloat[]{\includegraphics[scale=1]{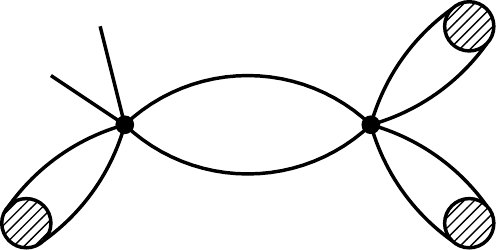}}
\hspace{1cm}
\subfloat[]{\includegraphics[scale=1]{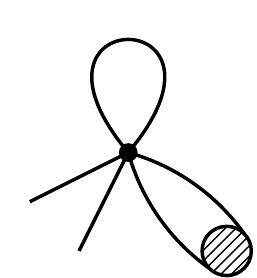}}
\hspace{1cm}
\subfloat[]{\includegraphics[scale=1]{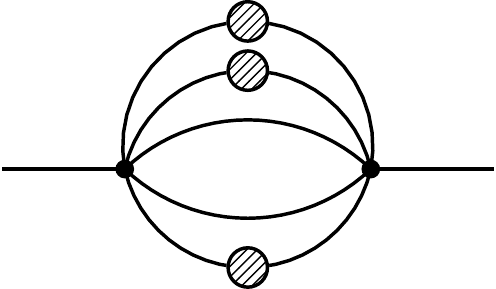}}
\caption{Canonical two-point subgraphs associated to type-$II$ dipoles and tadpoles. Given a type-$II$ tadpole or dipole $S$, $C_S$ is uniquely determined by the requirement that the root edge is not contained in it.}
\label{fig:not_easy_2}
\end{figure}
We then claim that the family of subgraphs $C_S$ forms an \emph{inclusion forest}. That is, given two type-$II$ tadpoles or dipoles $S_1$ and $S_2$, one of the following conditions holds: 
\begin{itemize}
\item $C_{S_1} \subset C_{S_2}$ or $C_{S_2} \subset C_{S_1}$;
\item $C_{S_1}$ and $C_{S_2}$ are (vertex and edge) disjoint. 
\end{itemize}
As a result, provided that this set of subgraphs is non-empty, there exists a dipole or tadpole $S_0$ such that $C_{S_0}$ is minimal for the inclusion. Moreover, $C_{S_0}$ necessarily contains a proper two-point function (i.e. one of the two-point subgraphs represented by blobs in Fig.~\ref{fig:not_easy_2}, otherwise $G$ would contain a double-tadpole or a melon), which we call $H$. Given that $H\subset C_{S_0}$ and $C_{S_0}$ is minimal for the inclusion among all $C_S$ subgraphs, it follows that $H$ cannot contain any type-$II$ tadpole or dipole. Since we have already assumed the absence of type-$I$ tadpoles or dipoles in $G$, $H$ cannot contain any short face. Consequently, we can use Lemmas~\ref{lemma:2PR} and \ref{lemma:two-point_deletion} to construct a suitable graph $G'$ such that $\omega(G)\geq \omega(G')$.  

We are left with one last case to examine: when there is no other type-$II$ tadpole or dipole than one containing the root edge. But in this case $G$ contains at most one short face, so we can immediately conclude that $\omega(G) >0$. 
\end{itemize}

This concludes the proof.
\end{proof}

\section{Melonic dominance}
\label{sec:LO}

Proposition~\ref{prop:positive_degree} and section~\ref{sec:subtraction} immediately imply the existence of the large $N$ expansion. We now set to prove that leading order graphs are melonic. We start with the following simple observation. 
\begin{lemma}\label{lemma:full-2pt}
Let $\mathcal{G}$ be a (non-amputated) two-point Feynman map. The associated amplitude $\mathcal{A}(\mathcal{G})_{\pmb a,\pmb b}$ can be written as:
\begin{equation*}
\mathcal{A}(\mathcal{G})_{\pmb a,\pmb b}=\lambda^{V(\mathcal{G})}f_{\mathcal{G}}(N)\pmb P_{\pmb a,\pmb b}
\end{equation*}
where $f_{\mathcal{G}}(N)$ is uniformly bounded.
\end{lemma}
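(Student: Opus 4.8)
The plan is to reproduce the two-step structure of Lemma~\ref{lemma:melon_tadpole}: first pin down the tensorial form by Schur's lemma, then bound the remaining scalar by peeling off the only dangerous subgraphs — melons and double-tadpoles — and invoking Proposition~\ref{prop:positive_degree} for what is left.

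\emph{Functional form.} The amplitude $\mathcal{A}(\mathcal{G})_{\pmb a,\pmb b}$ is assembled by contracting copies of the covariance $\pmb P$ with copies of the $\mathrm{O}(N)$-invariant vertex kernel $\delta^h$. Since only the projected modes propagate, $\mathcal{A}(\mathcal{G})$ vanishes on $\ker\pmb P$ and defines an $\mathrm{O}(N)$-equivariant endomorphism of the irreducible subspace $\mathrm{Im}\,\pmb P$; by Schur's lemma it equals a scalar times $\pmb P$. Writing this scalar as $\lambda^{V(\mathcal{G})}f_{\mathcal{G}}(N)$, the function $f_{\mathcal{G}}$ is rational in $N$ (for instance $f_{\mathcal{G}}(N)=\Tr\mathcal{A}(\mathcal{G})/(\lambda^{V(\mathcal{G})}\,\Tr\pmb P)$, a ratio of polynomials), because the entries of $\pmb P$ and of $\delta^h$ are rational in $N$ and contractions preserve rationality. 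In particular $f_{\mathcal{G}}$ has finitely many poles, all located at zeros of $\pmb P$-denominators — which, as in the explicit formulas \eqref{sym}, \eqref{antisym}, are of the form $N+c$ with $c\geq 0$. Hence $f_{\mathcal{G}}$ is automatically bounded for $N$ large, and it remains only to prove $\limsup_{N\to\infty}|f_{\mathcal{G}}(N)|<\infty$.

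\emph{Boundedness as $N\to\infty$.} I would argue by induction on $V(\mathcal{G})$, the base case $V(\mathcal{G})=0$ being the bare propagator ($f_{\mathcal{G}}=1$). If $\mathcal{G}$ contains neither a melon nor a double-tadpole subgraph, then neither does any of its finitely many stranded configurations $G$; Proposition~\ref{prop:positive_degree} then gives $\omega(G)\geq 0$, so by \eqref{eq:ampli} no $\mathcal{A}(G)$ grows with $N$, and summing over configurations, $|f_{\mathcal{G}}(N)|$ stays bounded. If instead $\mathcal{G}$ contains a melon or a double-tadpole subgraph $M$, Lemma~\ref{lemma:melon_tadpole} gives $\mathcal{A}(M)=\lambda^{V(M)}f_M(N)\pmb P$ with $f_M$ uniformly bounded; since $\mathcal{A}(M)$ is a scalar multiple of a single propagator, replacing $M$ by a bare edge weighted by $f_M(N)$ reproduces $\mathcal{A}(\mathcal{G})$ exactly, so $\mathcal{A}(\mathcal{G})=f_M(N)\,\mathcal{A}(\mathcal{G}')$ for the connected two-point map $\mathcal{G}'$ obtained by this substitution, which satisfies $V(\mathcal{G}')=V(\mathcal{G})-V(M)<V(\mathcal{G})$. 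The induction hypothesis bounds $\mathcal{A}(\mathcal{G}')$, hence $\mathcal{A}(\mathcal{G})$. Note that $\mathcal{G}'$ may acquire new melon or double-tadpole subgraphs through the contraction, but this is harmless since the induction runs on the number of vertices, which strictly decreases at each step.

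\emph{Main obstacle.} The delicate point is the scaling bookkeeping in the no-melon/no-double-tadpole case: $\mathcal{A}(\mathcal{G})$ is an operator proportional to $\pmb P$, so $|f_{\mathcal{G}}(N)|=\|\mathcal{A}(\mathcal{G})\|$ (using $\|\pmb P\|=1$), and this must be reconciled with the degree bound, taking into account that a stranded configuration whose boundary graph carries self-loops contributes an index structure of larger operator norm. This is controlled because such ``external traces'' are matched, inside $\mathcal{A}(\mathcal{G})=f_{\mathcal{G}}(N)\pmb P$, by the correspondingly suppressed trace components of $\pmb P$ (the $1/(N+c)$ prefactors in \eqref{sym}); equivalently, one refines the bound $\omega\geq 0$ by one power of $N$ per self-loop of the boundary graph, using the face-counting of Appendix~\ref{ap:bounds}. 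One should also check that the contracted map $\mathcal{G}'$ is again an admissible two-point Feynman map, which is immediate.
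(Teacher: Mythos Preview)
Your Schur-lemma step matches the paper exactly. For boundedness, the paper's proof is a single sentence: since Theorem~\ref{theorem_princ} is already established at this point in the text, ``consistency with the existence of the large $N$ expansion'' forces $f_{\mathcal{G}}(N)=\mathcal{O}(1)$. Concretely, closing $\mathcal{G}$ with an edge gives a connected vacuum map with amplitude $\lambda^{V(\mathcal{G})}f_{\mathcal{G}}(N)\,\Tr\pmb P$; the large $N$ expansion bounds this by $\mathcal{O}(N^5)$, and $\Tr\pmb P\sim N^5$ finishes the job.

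Your inductive peeling of melons and double-tadpoles followed by Proposition~\ref{prop:positive_degree} is correct, but it re-derives, for one map at a time, the content of Theorem~\ref{theorem_princ} itself (the subtraction of Section~\ref{sec:subtraction} plus Proposition~\ref{prop:positive_degree}). What you gain is a self-contained bound on an individual map amplitude; what the paper's route buys is a one-line proof. The ``main obstacle'' you flag --- matching the closed-graph degree bound to an operator-norm estimate on an open two-point amplitude, with boundary self-loops to worry about --- is a genuine technical nuisance on your route, but it evaporates if you first close $\mathcal{G}$ and bound the resulting scalar vacuum amplitude via Theorem~\ref{theorem_princ}, which is effectively what the paper does.
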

\begin{proof}
The irreducibility of the tensor representation, together with Schur's lemma, immediately imply that the amplitude is proportional to the projector $\pmb P$. Furthermore, consistency with the existence of the large $N$ expansion requires that $f_{\mathcal{G}}(N)$ is uniformly bounded.
\end{proof}

The next two lemmas demonstrate that many of the stranded configurations which we could not exclude to be of vanishing degree in the previous section, in fact cannot contribute to the leading order. This results from the same type of cancellations we already relied on in section~\ref{sec:subtraction}. But now that the existence of the large $N$ expansion has been established, we can be more systematic.
\begin{lemma}
Let $\mathcal{G}$ be a (connected and vacuum) Feynman map. If $\mathcal{G}$ has a generalized double-tadpole then it is subleading, that is:
\begin{equation}
\vert A(\cG) \vert \leq K N^4 \,.
\end{equation}
for some constant $K>0$.
\label{lemma:gen_double_tadpole}
\end{lemma}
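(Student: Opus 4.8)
The plan is to exploit the structure of a generalized double-tadpole in essentially the same way Lemma~\ref{lemma:melon_tadpole} handled the elementary double-tadpole, but now combined with the uniform bound on two-point sub-amplitudes provided by Lemma~\ref{lemma:full-2pt}. First I would isolate the distinguished vertex $v$ carrying the two self-loops of the generalized double-tadpole, together with the two (non-trivial) two-point functions $\mathcal{H}_1$, $\mathcal{H}_2$ dressing the loop edges. Cutting these two two-point subgraphs out decomposes the map $\cG$ into: the pieces $\mathcal{H}_1$, $\mathcal{H}_2$; the vertex $v$ with its two short loops; and the remaining (connected) part $\cG'$ to which the other half of the double-tadpole edge attaches. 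By Lemma~\ref{lemma:full-2pt}, the amplitude of $\mathcal{H}_i$ is $\lambda^{V(\mathcal{H}_i)} f_{\mathcal{H}_i}(N)\,\pmb P$ with $f_{\mathcal{H}_i}$ uniformly bounded, so up to a bounded multiplicative constant we may as well replace each $\mathcal{H}_i$ by a single (renormalized) propagator $\pmb P$ — that is, reduce to the case of an \emph{honest} double-tadpole map contracted into the rest of the diagram.

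The second step is the power counting for this reduced configuration. With $\mathcal{H}_1$, $\mathcal{H}_2$ collapsed, the double-tadpole vertex $v$ contributes a factor $N^{-5}$ and glues, via the two $\pmb P$'s, to $\cG'$; its internal face structure is exactly that analyzed before Lemma~\ref{lemma:melon_tadpole} (or via the bounds of Appendix~\ref{ap:bounds}): at most two faces of length one and two faces of length two are created internally to the double-tadpole, so the local contribution of $v$ together with its loops is at most $N^{4}\cdot N^{-5}=N^{-1}$ relative to simply re-attaching the two external strands. Concretely, I would show that $\cG$ can be written, after collapsing the dressings, as a double-tadpole two-point insertion on an edge of a smaller vacuum map $\cG'$; by Lemma~\ref{lemma:melon_tadpole} that insertion carries a factor $f_{\mathcal{G}_{\mathrm{dt}}}(N)=\mathcal{O}(1/N)$ times $\pmb P$. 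Since $\cG'$ is itself a connected vacuum Feynman map, the existence of the large $N$ expansion (Proposition~\ref{prop:positive_degree} together with the subtraction of section~\ref{sec:subtraction}) gives $|\mathcal{A}(\cG')|\leq K' N^{5}$ — the maximal scaling of any vacuum amplitude, coming from the explicit $N^{5}$ in \eqref{eq:degree} and $\omega(\cG')\geq 0$. Multiplying the two bounds and absorbing the bounded factors $f_{\mathcal{H}_i}$, $\lambda$-powers and combinatorial constants into $K$ yields $|\mathcal{A}(\cG)|\leq K N^{4}$.

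There is one subtlety to address carefully: after collapsing $\mathcal{H}_1$ and $\mathcal{H}_2$ and/or deleting the double-tadpole, the residual map $\cG'$ must genuinely be a legitimate vacuum Feynman map to which the $\omega\geq 0$ bound applies — in particular it must be connected (which it is, since the double-tadpole attaches to $\cG$ by a single edge, whose removal together with the vertex leaves a connected graph) and one must make sure that no spurious double-tadpole or melon is created in a way that would invalidate the use of \eqref{eq:model_subtracted}; but since we are only using the \emph{upper} bound $\omega(\cG')\geq 0$, valid for all stranded configurations with no melon or double-tadpole, and since any stranded configuration of $\cG'$ either has non-negative degree or (in the melon/double-tadpole case) is bounded via Lemma~\ref{lemma:melon_tadpole}, the estimate $|\mathcal{A}(\cG')|\leq K'N^5$ holds unconditionally for any connected vacuum map.

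The main obstacle I anticipate is bookkeeping the reduction cleanly: one must verify that collapsing the two dressings does not change the $N$-scaling except through the bounded factors $f_{\mathcal{H}_i}$ — i.e. that the faces running through $\mathcal{H}_1$, $\mathcal{H}_2$ are correctly accounted for when these blobs are replaced by bare $\pmb P$'s — and that the "double-tadpole insertion on an edge of $\cG'$" decomposition is exact rather than merely schematic. This is the step where the proportionality to $\pmb P$ from Schur's lemma (Lemma~\ref{lemma:melon_tadpole} and Lemma~\ref{lemma:full-2pt}) does the real work: it guarantees that the dressings factor out as scalars, so that the face counting of the ambient map $\cG'$ is literally unaffected and only the $\mathcal{O}(1/N)$ scalar from the double-tadpole remains.
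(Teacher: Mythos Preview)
Your proposal is correct and follows essentially the same route as the paper. The only difference is in the bookkeeping: the paper also collapses the \emph{third} two-point piece $\mathcal{C}$ (the rest of the graph connecting the two external legs of the generalized double-tadpole) via Lemma~\ref{lemma:full-2pt}, so that the reduced map $\cG'$ is literally a closed bare double-tadpole with amplitude $f_1^{\pmb P}(N)\,\pmb P_{\pmb a,\pmb a}=\mathcal{O}(N^{-1})\cdot\mathcal{O}(N^5)=\mathcal{O}(N^4)$. This sidesteps the subtlety you discuss about whether the residual $\cG'$ might contain melons or double-tadpoles: there is no residual general vacuum map left to bound, and no need to invoke the global estimate $|\mathcal{A}(\cG')|\leq K'N^5$ separately.
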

\begin{proof}
\begin{figure}[htbp]
\centering
\includegraphics[scale=1]{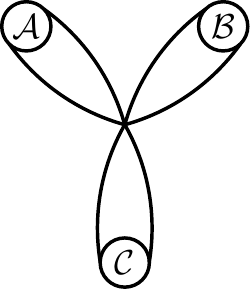}
\caption{Vacuum graph with a generalized double-tadpole}
\label{fig:gen_double_tadpole}
\end{figure}
Up to embedding information (which does not affect large $N$ scalings), $\mathcal{G}$ must have the configuration depicted in Fig.~\ref{fig:gen_double_tadpole}, where $\mathcal{A}$, $\mathcal{B}$ and $\mathcal{C}$ are two-point Feynman maps. By Lemma~\ref{lemma:full-2pt}, there exists three uniformly bounded functions $f_{\mathcal{A}}$, $f_{\mathcal{B}}$ and $f_{\mathcal{C}}$ such that:
\begin{equation}
\mathcal{A}(\mathcal{G})=\lambda^{V(\mathcal{A})+V(\mathcal{B})+V(\mathcal{C})}f_{\mathcal{A}}(N) f_{\mathcal{B}}(N) f_{\mathcal{C}}(N) \mathcal{A}(\mathcal{G}')\,,
\end{equation}
where $\cG'$ is the map obtained by replacing $\mathcal{A}$, $\mathcal{B}$ and $\mathcal{C}$ with bare propagators $\pmb P$.  $\cG'$ is nothing but a double-tadpole graph, so from the computations of section~\ref{sec:subtraction}, we also know that $\mathcal{A}(\cG')\sim f_1^{\pmb P}(N)\pmb P_{\pmb a,\pmb a} = \mathcal{O}(N^4)$.
\end{proof}

\begin{lemma}\label{lemma:CS}
Let $\cG$ be a (connected and vacuum) Feynman map. If $\cG$ contains a generalized tadpole or a type-$I$ dipole, then it is subleading.
\end{lemma}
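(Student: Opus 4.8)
The plan is to argue by induction on the number of vertices $V(\cG)$, the inductive step being powered by a Cauchy--Schwarz inequality applied directly at the level of (resummed) Feynman-map amplitudes --- precisely the mechanism advertised at the start of this section, and the device that lets us see the cancellations to which the stranded degree count of Section~\ref{sec:deletions} is blind. The finitely many vacuum maps whose vertex number lies below a fixed small threshold would be dealt with as base cases, by directly inspecting the degree formula~\eqref{eq:degree}. For the inductive step, I would first use Lemma~\ref{lemma:full-2pt} to strip every two-point blob dressing the internal edges of the offending subgraph, trading it for a bare propagator $\pmb P$ at the cost of a uniformly bounded multiplicative factor; one may then assume that $\cG$ contains a \emph{bare} single-tadpole, or a \emph{bare} type-$I$ dipole, which I call $S$.

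The heart of the argument is to cut $\cG$ along the external legs of $S$. This exhibits the amplitude as a pairing $\mathcal{A}(\cG) = \langle \mathcal{A}(S), \mathcal{A}(\cR)\rangle$, where $\cR$ is the complementary four-point (tadpole case) or eight-point (dipole case) map and the bracket contracts the $20$, resp. $40$, shared indices. Cauchy--Schwarz then gives $|\mathcal{A}(\cG)|^{2} \le \mathcal{A}(S\cup\bar S)\,\mathcal{A}(\cR\cup\bar\cR)$, where $X\cup\bar X$ is the vacuum map obtained by gluing $X$ to a mirror copy of itself along their common boundary (if the cut disconnects $\cR$, one treats its components separately). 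For the factor $\mathcal{A}(\cR\cup\bar\cR)$ I only need the leading bound $KN^{5}$: $\cR\cup\bar\cR$ is a vacuum map, and by factoring out, one at a time, its generalized melon and generalized double-tadpole two-point subfunctions (whose amplitudes are uniformly bounded multiples of $\pmb P$ by Lemma~\ref{lemma:full-2pt} and Lemma~\ref{lemma:melon_tadpole}) one is left with a vacuum map containing neither, whose degree is non-negative by Proposition~\ref{prop:positive_degree}.

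Everything then reduces to the bound $\mathcal{A}(S\cup\bar S) \le KN^{4}$, i.e. to the statement that doubling the problematic subgraph is \emph{strictly} subleading. When $S$ is a single-tadpole, this is immediate: $S\cup\bar S$ has only two vertices, each still carrying a self-loop, hence still contains a generalized tadpole, and the induction hypothesis applies since $2 < V(\cG)$ in the inductive regime. When $S$ is a type-$I$ dipole, $S\cup\bar S$ is one of finitely many four-vertex vacuum maps, organized by the boundary graph $S_\partial$ (classified as in Fig.~\ref{fig:dipole_config} and its analogues); here the type-$I$ hypothesis --- that $S$ sits inside no generalized melon, double-tadpole or dipole-tadpole --- is exactly what excludes the gluings for which $S\cup\bar S$ would rebuild a leading vacuum melon, and in each remaining configuration one concludes either via the induction hypothesis (a generalized tadpole or type-$I$ dipole survives in $S\cup\bar S$, which has $4 < V(\cG)$ vertices) or by a short direct count with~\eqref{eq:degree}. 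Feeding the two bounds into Cauchy--Schwarz gives $|\mathcal{A}(\cG)| \le KN^{9/2}$; since $\mathcal{A}(\cG)$ is asymptotically an integer power of $N$, this forces $|\mathcal{A}(\cG)| \le KN^{4}$.

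I expect the genuine obstacle to be this final step in the dipole case: carrying out the finite-but-unwieldy case distinction over the boundary structures of a type-$I$ dipole and checking uniformly that doubling can never manufacture a melonic (or more generally degree-zero) vacuum configuration --- this is where the irreducibility enters indirectly, through the fact that every short-face-carrying structure other than a melon has already been shown recursively removable. A secondary nuisance is the base-case bookkeeping at small $V(\cG)$: after stripping the dressings one is left with only finitely many bare vacuum maps, but they must be gone through by hand.
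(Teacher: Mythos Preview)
Your overall strategy --- Cauchy--Schwarz at the level of Feynman-map amplitudes, splitting off a small subgraph whose doubling is then checked to be strictly subleading --- is exactly the paper's. But there is a genuine gap in your execution: the complement $\cR$ need not be connected, and your bound collapses when it is not.

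Concretely, if removing the bare dipole $S$ from $\cG$ leaves $k\ge 2$ connected components, then $\cR\cup\bar\cR$ is a disjoint union of $k$ connected vacuum maps and $\mathcal{A}(\cR\cup\bar\cR)=\mathcal{O}(N^{5k})$, not $\mathcal{O}(N^{5})$. Your inequality becomes $|\mathcal{A}(\cG)|^{2}\le \mathcal{O}(N^{4})\cdot\mathcal{O}(N^{5k})$, which for $k=2$ yields only $\mathcal{O}(N^{7})$ --- useless. The parenthetical ``one treats its components separately'' does not repair this: treating the components separately is precisely what produces the $N^{5k}$. Type-$I$ dipoles \emph{do} occur with disconnected complement; this is the content of cases (b) and (c) in Fig.~\ref{fig:dipole_LO}. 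The paper's fix is to choose the Cauchy--Schwarz cut differently: rather than isolating $S$, one peels off one connected component $\cR_i$ at a time, keeping the ``large'' factor connected at each stage, and iterates. The residual ``small'' factor is then not always the doubled dipole $S\cup\bar S$; in case~(c) it is a different four-vertex map (with twelve dipole subgraphs) whose subleading-ness is established by a direct face count via Appendix~\ref{ap:bounds}. The same issue arises in the tadpole case when the four-point complement splits $2{+}2$; the paper disposes of that situation first, through Lemma~\ref{lemma:gen_double_tadpole}, so that the remaining $\mathcal{S}$ may be taken connected.

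A secondary confusion: the type-$I$ hypothesis concerns how $S$ sits inside $\cG$, not $S$ as an abstract map. The doubled dipole $S\cup\bar S$ is one fixed four-vertex map, independent of whether $S$ was type~$I$ or type~$II$ in $\cG$; the hypothesis does not ``exclude gluings'' here. (As it happens, $S\cup\bar S$ is subleading anyway --- Lemma~\ref{lemma:melon4pt} together with the $H_0$ case of Lemma~\ref{lemma:particular_cases} --- so this misreading is harmless for that particular factor. But it shows you have not isolated where the type-$I$ assumption actually enters: in the paper it is used only to organise the connectedness case analysis of Fig.~\ref{fig:dipole_LO}.)
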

\begin{proof}
Let us first assume that $\cG$ contains a generalized tadpole. From Lemmas~\ref{lemma:full-2pt} and \ref{lemma:gen_double_tadpole}, it is sufficient to deal with the situation of a single-tadpole, as depicted in Fig.~\ref{fig:tadpole_LO}, where $\mathcal{S}$ is a \emph{connected} four-point map. We can then apply the Cauchy-Schwarz inequality to find:
\begin{equation}
\mathcal{A}(\cG)^2 = \mathcal{A}\left(\vcenter{\hbox{\includegraphics[scale=.5]{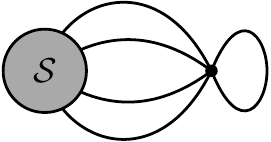}}} \right)^2 \leq \mathcal{A}\left(\vcenter{\hbox{\includegraphics[scale=.5]{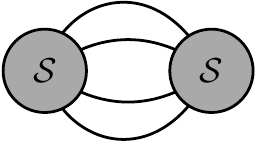}}}\right) \mathcal{A}\left(\vcenter{\hbox{\includegraphics[scale=.5]{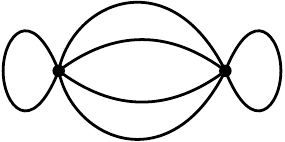}}} \right)\,.
\end{equation}
The first term on the right is the amplitude of a connected map, and is therefore in $\mathcal{O}(N^5)$. Furthermore, from Lemma~\ref{lemma:particular_cases} (subcase $H_0$), the degree of any stranded configuration contributing to the second term is at least $1$. Hence this term is in $\mathcal{O}(N^4)$. As a result, $\mathcal{A}(\cG)$ is at most in $\mathcal{O}(N^{9/2})$, which implies it is subleading. 
\begin{figure}[htbp]
\centering
\includegraphics[scale=1]{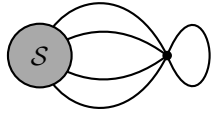}
\caption{$\cG$ contains a tadpole, and the submap $\mathcal{S}$ is assumed to be connected.}
\label{fig:tadpole_LO}
\end{figure}

Let us now assume that $\cG$ contains a type-$I$ dipole. Without loss of generality, and up to embedding, we can assume that we are in one of the situations represented in Fig.~\ref{fig:dipole_LO}, where the submaps $\mathcal{S}_i$ are all connected\footnote{In Fig.~\ref{fig:dipole_LOc}, we have used Lemma~\ref{lemma:full-2pt} to suppress one potentially non-trivial two-point subgraph.}. In the first case (Fig.~\ref{fig:dipole_LOa}), we can again invoke the Cauchy-Schwarz inequality, which implies:
\begin{equation}
\mathcal{A}(\cG)^2 = \mathcal{A}\left(\vcenter{\hbox{\includegraphics[scale=.5]{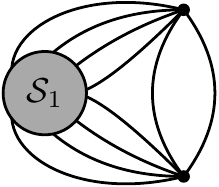}}} \right)^2 \leq \mathcal{A}\left(\vcenter{\hbox{\includegraphics[scale=.5]{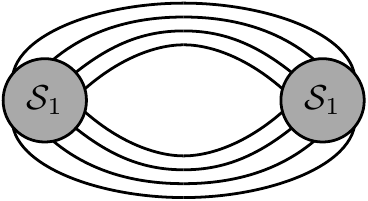}}}\right) \mathcal{A}\left(\vcenter{\hbox{\includegraphics[scale=.5]{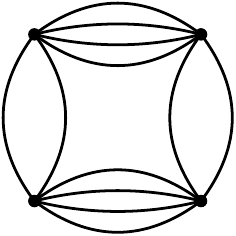}}} \right)\,.
\end{equation}
The first term on the right is the amplitude of a connected map, and the second term is subleading by Lemmas~\ref{lemma:melon4pt} and \ref{lemma:particular_cases} (subcase $H_0$). As before, we conclude that $\cG$ is subleading.

The second case (Fig.~\ref{fig:dipole_LOb}) can be dealt with by successive applications of the Cauchy-Schwarz inequality:
\begin{equation}
\mathcal{A}(\cG)^2 = \mathcal{A}\left(\vcenter{\hbox{\includegraphics[scale=.5]{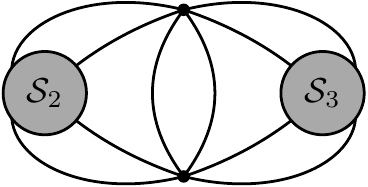}}} \right)^2 \leq \mathcal{A}\left(\vcenter{\hbox{\includegraphics[scale=.5]{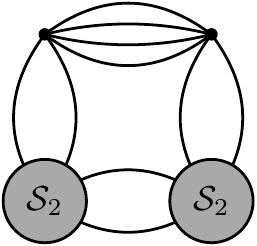}}}\right) \mathcal{A}\left(\vcenter{\hbox{\includegraphics[scale=.5]{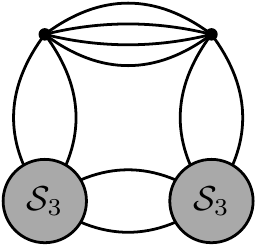}}} \right)\,,
\end{equation}
and
\begin{equation}
\mathcal{A}\left(\vcenter{\hbox{\includegraphics[scale=.5]{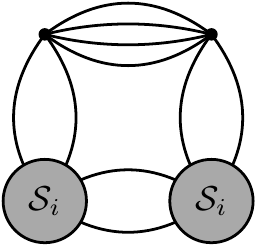}}} \right)^2 \leq \mathcal{A}\left(\vcenter{\hbox{\includegraphics[scale=.5]{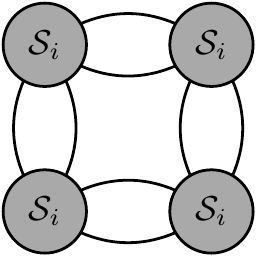}}}\right) \mathcal{A}\left(\vcenter{\hbox{\includegraphics[scale=.5]{dipole_LO_aaa.pdf}}} \right)\,.
\end{equation}
The fact that all the graphs in these relations are connected, while one of them is subleading, allows to conclude that $\cG$ is itself subleading. 

We proceed in a similar way for the last case (Fig.~\ref{fig:dipole_LOc}) and obtain:
\begin{equation}
\mathcal{A}(\cG)^2 = \mathcal{A}\left(\vcenter{\hbox{\includegraphics[scale=.5]{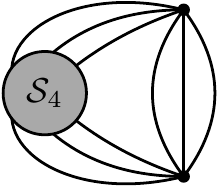}}} \right)^2 \leq \mathcal{A}\left(\vcenter{\hbox{\includegraphics[scale=.5]{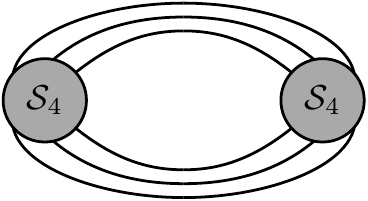}}}\right) \mathcal{A}\left(\vcenter{\hbox{\includegraphics[scale=.5]{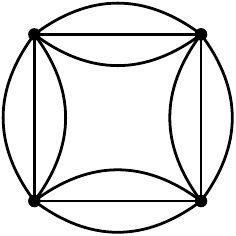}}} \right)\,.
\end{equation}
It is then sufficient to show that the second term on the right is subleading. We can in fact prove that any of the stranded configurations of this map has strictly positive degree.  Indeed, this map has no tadpole and twelve dipoles, thus $F_1=0$ and $F_2\leq 12$. Moreover, any other face has length at least four: $F_3=0$. Using the bounds of Appendix~\ref{ap:bounds} with $\mathcal{I}=15\times 4$ and $k=3$, we have $F \leq \lfloor \frac{60+24}{4}\rfloor=21$. Therefore $\omega \geq 5 + 5 \times 4 - 21 = 4$, which concludes the proof. 
\begin{figure}[htbp]
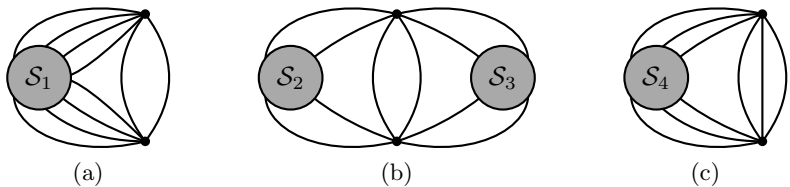

\centering
\subfloat[\label{fig:dipole_LOa}]{\includegraphics[scale=1]{dipole_LO_a.pdf}}
\hspace{1cm}
\subfloat[\label{fig:dipole_LOb}]{\includegraphics[scale=1]{dipole_LO_b.pdf}}
\hspace{1cm}
\subfloat[\label{fig:dipole_LOc}]{\includegraphics[scale=1]{dipole_LO_c.pdf}}
\caption{Three configurations of a type-$I$ dipole submap.}
\label{fig:dipole_LO}
\end{figure}
\end{proof}

\begin{proposition}\label{propo:LO}
Let $\cG$ be a (connected and vacuum) Feynman map. $\cG$ is leading order if and only if it is melonic. 
\end{proposition}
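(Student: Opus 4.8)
The plan is to prove both implications, handling the short direction first and devoting the bulk of the argument to the converse.

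\medskip

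\emph{Melonic $\Rightarrow$ leading order.} This is a direct scaling computation by induction on the number of vertices. A melonic vacuum map is obtained from the elementary two-vertex map (six edges joining two vertices) by a finite sequence of melonic two-point insertions on edges. For the elementary vacuum map, inspection of the $5$-simplex kernel \eqref{eq:5-simplex} shows there is a stranded configuration in which the $30$ corners are organised into $15$ faces of length two, so that $F=15$ and $\omega = 5 + 5\cdot 2 - 15 = 0$; moreover, arguing as in the proof of Lemma~\ref{lemma:melon_tadpole}, its leading configurations cannot cancel. A melonic two-point insertion adds two vertices (hence $+10$ to the degree) and, in its optimal stranded configuration, closes exactly ten new faces while preserving the five faces running through the insertion edge (hence $-10$), so the degree is unchanged; the same no-cancellation argument applies at each step. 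Hence every melonic map has $\mathcal{A}(\cG) = K\,N^{0}\bigl(1+\mathcal{O}(1/N)\bigr)$ with $K\neq 0$, and is leading order.

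\medskip

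\emph{Leading order $\Rightarrow$ melonic.} We prove the contrapositive: a non-melonic $\cG$ is subleading. By Lemma~\ref{lemma:gen_double_tadpole} and Lemma~\ref{lemma:CS}, we may assume that $\cG$ contains no generalized double-tadpole, no generalized tadpole and no type-$I$ dipole. Since generalized dipole-tadpoles and generalized double-tadpoles each contain a generalized tadpole, every dipole of $\cG$ must then be contained in a generalized melon. We argue by strong induction on $V(\cG)$, in parallel with the analogous two-point statement: \emph{a connected two-point map $M$ whose amplitude coefficient $f_M(N)$ does not tend to zero is a melonic two-point function} (built from a bare edge by melonic insertions), while \emph{all other connected two-point maps satisfy $f_M(N)\to 0$}. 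This companion statement is established by the same reductions specialised to the two-point case, and, combined with Lemma~\ref{lemma:full-2pt}, it is what allows one to ``peel off'' dressed melonic two-point subgraphs.

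\medskip

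If $\cG$ contains a proper generalized melon two-point subgraph $M$, then by Lemma~\ref{lemma:full-2pt} we have $\mathcal{A}(\cG) = \lambda^{V(M)} f_M(N)\,\mathcal{A}(\cG')$, where $\cG'$ is obtained by replacing $M$ with a bare propagator and $f_M$ is bounded. If $M$ is not a melonic two-point function, then $f_M(N)\to 0$ by the companion statement, so $\cG$ is subleading. If $M$ is a melonic two-point function, then $f_M^{(0)}\neq 0$ and $\cG'$ has strictly fewer vertices; $\cG'$ is not melonic (else re-inserting $M$ would make $\cG$ melonic), so $\cG'$ is subleading by the induction hypothesis, and hence so is $\cG$. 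It remains to treat the case where $\cG$ has no proper two-point subgraph. If $\cG$ contains a dipole, it lies in a generalized melon which, lacking any proper two-point subgraph to dress it, must be the undressed closed melon; but then $\cG$ is the elementary two-vertex vacuum map, which is melonic, a contradiction. Hence $\cG$ has no tadpole and no dipole, so $F_1(G)=F_2(G)=0$ for every stranded configuration $G$, and \eqref{eq:degree_simplified} gives $\omega(G)\geq 5>0$: $\cG$ is subleading. This exhausts all cases.

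\medskip

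\emph{Main obstacle.} The delicate point is the parallel induction on two-point functions and, within it, the combinatorial bookkeeping guaranteeing that no non-melonic configuration survives at leading order. Concretely, one must verify that the combination of peeling off dressed melonic two-point subgraphs via Lemma~\ref{lemma:full-2pt} and the Cauchy--Schwarz estimates of Lemma~\ref{lemma:CS} accounts for \emph{every} stranded configuration, including the marginal degree-zero graphs left unresolved in Section~\ref{sec:deletions} which are not melons. Ensuring that the ``stuck'' case -- no proper two-point subgraph -- genuinely forces $\cG$ to be the elementary melon or else to have strictly positive degree is where the structural constraints established in Section~\ref{sec:deletions} and the classification of type-$II$ configurations must be invoked most carefully.
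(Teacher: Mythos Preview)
Your argument shares the paper's backbone (Lemmas~\ref{lemma:gen_double_tadpole} and~\ref{lemma:CS} kill tadpoles and type-$I$ dipoles, leaving only type-$II$ dipoles sitting inside generalized melons), but the way you close the loop is more convoluted and leaves two gaps. First, your case split is not a dichotomy: you treat ``$\cG$ contains a generalized melon $M$'' and then jump to ``$\cG$ has no proper two-point subgraph'', omitting the case where $\cG$ has a non-trivial two-point subgraph that is not a generalized melon. (This case is in fact harmless --- no generalized melon implies no dipole, hence $F_1=F_2=0$ and $\omega\geq 5$ --- but it must be said.) Second, and more seriously, your parallel induction with the ``companion statement'' is not set up to avoid circularity. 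When the generalized melon $M$ you factor out satisfies $V(M)=V(\cG)$, the companion statement for $M$ lives at the \emph{same} induction step as the vacuum statement you are proving; invoking it amounts to assuming what you want. To repair this you would have to prove the companion statement at step $V$ \emph{first}, by descending into a non-trivial dressing $H$ of $M$ (which has $V(H)<V$) rather than by closing $M$ --- a manoeuvre you gesture at in your ``Main obstacle'' paragraph but never carry out.

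The paper sidesteps both difficulties with a single device: it begins by recursively contracting \emph{all} melon two-point subgraphs, obtaining a leading-order $\cG'$ with no melon at all. Any remaining short face then sits in a generalized melon; choosing one that is \emph{minimal for inclusion} forces one of its internal two-point dressings to be non-empty (else $\cG'$ would contain a bare melon) and, by minimality, free of any dipole or tadpole. That dressing is therefore subleading by the $F_1=F_2=0$ bound, and Lemma~\ref{lemma:full-2pt} transports this to $\cG'$. No companion statement, no parallel induction, no ambiguous case split --- the minimality argument does the work your induction was meant to do, in one stroke.
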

\begin{proof}
From section \ref{sec:subtraction}, we already know that melonic graphs are leading order. To prove the converse, let us consider a leading order Feynman map $\cG$. We can start by recursively removing all melon two-point functions from $\cG$, to obtain a leading-order map $\cG'$ with no melon. By definition, $\cG$ is melonic if and only if $\cG'$ is the ring map. Let us assume it is not. Then $\cG'$ must be able to support short faces, otherwise it could not be leading order. Given Lemma~\ref{lemma:CS}, the only possibility left is that $\cG'$ contains type-$II$ dipoles whose canonically associated two-point functions are generalized melons (as in the right panel of Fig.~\ref{fig:not_easy_2}). Considering a minimal such submap for the inclusion, which we call $\mathcal{S}$, leads to a contradiction. Indeed, at least one of the two-point functions in the generalized melon $\mathcal{S}$ must be non-empty, otherwise $\cG'$ would contain a melon submap. By minimality, this two-point function cannot contain any dipole or tadpole, therefore it is necessarily subleading. By Lemma~\ref{lemma:full-2pt}, $\cG'$ itself must be subleading, which yields the desired contradiction. Consequently, $\cG'$ is the ring map and $\cG$ is melonic, as claimed.    
\end{proof}

\section{Further comments and outlook}

We have established that irreducible tensor models with $5$-simplex interaction admit a melonic large $N$ expansion. Along the way, we had to estimate the large $N$ behaviour of a number of four- and eight-point functions. From this analysis, it is straightforward to include other effective interactions in our models. Any boundary graph we have explicitly investigated may lead to non-vanishing interaction terms if it does not contain self-loops. This includes, for instance, the boundary graphs represented in Figs.~\ref{fig:tadpole_config_d}, \ref{fig:tadpole_config_e}, or \ref{fig:dipole_config}. With a bit more effort, one could determine the optimal scaling of all effective $n$-point interactions which contribute at leading order, for (say) $n \leq 6$. This would be a prerequisite for potential applications of our results to large $N$ QFT, where any such interaction that is also relevant in the renormalization sense would have to be included in the bare action. Even though we chose to work in vanishing dimension for simplicity, our main theorems hold in higher dimensions with minimal changes, namely: the algebraic equation defining $F_{\pmb P}^{(0)}$ in Theorem~\ref{theorem_LO} should in general be replaced by a suitable (integro-differential) Schwinger-Dyson equation. 

In a similar spirit, it would be interesting to investigate whether our results can be generalized to fermionic tensor fields transforming under the compact symplectic group $\mathrm{Sp}(N)$, by analogy with the rank-$3$ construction of \cite{Carrozza:2018psc}. 

Beyond rank-$5$, a number of generalizations could be explored. First, it would be interesting to study irreducible rank-$4$ models with $4$-simplex interaction. The main difficulty we may expect in this case is that, just like in rank-$3$, problematic configurations requiring a detailed combinatorial analysis will also include triangle submaps.\footnote{This was actually one of the reasons why we decided to focus on rank-$5$ in the present work.} Finally, it does not seem completely unrealistic to imagine that the present proof could be generalized to arbitrary rank $r \geq 6$. This could be a worthwhile endeavour in view of potential applications of symmetric random tensors to statistics and applied mathematics\cite{Evnin:2020ddw, Gurau:2020ehg}. However, at a minimum, one would need to find a more systematic way of investigating and bounding particular two-point stranded subgraphs, such as those of Lemma~\ref{lemma:particular_cases}. Even if one could succeed in this, this would presumably lead to a very technical proof. From this point of view, it would be highly desirable to develop alternative methods that do not rely so heavily on inductive combinatorial moves.

\section*{Acknowledgements}

We would like to thank Razvan Gurau for his collaboration in the early stages of this project and for useful comments at a later stage.

SC is supported by a Radboud Excellence Fellowship from Radboud University in Nijmegen, the Netherlands. The work of SH is supported by the European Research Council (ERC) under the European Union's Horizon 2020 research and innovation program (grant agreement No818066) and by Deutsche Forschungsgemeinschaft (DFG, German Research Foundation) under Germany's Excellence Strategy EXC-2181/1 - 390900948 (the Heidelberg STRUCTURES Cluster of Excellence)

In the initial stages of this work, SC was supported by Perimeter Institute. Research at Perimeter Institute is supported, in part, by the Government of Canada through the Department of Innovation, Science and Economic Development Canada, and by the Province of Ontario through the Ministry of Colleges and Universities.

\appendix

\section{Bounds on the number of faces}
\label{ap:bounds}

We wish to bound the number of faces of a generic two-point graph. In order to do so, we will follow a method developed in Appendix C of \cite{Benedetti:2017qxl}. We label $x,y$ the external legs of the graph. We call $V$, $E$, $F$ and $F_i$, the number of vertices, edges, faces and faces of length $i$ in the subgraph. We also define $l$ as the sum of the length of the open strands of the subgraph. Then, we can write the sum of the length of the internal faces as:
\begin{equation}
\mathcal{I}=5E-l=15V-5-l
\end{equation}
because for a $6$-valent $2$-point graph, $6V=2E+2$.

Moreover, as stated in \cite{Benedetti:2017qxl}, for $F=\sum_i F_i$ and $\mathcal{I}=\sum_i iF_i$, we have the following bounds:
\begin{equation}
\forall k\geq 2~, \qquad F\leq \lfloor \frac{\mathcal{I}}{k+1} +\sum_{1\leq i\leq k} \frac{k+1-i}{k+1}F_i \rfloor 
\label{eq:comb_bounds}
\end{equation}
For $k=2$, we obtain $F\leq \lfloor(\mathcal{I}+2F_1+F_2)/3 \rfloor$. 

Likewise, we can write:
\begin{equation}
l  = \sum_{i \geq 0} i l_i \,, \quad 5p = \sum_i l_i \,,
\end{equation}
where $l_i$ is the number of external strands of length $i$, and $2p$ is the number of external legs (here, $p=2$). We then have the bounds:
\begin{equation}
\forall k\geq 1, \qquad 5p \leq \lfloor \frac{l}{k+1} +\sum_{0 \leq i\leq k} \frac{k+1-i}{k+1} l_i \rfloor 
\label{eq:comb_bounds_l}
\end{equation}
For $k=1$ and $p=2$, this yields in particular: $l \geq 20 - 2 l_0 - l_1$.

\section{Proof of lemma \ref{lemma:particular_cases}}
\label{app:particular}

\begin{proof}

We want to gain a factor $N$ by deleting the two-point graphs of Fig.~\ref{fig:particular_cases}. Therefore, the lemma will follow if we can prove $d(S_{\partial},B_u)\leq 5V-1-F(S)$ for each subgraph $S$ of Fig.~\ref{fig:particular_cases} and with $B_u$ the boundary graph of the unbroken propagator. 
In order to bound the number of internal faces, we are going to follow the method of \cite{Benedetti:2017qxl} and use the bounds of appendix \ref{ap:bounds}. 

\textbf{Graph $H_0$.} Here $V=2$ and $\mathcal{I}=25-l$. We must have $d(S_{\partial},B_u)\leq 9-F(S)$. We have two tadpoles and three dipoles thus $F_1 \leq 2$ and $F_2\leq 3$. 

\begin{figure}[H]
\centering
\includegraphics[scale=1]{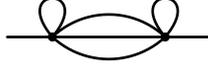}
\caption{The special case $H_0$}
\end{figure}

\begin{itemize}
\item Unbroken case: All external strands traverse and $d(S_{\partial},B_u)=0$. We can thus delete at most $9$ internal faces. We have at most three external strands of length one and two of length two. We thus have $l\geq 7$ and $\mathcal{I}\leq 18$. Then we have $F \leq \lfloor \frac{18+4+3}{3}\rfloor \leq 8$.
\item Broken case: Two external strands loop back. In this case, $d(S_{\partial},B_u)=1$ so we can delete at most $8$ internal faces. We can have all five external strands of length one: $l\geq 5$ and $\mathcal{I}\leq 10$. Here we have to consider faces of length three. We will need equation \eqref{eq:comb_bounds} for $k=3$:
\begin{equation}
F\leq \lfloor\frac{\mathcal{I}+3F_1+2F_2+F_3}{4}\rfloor
\end{equation}
We have $F_3\leq 6$. However, if $F_2=3$, $F_3=0$, if $F_2=2$, $F_3\leq 2$, if $F_2=1$, $F_3\leq 4$ and if $F_2=0$, $F_3\leq 6$. Therefore, $2F_2+F_3\leq 6$ and $F \leq \lfloor \frac{20+6+6}{4}\rfloor \leq 8$.
\item Doubly-broken case: Four external strands loop back. In this case, $d(S_{\partial},B_u)=2$ so we can delete at most $7$ internal faces. We can have three external strands of length one and two of length two: $l\geq 7$ and $\mathcal{I}\leq 18$. 
We thus have $F \leq \lfloor \frac{18+6+6}{4}\rfloor \leq 7$.
\end{itemize}

\textbf{Graph $H_1$.} We now look at the two-point subgraph $H_1$ represented in figure \ref{fig:h1_label}. It has $3$ vertices: $\mathcal{I}=40-l$. We now want $d(S_{\partial},B_u)\leq 14-F(S)$.  

\begin{figure}[htbp]
\centering
\includegraphics[scale=1]{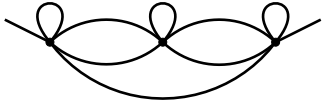}
\caption{The special case $H_1$.}
\label{fig:h1_label}
\end{figure}

\begin{itemize}
\item Unbroken case: In this case we can delete at most $14$ internal faces. The five external strands have at least length one. There can be at most one of length one and two of length two. We thus have $l\geq 11$ and $\mathcal{I} \leq 29$. There are three tadpoles and two dipoles, thus $F_1 \leq 3$ and $F_2 \leq 2$. We then have $F \leq \lfloor \frac{29+6+2}{3}\rfloor \leq 12$. 
\item Broken case:  we can delete at most $13$ internal faces. Now, there can be at most three external strands of length one and two of length two. Thus $l\geq 7$ and $\mathcal{I}\leq 33$. We still have $F_1\leq 3$ and $F_2 \leq 2$. We then have $F \leq \lfloor \frac{33+6+2}{3}\rfloor \leq 13$.
\item Doubly-broken case: we can delete at most $12$ internal faces and we still have $l\geq 7$. We need again to consider the faces of length three.
We can have at most eight faces of length three (four using the tadpoles and four using the strands between the two external points). However, if $F_2=2$ then $F_3\leq 4$, if $F_2=1$ then $F_3\leq 6$ and if $F_2=0$, $F_3\leq 8$. Thus we have $2F_2+F_3\leq 8$ and $F \leq \lfloor \frac{33+9+8}{4}\rfloor \leq 12$.
\end{itemize}

\textbf{Graph $H_2$.} Again $V=3$ and $\mathcal{I}=40-l$. There are two tadpoles and three dipoles so $F_1 \leq 2$ and $F_2 \leq 3$. We again want to prove $d(S_{\partial},B_u)\leq 14-F(S)$
\begin{figure}[htbp]
\centering
\includegraphics[scale=1]{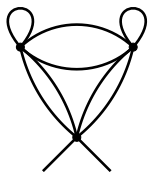}
\caption{The special case $H_2$}
\label{fig:h2_label}
\end{figure}

For this subgraph, in all cases (unbroken, broken and doubly broken), there is always an external strand of length zero between the two external points as there are connected to the same vertex. There are at most two external strands of length $2$. Thus $l \geq 10$ and $\mathcal{I} \leq 30$. We then have $F\lfloor \frac{30+4+3}{3}\rfloor \leq 12$. This gives us the right bounds for all three cases. 
\\

\textbf{Graph $H_3$.} Here $V=4$ and $\mathcal{I}=55-l$. We now need to prove the following bound: $d(S_{\partial},B_u)\leq 19-F(S)$. Moreover, we have two tadpoles and four dipoles so $F_1 \leq 2$ and $F_2 \leq 4$.

\begin{figure}[htbp]
\centering
\includegraphics[scale=1]{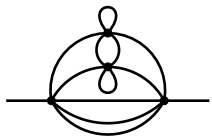}
\caption{The special case $H_3$}
\label{fig:h4_label}
\end{figure}

\begin{itemize}
\item Unbroken case: We have at most three external strands of length one and two of length two. Thus $l \geq 7$ and $\mathcal{I} \leq 48$. Therefore, we have $F \leq \lfloor \frac{48+4+4}{3}\rfloor \leq 18$.
\item Broken case: We have at most three external strands of length one. Then the two remaining external strands must loop back so they have at least length three. In this case $l\geq 9$ and $\mathcal{I}\leq 46$. Therefore, $F \leq \lfloor \frac{46+4+4}{3} \rfloor \leq 18$. 
\item Doubly-broken case: We have at most one external strand of length one as four strands must loop back. We have at most two strands of length two (we only have the two internal corners of the dipole that was not used for the external strand of length one) and two strands of length three.  So, $l \geq 11$ and $\mathcal{I} \leq 44$. Thus we have $F \leq \lfloor \frac{44+4+4}{3} \rfloor \leq 17$.
\end{itemize}

\textbf{Graph $H_4$.} Here, $V=2$ and $\mathcal{I}=25-l$. We need to prove that $d(S_{\partial},B_u)\leq 9-F(S)$. There are one tadpole and six dipoles so $F_1\leq 1$ and $F_2 \leq 6$.

\begin{figure}[htbp]
\centering
\includegraphics[scale=1]{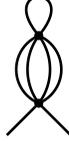}
\caption{The special case $H_4$}
\label{fig:h5_label}
\end{figure}

For all cases (unbroken, broken or doubly-broken), we always have on external strand of length zero as they are connected to the same vertex. The other have at least length two. So $l \geq 8$ and $\mathcal{I}\leq 17$.

\begin{itemize}
\item If $l=8$: we have four external strands of length two. Thus four internal corners of the dipoles are used for the external strands: there are at most two faces of length two. Thus, $F \leq \lfloor \frac{17+2+2}{3} \rfloor \leq 7$.
\item If $l=9$: there is one external strand of length three. The corners of only three dipoles are  now taken by the external strands: $F_2\leq 3$. Thus we have $F \leq \lfloor \frac{16+3+2}{3} \rfloor \leq 7$.
\item If $l\geq 10$, $F_2 \leq l-4$ thus we have $F \leq \lfloor \frac{25-l+2+l-4}{3} =\frac{23}{3}\rfloor \leq 7$.
\end{itemize}

\textbf{Graph $H_5$.} Here $V=3$ and $\mathcal{I}=40-l$. Again, we want to prove the following bound $d(S_{\partial},B_u)\leq 14-F(S)$. There are one tadpole and five dipoles so $F_1 \leq 1$ and $F_2\leq 5$. 

\begin{figure}[htbp]
\centering
\includegraphics[scale=1]{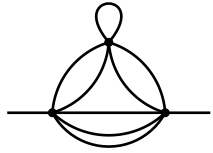}
\caption{The special case $H_5$}
\label{fig:h6_label}
\end{figure}

\begin{itemize}
\item Unbroken case: We can have at most three external strands of length one so $l \geq 7$ and $\mathcal{I}\leq 33$. Thus, $F \leq \lfloor \frac{33+2+5}{3} \rfloor \leq 13$.
\item Broken case: We still have $l\geq 7$ and thus $F \leq 13$. 
\item Doubly-broken case: As only one strand traverses, we can have at most one external strand of length one. Then, $l \geq 9$ and $\mathcal{I}\leq 31$. Thus, $F \leq \lfloor \frac{31+2+5}{3} \rfloor \leq 12$. 
\end{itemize}

\textbf{Graph $H_6$.} Here $V=3$ and $\mathcal{I}=40-l$. We still need to prove that $d(S_{\partial},B_u)\leq 14-F(S)$. There are one tadpole and seven dipoles so $F_1 \leq 1$ and $F_2 \leq 7$. 

\begin{figure}[htbp]
\centering
\includegraphics[scale=1]{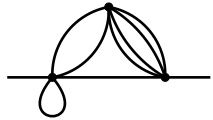}
\caption{The special case $H_6$}
\label{fig:h7_label}
\end{figure}

\begin{itemize}
\item Unbroken case: We have at most one external strand of length one and two of length two (each using one of the edges of the dipole on the left of the graph). Then,  $l\geq 11$ and $\mathcal{I} \leq 29$. Thus, we have $F \leq \lfloor \frac{29+2+7}{3} \rfloor \leq 12$.
\item Broken case: As two strands loop back, we can have one more external strand of length one and three external strand of length two. Then, $l \geq 8$ and $\mathcal{I}\leq 32$. Thus, we have $F \leq \lfloor \frac{32+2+7}{3} \rfloor \leq 13$. 
\item Doubly-broken case: we still have $l\geq 8$. Let us consider the faces of length three. There are at most $3$ faces of length three. However, if $F_2=7$ then $F_3 \leq 2$ and if $F_2\leq 6$, $F_3\leq 3$. Thus $2F_2+F_3\leq 16$ and $F \leq \lfloor \frac{32+3+16}{4} \rfloor \leq 12$.
\end{itemize}

\textbf{Graph $H_7$.} Here $V=3$ and $\mathcal{I}=40-l$. Again, we want to have $d(S_{\partial},B_u)\leq 14-F(S)$. There are no tadpoles and seven dipoles so $F_1=0$ and $F_2 \leq 7$.
For this graph, in all cases (unbroken, broken and doubly-broken), there is always an external strand of length zero and there are at most two external strands of length two. So we then have $l \geq 10$ and $\mathcal{I}\leq 30$. Thus, $F \leq \lfloor \frac{30+0+7}{3} \rfloor \leq 12$ which is okay for unbroken, broken and doubly-broken propagators. 
\\

\begin{figure}[htbp]
\centering
\includegraphics[scale=1]{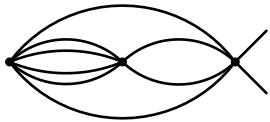}
\caption{The special case $H_7$}
\label{fig:h9_label}
\end{figure}

\textbf{Graph $H_{8}$.} Here $V=3$ and $\mathcal{I}=40-l$. We still need to have $d(S_{\partial},B_u)\leq 14-F(S)$. There are no tadpoles and seven dipoles so $F_1=0$ and $F_2\leq 7$.

\begin{figure}[htbp]
\centering
\includegraphics[scale=1]{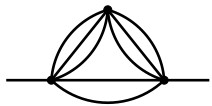}
\caption{The special case $H_{8}$}
\label{fig:h12_label}
\end{figure}

\begin{itemize}
\item Unbroken case: There can be two external strands of length one and three of length two: $l\geq 8$ and $\mathcal{I}\leq 32$. Thus, we have  $F \leq \lfloor \frac{32+0+7}{3} \rfloor \leq 13$.
\item Broken case: Two external strands must loop back but we still have $l\geq 8$ so $F\leq 13$.
\item Doubly broken case: Four strands must loop back: we can have only one external strand of length one and two of length two. Indeed, calling $x$ the external strand on the left and $a,b,c$ the three edges on the left, if we have an external strand $xabx$ we cannot have a second one using $c$ as both corners $xa$ and $xb$ are already used. Thus, $l\geq 11$ and  $F \leq \lfloor \frac{29+0+7}{3} \rfloor \leq 12$.
\end{itemize}

\textbf{Graph $H_{9}$.} Here $V=3$ and $\mathcal{I}=40-l$ and we again have to prove that $d(S_{\partial},B_u)\leq 14-F(S)$. There are no tadpoles and eight dipoles so $F_2 \leq 8$ and $F_1=0$. Here for all cases (unbroken, broken and doubly broken), we always have one external strand of length zero. We can also have at most two external strands of length two and two of length three. Then, $l\geq 10$ and $\mathcal{I}\leq 30$. This gives  $F \leq \lfloor \frac{30+0+8}{3} \rfloor \leq 12$.

\begin{figure}[htbp]
\centering
\includegraphics[scale=1]{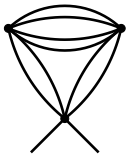}
\caption{The special case $H_{9}$}
\label{fig:h13_label}
\end{figure}

\textbf{Graph $H_{10}$.} Here $V=4$ and $\mathcal{I}=55-l$. We need to prove the following bound $d(S_{\partial},B_u)\leq 19-F(S)$. There are no tadpoles and five dipoles so $F_1=0$ and $F_2\leq 5$. In all cases, there can be at most one external strand of length one and four of length two. Then, $l\geq 9$ and $\mathcal{I}\leq 46$. This gives  $F \leq \lfloor \frac{46+0+5}{3} \rfloor \leq 17$.

\begin{figure}[htbp]
\centering
\includegraphics[scale=1]{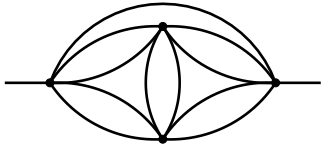}
\caption{The special case $H_{10}$}
\label{fig:h14_label}
\end{figure}

\textbf{Graph $H_{11}.$} Here $V=4$ and $\mathcal{I}=55-l$. In this case we also need to prove that $d(S_{\partial},B_u)\leq 19-F(S)$. There are no tadpoles and nine dipoles so $F_1 =0$ and $F_2 \leq 9$.

\begin{figure}[H]
\centering
\includegraphics[scale=1]{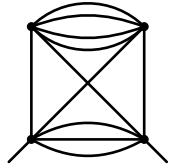}
\caption{The special case $H_{11}$}
\label{fig:hn2_label}
\end{figure}

\begin{itemize}
\item Unbroken case: We can have at most three strands of length one and two of length two. Thus, $l\geq 7$ and $\mathcal{I}\leq 48$. We then have $F \leq \lfloor \frac{48+9}{3} \rfloor \leq 19$.
\item Broken case: We can now have three external strands of length one and two of length three or one of length one and four of length two. Thus we have $l\geq 9$ and $F \leq \lfloor \frac{46+9}{3} \rfloor \leq 18$.
\item Doubly-broken case: We can now have one external strand of length one, two of length two and two of length three. Thus $l\geq 11$ and we have $F \leq \lfloor \frac{44+9}{3} \rfloor \leq 17$.
\end{itemize}

\textbf{Graph $H_{12}$.} Here $V=4$ and $\mathcal{I}=55-l$. In this case we again need to prove that $d(S_{\partial},B_u)\leq 19-F(S)$. There are no tadpoles and twelve dipoles so $F_1 =0$ and $F_2 \leq 12$.

\begin{figure}[htbp]
\centering
\includegraphics[scale=1]{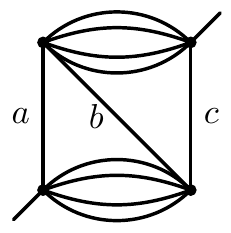}
\caption{The special case $H_{12}$}
\label{fig:hn3_label}
\end{figure}

For all cases (unbroken, broken and doubly-broken), we can have five external strands of length two. Thus, $l\geq 10$. Let us consider faces of length $3$: there can be at most two faces of length three (one using the corner $ab$ and one edge of the bottom quartic rung and one using the corner $bc$ and one edge of the top quartic rung). We thus have: $F \leq \lfloor \frac{45+24+2}{4} \rfloor \leq 17$.

\end{proof}

\bibliographystyle{JHEP}
\bibliography{Refs-TMV} 

\providecommand{\href}[2]{#2}\begingroup\raggedright\begin{thebibliography}{10}

\bibitem{Benedetti:2017qxl}
D.~Benedetti, S.~Carrozza, R.~Gurau and M.~Kolanowski, \emph{{The $1/N$
  expansion of the symmetric traceless and the antisymmetric tensor models in
  rank three}}, \href{https://doi.org/10.1007/s00220-019-03551-z}{\emph{Commun.
  Math. Phys.} {\bfseries 371} (2019) 55}
  [\href{https://arxiv.org/abs/1712.00249}{{\ttfamily 1712.00249}}].

\bibitem{Carrozza:2018ewt}
S.~Carrozza, \emph{{Large $N$ limit of irreducible tensor models: $O(N)$
  rank-$3$ tensors with mixed permutation symmetry}},
  \href{https://doi.org/10.1007/JHEP06(2018)039}{\emph{JHEP} {\bfseries 06}
  (2018) 039} [\href{https://arxiv.org/abs/1803.02496}{{\ttfamily
  1803.02496}}].

\bibitem{Bonzom:2011zz}
V.~Bonzom, R.~Gurau, A.~Riello and V.~Rivasseau, \emph{{Critical behavior of
  colored tensor models in the large N limit}},
  \href{https://doi.org/10.1016/j.nuclphysb.2011.07.022}{\emph{Nucl.\ Phys.\ B}
  {\bfseries 853} (2011) 174}
  [\href{https://arxiv.org/abs/1105.3122}{{\ttfamily 1105.3122}}].

\bibitem{RTM}
R.~Gurau, \emph{{Random Tensors}}. Oxford University Press, Oxford, 2016.

\bibitem{Klebanov:2018fzb}
I.~R. Klebanov, F.~Popov and G.~Tarnopolsky, \emph{{TASI} lectures on large
  {$N$} tensor models}, \href{https://doi.org/10.22323/1.305.0004}{\emph{PoS}
  {\bfseries TASI2017} (2018) 004}
  [\href{https://arxiv.org/abs/1808.09434}{{\ttfamily 1808.09434}}].

\bibitem{Guida:1998bx}
R.~Guida and J.~Zinn-Justin, \emph{{Critical exponents of the N vector model}},
  \href{https://doi.org/10.1088/0305-4470/31/40/006}{\emph{J. Phys. A}
  {\bfseries 31} (1998) 8103}
  [\href{https://arxiv.org/abs/cond-mat/9803240}{{\ttfamily
  cond-mat/9803240}}].

\bibitem{Moshe:2003xn}
M.~Moshe and J.~Zinn-Justin, \emph{Quantum field theory in the large {N} limit:
  A review}, \href{https://doi.org/10.1016/S0370-1573(03)00263-1}{\emph{Phys.
  Rept.} {\bfseries 385} (2003) 69}
  [\href{https://arxiv.org/abs/hep-th/0306133}{{\ttfamily hep-th/0306133}}].

\bibitem{'tHooft:1973jz}
G.~'t~Hooft, \emph{{A planar diagram theory for strong interactions}},
  \href{https://doi.org/10.1016/0550-3213(74)90154-0}{\emph{Nucl. Phys.}
  {\bfseries B72} (1974) 461}.

\bibitem{Brezin:1977sv}
E.~Brezin, C.~Itzykson, G.~Parisi and J.~B. Zuber, \emph{{Planar diagrams}},
  \href{https://doi.org/10.1007/BF01614153}{\emph{Commun. Math. Phys.}
  {\bfseries 59} (1978) 35}.

\bibitem{DiFrancesco:1993nw}
P.~Di~Francesco, P.~H. Ginsparg and J.~Zinn-Justin, \emph{{{$2-D$} {Gravity}
  and random matrices}},
  \href{https://doi.org/10.1016/0370-1573(94)00084-G}{\emph{Phys. Rept.}
  {\bfseries 254} (1995) 1}
  [\href{https://arxiv.org/abs/hep-th/9306153}{{\ttfamily hep-th/9306153}}].

\bibitem{Witten:2016iux}
E.~Witten, \emph{An {SYK}-like model without disorder},
  \href{https://doi.org/10.1088/1751-8121/ab3752}{\emph{J. Phys.} {\bfseries
  A52} (2019) 474002} [\href{https://arxiv.org/abs/1610.09758}{{\ttfamily
  1610.09758}}].

\bibitem{Gurau:2016lzk}
R.~Gurau, \emph{{The complete {$1/N$} expansion of a SYK--like tensor model}},
  \href{https://doi.org/10.1016/j.nuclphysb.2017.01.015}{\emph{Nucl. Phys.}
  {\bfseries B916} (2017) 386}
  [\href{https://arxiv.org/abs/1611.04032}{{\ttfamily 1611.04032}}].

\bibitem{Klebanov:2016xxf}
I.~R. Klebanov and G.~Tarnopolsky, \emph{Uncolored random tensors, melon
  diagrams, and the {SYK} models},
  \href{https://doi.org/10.1103/PhysRevD.95.046004}{\emph{Phys. Rev.}
  {\bfseries D95} (2017) 046004}
  [\href{https://arxiv.org/abs/1611.08915}{{\ttfamily 1611.08915}}].

\bibitem{Peng:2016mxj}
C.~Peng, M.~Spradlin and A.~Volovich, \emph{{A Supersymmetric SYK-like Tensor
  Model}}, \href{https://doi.org/10.1007/JHEP05(2017)062}{\emph{JHEP}
  {\bfseries 05} (2017) 062}
  [\href{https://arxiv.org/abs/1612.03851}{{\ttfamily 1612.03851}}].

\bibitem{Krishnan:2016bvg}
C.~Krishnan, S.~Sanyal and P.~N. Bala~Subramanian, \emph{Quantum chaos and
  holographic tensor models},
  \href{https://doi.org/10.1007/JHEP03(2017)056}{\emph{JHEP} {\bfseries 03}
  (2017) 056} [\href{https://arxiv.org/abs/1612.06330}{{\ttfamily
  1612.06330}}].

\bibitem{Krishnan:2017lra}
C.~Krishnan, K.~V. Pavan~Kumar and D.~Rosa, \emph{{Contrasting SYK-like
  Models}}, \href{https://doi.org/10.1007/JHEP01(2018)064}{\emph{JHEP}
  {\bfseries 01} (2018) 064}
  [\href{https://arxiv.org/abs/1709.06498}{{\ttfamily 1709.06498}}].

\bibitem{Bulycheva:2017ilt}
K.~Bulycheva, I.~R. Klebanov, A.~Milekhin and G.~Tarnopolsky, \emph{Spectra of
  operators in large {$N$} tensor models},
  \href{https://doi.org/10.1103/PhysRevD.97.026016}{\emph{Phys.\ Rev.\ D}
  {\bfseries 97} (2018) 026016}
  [\href{https://arxiv.org/abs/1707.09347}{{\ttfamily 1707.09347}}].

\bibitem{Choudhury:2017tax}
S.~Choudhury, A.~Dey, I.~Halder, L.~Janagal, S.~Minwalla and R.~Poojary,
  \emph{{Notes on melonic $O(N)^{q-1}$ tensor models}},
  \href{https://doi.org/10.1007/JHEP06(2018)094}{\emph{JHEP} {\bfseries 06}
  (2018) 094} [\href{https://arxiv.org/abs/1707.09352}{{\ttfamily
  1707.09352}}].

\bibitem{Halmagyi:2017leq}
N.~Halmagyi and S.~Mondal, \emph{{Tensor Models for Black Hole Probes}},
  \href{https://doi.org/10.1007/JHEP07(2018)095}{\emph{JHEP} {\bfseries 07}
  (2018) 095} [\href{https://arxiv.org/abs/1711.04385}{{\ttfamily
  1711.04385}}].

\bibitem{Klebanov:2018nfp}
I.~R. Klebanov, A.~Milekhin, F.~Popov and G.~Tarnopolsky, \emph{{Spectra of
  eigenstates in fermionic tensor quantum mechanics}},
  \href{https://doi.org/10.1103/PhysRevD.97.106023}{\emph{Phys. Rev.}
  {\bfseries D97} (2018) 106023}
  [\href{https://arxiv.org/abs/1802.10263}{{\ttfamily 1802.10263}}].

\bibitem{Carrozza:2018psc}
S.~Carrozza and V.~Pozsgay, \emph{{SYK-like tensor quantum mechanics with
  $\mathrm{Sp}(N)$ symmetry}},
  \href{https://doi.org/10.1016/j.nuclphysb.2019.02.012}{\emph{Nucl. Phys.}
  {\bfseries B941} (2019) 28}
  [\href{https://arxiv.org/abs/1809.07753}{{\ttfamily 1809.07753}}].

\bibitem{Delporte:2018iyf}
N.~Delporte and V.~Rivasseau, \emph{The tensor track {V}: Holographic tensors},
   in \emph{{Proceedings, 17th Hellenic School and Workshops on Elementary
  Particle Physics and Gravity (CORFU2017)}: {Corfu, Greece, September 2-28,
  2017}}, 4, 2018, \href{https://arxiv.org/abs/1804.11101}{{\ttfamily
  1804.11101}}.

\bibitem{Sachdev:1992fk}
S.~Sachdev and J.~Ye, \emph{{Gapless spin fluid ground state in a random,
  quantum Heisenberg magnet}},
  \href{https://doi.org/10.1103/PhysRevLett.70.3339}{\emph{Phys. Rev. Lett.}
  {\bfseries 70} (1993) 3339}
  [\href{https://arxiv.org/abs/cond-mat/9212030}{{\ttfamily
  cond-mat/9212030}}].

\bibitem{Kitaev2015}
A.~Kitaev, \emph{A simple model of quantum holography},  Talks at KITP, April
  7, 2015 and May 27, 2015.

\bibitem{Maldacena:2016hyu}
J.~Maldacena and D.~Stanford, \emph{{Remarks on the Sachdev-Ye-Kitaev model}},
  \href{https://doi.org/10.1103/PhysRevD.94.106002}{\emph{Phys. Rev.}
  {\bfseries D94} (2016) 106002}
  [\href{https://arxiv.org/abs/1604.07818}{{\ttfamily 1604.07818}}].

\bibitem{Polchinski:2016xgd}
J.~Polchinski and V.~Rosenhaus, \emph{{The Spectrum in the Sachdev-Ye-Kitaev
  Model}}, \href{https://doi.org/10.1007/JHEP04(2016)001}{\emph{JHEP}
  {\bfseries 04} (2016) 001}
  [\href{https://arxiv.org/abs/1601.06768}{{\ttfamily 1601.06768}}].

\bibitem{Gross:2016kjj}
D.~J. Gross and V.~Rosenhaus, \emph{{A Generalization of Sachdev-Ye-Kitaev}},
  \href{https://doi.org/10.1007/JHEP02(2017)093}{\emph{JHEP} {\bfseries 02}
  (2017) 093} [\href{https://arxiv.org/abs/1610.01569}{{\ttfamily
  1610.01569}}].

\bibitem{Giombi:2017dtl}
S.~Giombi, I.~R. Klebanov and G.~Tarnopolsky, \emph{{Bosonic tensor models at
  large {$N$} and small {$\epsilon$}}},
  \href{https://doi.org/10.1103/PhysRevD.96.106014}{\emph{Phys. Rev.}
  {\bfseries D96} (2017) 106014}
  [\href{https://arxiv.org/abs/1707.03866}{{\ttfamily 1707.03866}}].

\bibitem{Prakash:2017hwq}
S.~Prakash and R.~Sinha, \emph{A complex fermionic tensor model in $d$
  dimensions}, \href{https://doi.org/10.1007/JHEP02(2018)086}{\emph{JHEP}
  {\bfseries 02} (2018) 086}
  [\href{https://arxiv.org/abs/1710.09357}{{\ttfamily 1710.09357}}].

\bibitem{Benedetti:2017fmp}
D.~Benedetti, S.~Carrozza, R.~Gurau and A.~Sfondrini, \emph{{Tensorial
  Gross-Neveu models}},
  \href{https://doi.org/10.1007/JHEP01(2018)003}{\emph{JHEP} {\bfseries 01}
  (2018) 003} [\href{https://arxiv.org/abs/1710.10253}{{\ttfamily
  1710.10253}}].

\bibitem{Giombi:2018qgp}
S.~Giombi, I.~R. Klebanov, F.~Popov, S.~Prakash and G.~Tarnopolsky,
  \emph{Prismatic large {$N$} models for bosonic tensors},
  \href{https://doi.org/10.1103/PhysRevD.98.105005}{\emph{Phys. Rev.}
  {\bfseries D98} (2018) 105005}
  [\href{https://arxiv.org/abs/1808.04344}{{\ttfamily 1808.04344}}].

\bibitem{Benedetti:2018ghn}
D.~Benedetti and N.~Delporte, \emph{{Phase diagram and fixed points of
  tensorial Gross-Neveu models in three dimensions}},
  \href{https://doi.org/10.1007/JHEP01(2019)218}{\emph{JHEP} {\bfseries 01}
  (2019) 218} [\href{https://arxiv.org/abs/1810.04583}{{\ttfamily
  1810.04583}}].

\bibitem{Benedetti:2019eyl}
D.~Benedetti, R.~Gurau and S.~Harribey, \emph{{Line of fixed points in a
  bosonic tensor model}},
  \href{https://doi.org/10.1007/JHEP06(2019)053}{\emph{JHEP} {\bfseries 06}
  (2019) 053} [\href{https://arxiv.org/abs/1903.03578}{{\ttfamily
  1903.03578}}].

\bibitem{Benedetti:2019ikb}
D.~Benedetti, R.~Gurau, S.~Harribey and K.~Suzuki, \emph{{Hints of unitarity at
  large $N$ in the $O(N)^3$ tensor field theory}},
  \href{https://doi.org/10.1007/JHEP02(2020)072}{\emph{JHEP} {\bfseries 02}
  (2020) 072} [\href{https://arxiv.org/abs/1909.07767}{{\ttfamily
  1909.07767}}].

\bibitem{Benedetti:2019rja}
D.~Benedetti, N.~Delporte, S.~Harribey and R.~Sinha, \emph{{Sextic tensor field
  theories in rank $3$ and $5$}},
  \href{https://doi.org/10.1007/JHEP06(2020)065}{\emph{JHEP} {\bfseries 06}
  (2020) 065} [\href{https://arxiv.org/abs/1912.06641}{{\ttfamily
  1912.06641}}].

\bibitem{Lettera:2020uay}
D.~Lettera and A.~Vichi, \emph{{A large-$N$ tensor model with four
  supercharges}},  \href{https://arxiv.org/abs/2012.11600}{{\ttfamily
  2012.11600}}.

\bibitem{Benedetti:2020seh}
D.~Benedetti, \emph{{Melonic CFTs}},
  \href{https://doi.org/10.22323/1.376.0168}{\emph{PoS} {\bfseries CORFU2019}
  (2020) 168} [\href{https://arxiv.org/abs/2004.08616}{{\ttfamily
  2004.08616}}].

\bibitem{Gurau:2019qag}
R.~Gurau, \emph{Notes on tensor models and tensor field theories},  7, 2019.

\bibitem{Ambjorn:1990ge}
J.~Ambjorn, B.~Durhuus and T.~Jonsson, \emph{{Three-dimensional simplicial
  quantum gravity and generalized matrix models}},
  \href{https://doi.org/10.1142/S0217732391001184}{\emph{Mod. Phys. Lett.}
  {\bfseries A6} (1991) 1133}.

\bibitem{Sasakura:1990fs}
N.~Sasakura, \emph{{Tensor model for gravity and orientability of manifold}},
  \href{https://doi.org/10.1142/S0217732391003055}{\emph{Mod. Phys. Lett.}
  {\bfseries A6} (1991) 2613}.

\bibitem{Gurau:2009tw}
R.~Gurau, \emph{{Colored Group Field Theory}},
  \href{https://doi.org/10.1007/s00220-011-1226-9}{\emph{Commun.\ Math.\ Phys.}
  {\bfseries 304} (2011) 69} [\href{https://arxiv.org/abs/0907.2582}{{\ttfamily
  0907.2582}}].

\bibitem{Gurau:2010ba}
R.~Gurau, \emph{{The 1/N expansion of colored tensor models}},
  \href{https://doi.org/10.1007/s00023-011-0101-8}{\emph{Annales Henri
  Poincare} {\bfseries 12} (2011) 829}
  [\href{https://arxiv.org/abs/1011.2726}{{\ttfamily 1011.2726}}].

\bibitem{Gurau:2011xq}
R.~Gurau, \emph{{The complete 1/N expansion of colored tensor models in
  arbitrary dimension}},
  \href{https://doi.org/10.1007/s00023-011-0118-z}{\emph{Annales Henri
  Poincare} {\bfseries 13} (2012) 399}
  [\href{https://arxiv.org/abs/1102.5759}{{\ttfamily 1102.5759}}].

\bibitem{Lionni:2017yvi}
L.~Lionni, \emph{{Colored discrete spaces: higher dimensional combinatorial
  maps and quantum gravity}}, Ph.D. thesis, Saclay, 2017.
\newblock \href{https://arxiv.org/abs/1710.03663}{{\ttfamily 1710.03663}}.
\newblock 10.1007/978-3-319-96023-4.

\bibitem{Bonzom:2018btd}
V.~Bonzom, \emph{{Maximizing the number of edges in three-dimensional colored
  triangulations whose building blocks are balls}},
  \href{https://arxiv.org/abs/1802.06419}{{\ttfamily 1802.06419}}.

\bibitem{Bonzom:2012hw}
V.~Bonzom, R.~Gurau and V.~Rivasseau, \emph{{Random tensor models in the large
  N limit: Uncoloring the colored tensor models}},
  \href{https://doi.org/10.1103/PhysRevD.85.084037}{\emph{Phys.\ Rev.\ D}
  {\bfseries 85} (2012) 084037}
  [\href{https://arxiv.org/abs/1202.3637}{{\ttfamily 1202.3637}}].

\bibitem{Rivasseau:2013uca}
V.~Rivasseau, \emph{{The Tensor Track, {III}}},
  \href{https://doi.org/10.1002/prop.201300032}{\emph{Fortsch. Phys.}
  {\bfseries 62} (2014) 81} [\href{https://arxiv.org/abs/1311.1461}{{\ttfamily
  1311.1461}}].

\bibitem{Rivasseau:2014ima}
V.~Rivasseau, \emph{{The Tensor Theory Space}},
  \href{https://doi.org/10.1002/prop.201400057}{\emph{Fortsch. Phys.}
  {\bfseries 62} (2014) 835} [\href{https://arxiv.org/abs/1407.0284}{{\ttfamily
  1407.0284}}].

\bibitem{Eichhorn:2018phj}
A.~Eichhorn, T.~Koslowski and A.~D. Pereira, \emph{{Status of
  background-independent coarse-graining in tensor models for quantum
  gravity}}, \href{https://doi.org/10.3390/universe5020053}{\emph{Universe}
  {\bfseries 5} (2019) 53} [\href{https://arxiv.org/abs/1811.12909}{{\ttfamily
  1811.12909}}].

\bibitem{BenGeloun:2011rc}
J.~Ben~Geloun and V.~Rivasseau, \emph{{A renormalizable 4-Dimensional tensor
  field theory}},
  \href{https://doi.org/10.1007/s00220-012-1549-1}{\emph{Commun. Math. Phys.}
  {\bfseries 318} (2013) 69} [\href{https://arxiv.org/abs/1111.4997}{{\ttfamily
  1111.4997}}].

\bibitem{Samary:2012bw}
D.~O. Samary and F.~Vignes-Tourneret, \emph{{Just Renormalizable {TGFT}'s on
  {$U(1)^{d}$} with Gauge Invariance}},
  \href{https://doi.org/10.1007/s00220-014-1930-3}{\emph{Commun. Math. Phys.}
  {\bfseries 329} (2014) 545}
  [\href{https://arxiv.org/abs/1211.2618}{{\ttfamily 1211.2618}}].

\bibitem{Carrozza:2012uv}
S.~Carrozza, D.~Oriti and V.~Rivasseau, \emph{{Renormalization of Tensorial
  Group Field Theories: {Abelian} {$U(1)$} Models in Four Dimensions}},
  \href{https://doi.org/10.1007/s00220-014-1954-8}{\emph{Commun. Math. Phys.}
  {\bfseries 327} (2014) 603}
  [\href{https://arxiv.org/abs/1207.6734}{{\ttfamily 1207.6734}}].

\bibitem{Carrozza:2013wda}
S.~Carrozza, D.~Oriti and V.~Rivasseau, \emph{{Renormalization of a SU(2)
  Tensorial Group Field Theory in Three Dimensions}},
  \href{https://doi.org/10.1007/s00220-014-1928-x}{\emph{Commun. Math. Phys.}
  {\bfseries 330} (2014) 581}
  [\href{https://arxiv.org/abs/1303.6772}{{\ttfamily 1303.6772}}].

\bibitem{Krajewski:2016svb}
T.~Krajewski and R.~Toriumi, \emph{{Exact Renormalisation Group Equations and
  Loop Equations for Tensor Models}},
  \href{https://doi.org/10.3842/SIGMA.2016.068}{\emph{SIGMA} {\bfseries 12}
  (2016) 068} [\href{https://arxiv.org/abs/1603.00172}{{\ttfamily
  1603.00172}}].

\bibitem{Rivasseau:2017xbk}
V.~Rivasseau and F.~Vignes-Tourneret, \emph{{Constructive Tensor Field Theory:
  The ${T_{4}^{4}}$ Model}},
  \href{https://doi.org/10.1007/s00220-019-03369-9}{\emph{Commun. Math. Phys.}
  {\bfseries 366} (2019) 567}
  [\href{https://arxiv.org/abs/1703.06510}{{\ttfamily 1703.06510}}].

\bibitem{Dartois:2013he}
S.~Dartois, V.~Rivasseau and A.~Tanasa, \emph{{The $1/N$ expansion of
  multi-orientable random tensor models}},
  \href{https://doi.org/10.1007/s00023-013-0262-8}{\emph{Annales Henri
  Poincare} {\bfseries 15} (2014) 965}
  [\href{https://arxiv.org/abs/1301.1535}{{\ttfamily 1301.1535}}].

\bibitem{Carrozza:2015adg}
S.~Carrozza and A.~Tanasa, \emph{{$O(N)$} random tensor models},
  \href{https://doi.org/10.1007/s11005-016-0879-x}{\emph{Lett. Math. Phys.}
  {\bfseries 106} (2016) 1531}
  [\href{https://arxiv.org/abs/1512.06718}{{\ttfamily 1512.06718}}].

\bibitem{Ferrari:2017ryl}
F.~Ferrari, \emph{The large {D} limit of planar diagrams},
  \href{https://doi.org/10.4171/AIHPD/76}{\emph{Ann. Inst. Henri Poincar\'e
  Comb. Phys. Interact.} {\bfseries 6} (2019) 427}
  [\href{https://arxiv.org/abs/1701.01171}{{\ttfamily 1701.01171}}].

\bibitem{Ferrari:2017jgw}
F.~Ferrari, V.~Rivasseau and G.~Valette, \emph{{A New Large $N$ Expansion for
  General Matrix\textendash{}Tensor Models}},
  \href{https://doi.org/10.1007/s00220-019-03511-7}{\emph{Commun. Math. Phys.}
  {\bfseries 370} (2019) 403}
  [\href{https://arxiv.org/abs/1709.07366}{{\ttfamily 1709.07366}}].

\bibitem{Benedetti:2020iyz}
D.~Benedetti, S.~Carrozza, R.~Toriumi and G.~Valette, \emph{{Multiple scaling
  limits of $\mathrm{U}(N)^2 \times \mathrm{O}(D)$ multi-matrix models}},
  \href{https://arxiv.org/abs/2003.02100}{{\ttfamily 2003.02100}}.

\bibitem{Prakash:2019zia}
S.~Prakash and R.~Sinha, \emph{{Melonic Dominance in Subchromatic Sextic Tensor
  Models}}, \href{https://doi.org/10.1103/PhysRevD.101.126001}{\emph{Phys. Rev.
  D} {\bfseries 101} (2020) 126001}
  [\href{https://arxiv.org/abs/1908.07178}{{\ttfamily 1908.07178}}].

\bibitem{Klebanov:2017nlk}
I.~R. Klebanov and G.~Tarnopolsky, \emph{On large {$N$} limit of symmetric
  traceless tensor models},
  \href{https://doi.org/10.1007/JHEP10(2017)037}{\emph{JHEP} {\bfseries 10}
  (2017) 037} [\href{https://arxiv.org/abs/1706.00839}{{\ttfamily
  1706.00839}}].

\bibitem{Gurau:2017qya}
R.~Gurau, \emph{{The $1/N$ expansion of tensor models with two symmetric
  tensors}}, \href{https://doi.org/10.1007/s00220-017-3055-y}{\emph{Commun.
  Math. Phys.} {\bfseries 360} (2018) 985}
  [\href{https://arxiv.org/abs/1706.05328}{{\ttfamily 1706.05328}}].

\bibitem{Carrozza:2020eaz}
S.~Carrozza, F.~Ferrari, A.~Tanasa and G.~Valette, \emph{{On the large $D$
  expansion of Hermitian multi-matrix models}},
  \href{https://doi.org/10.1063/5.0008349}{\emph{J. Math. Phys.} {\bfseries 61}
  (2020) 073501} [\href{https://arxiv.org/abs/2003.04152}{{\ttfamily
  2003.04152}}].

\bibitem{Gurau:2009tz}
R.~Gurau, \emph{{Topological Graph Polynomials in Colored Group Field Theory}},
  \href{https://doi.org/10.1007/s00023-010-0035-6}{\emph{Annales Henri
  Poincare} {\bfseries 11} (2010) 565}
  [\href{https://arxiv.org/abs/0911.1945}{{\ttfamily 0911.1945}}].

\bibitem{Evnin:2020ddw}
O.~Evnin, \emph{{Melonic dominance and the largest eigenvalue of a large random
  tensor}},  \href{https://arxiv.org/abs/2003.11220}{{\ttfamily 2003.11220}}.

\bibitem{Gurau:2020ehg}
R.~Gurau, \emph{{On the generalization of the Wigner semicircle law to real
  symmetric tensors}},  \href{https://arxiv.org/abs/2004.02660}{{\ttfamily
  2004.02660}}.

\end{thebibliography}\endgroup

\addcontentsline{toc}{section}{References}


\end{document}